\algrenewcommand\algorithmicrequire{\textbf{Input:}}
\algrenewcommand\algorithmicensure{\textbf{Output:}}
\newcommand{\crefpart}[2]{%
 \hyperref[#2]{\namecref{#1}~\labelcref*{#1}~\ref*{#2}}%
}
\theoremstyle{definition}
\definecolor{kwd}{HTML}{3F6FA8}
\definecolor{ndkwd}{HTML}{E69F00}
\colorlet{err}{Lavender}
\colorlet{warn}{Dandelion}
\newcommand{\ocamllang}{\textsc{OCaml}\xspace}
\newcommand{\HATch}{\textrm{HATch}\xspace}
\newcommand{\LTLf}{$\textrm{LTL}_f$\xspace}
\newcommand{\kwd}[1]{{\color{kwd}\texttt{\bfseries #1}}}
\newcommand{\err}[1]{{\color{err}\texttt{\bfseries #1}}}
\newcommand{\ltl}[1]{{\color{ndkwd}\textsf{\bfseries #1}}\xspace}
\newcommand{\EM}[1]{\ensuremath{#1}\xspace}
\newcommand{\figref}[1]{Fig. \ref{#1}\xspace}
\newcommand{\myidx}[1]{%
  \IfInteger{#1}{#1}{\mathsf{#1}}
}
\newcommand{\myidxwrapper}[2]{%
  \ifthenelse{\isempty{#2}}
  {\EM{#1}}
  {\EM{#1_{\myidx{#2}}}}%
}
\newcommand{\Global}{\ltl{G}}
\newcommand{\Final}{\ltl{F}}
\newcommand{\Next}{\ltl{X}}
\renewcommand{\Until}{\ltl{U}}
\newcommand{\WeakUntil}{\ltl{W}}
\newcommand{\defeq}{\doteq}%
\newcommand{\denot}[2][]{\EM{\llbracket #2 \rrbracket_{#1}}}
\newcommand{\eff}[1]{{\textsf{#1}}\xspace}
\newcommand{\var}[1]{\EM{#1}}
\newcommand{\op}{\EM{\mathit{op}}\xspace}
\newcommand{\letin}[2]{\EM{\kwd{let}\ #1 = #2\ \kwd{in}}}
\newcommand{\?}[1][]{\EM{{?}_\mathtt{#1}}}
\newcommand{\abort}{\err{abort}\xspace}
\newcommand{\assume}[1]{\kwd{assume}~ #1}
\newcommand{\assert}[1]{\kwd{assert}~ #1}
\newcommand{\affirm}[2][]{\kwd{affirm}~ #2}
\newcommand{\admit}[2][]{\kwd{admit}~ #2}
\newcommand{\append}[1]{\kwd{append}~ #1}
\newcommand{\funbind}[1]{\kwd{fun}~#1.\,}
\newcommand{\fixbind}[1]{\kwd{fix}~#1.\,}
\newcommand{\inangle}[1]{\EM{\langle #1 \rangle}}
\newcommand{\inparen}[1]{\left(#1\right)}
\newcommand{\lneg}{{\EM{\backsim}}}
\newcommand{\lconj}{\EM{\sqcap}}
\newcommand{\ldisj}{\EM{\sqcup}}
\newcommand{\Lits}{\mathfrak{L}}
\newcommand{\lit}[1][]{%
  \ifthenelse{\isempty{#1}}
  {\EM{\ell}}
  {\EM{\ell_{\textsf{#1}}}}
}
\newcommand{\deriv}[3][]{%
  \EM{\mathsf{d}_{#2}\ifx r#1\inparen{#3}\else #3\fi}%
}
\newcommand{\Deriv}[3][]{%
  \EM{\mathsf{D}_{#2}\ifx r#1\inparen{#3}\else#3\fi}%
}
\newcommand{\hookrightarrowd}{\hookrightarrow_{\mathcal{D}}}
\newcommand{\trace}[1][]{\myidxwrapper{\pi}{#1}}
\newcommand{\Trace}[1][]{\myidxwrapper{\Pi}{#1}}
\let\oldPhi\Phi
\renewcommand{\Phi}[1][]{\myidxwrapper{\oldPhi}{#1}}
\newcommand{\true}{\EM{\top}}
\newcommand{\FA}[1][]{\myidxwrapper{\mathcal{A}}{#1}}
\newcommand{\TrConstr}{\textsf{constr}}
\newcommand{\FAnext}{\textsf{next}\xspace}
\newcommand{\sat}[1]{\textsf{isSat}\left(#1\right)}
\newcommand{\symb}[2][]{\EM{\hat{#2}_\mathtt{#1}}}
\newcommand{\domain}{\EM{\Sigma}}
\newcommand{\preds}{\EM{\Psi}}
\newcommand{\R}[1][]{\myidxwrapper{\mathcal{R}}{#1}}
\begin{document}

\title{Derivative-Guided Symbolic Execution}

\author{Yongwei Yuan}
\orcid{0000-0002-2619-2288}
\affiliation{%
  \institution{Purdue University}
  \city{West Lafayette}
  \country{USA}
}
\email{yuan311@purdue.edu}

\author{Zhe Zhou}
\orcid{0000-0003-3900-7501}
\affiliation{%
  \institution{Purdue University}
  \city{West Lafayette}
  \country{USA}
}
\email{zhou956@purdue.edu}

\author{Julia Belyakova}
\orcid{0000-0002-7490-8500}
\affiliation{%
  \institution{Purdue University}
  \city{West Lafayette}
  \country{USA}
}
\email{julbinb@gmail.com}

\author{Suresh Jagannathan}
\orcid{0000-0001-6871-2424}
\affiliation{%
  \institution{Purdue University}
  \city{West Lafayette}
  \country{USA}
}
\email{suresh@cs.purdue.edu}

\begin{abstract}
We consider the formulation of a symbolic execution (SE) procedure for
functional programs that interact with effectful, opaque libraries.
Our procedure allows specifications of libraries and abstract data
type (ADT) methods that are expressed in \emph{Linear Temporal Logic
  over Finite Traces} (\LTLf), interpreting them as \emph{symbolic
  finite automata} (SFAs) to enable intelligent specification-guided
path exploration in this setting.  We apply our technique to
facilitate the falsification of complex data structure safety
properties in terms of effectful operations made by ADT methods on
underlying opaque representation type(s).  Specifications naturally
characterize admissible traces of temporally-ordered events that ADT
methods (and the library methods they depend upon) are allowed to
perform.  We show how to use these specifications to construct
feasible symbolic input states for the corresponding methods, as well
as how to encode safety properties in terms of this formalism.  More
importantly, we incorporate the notion of \emph{symbolic derivatives},
a mechanism that allows the SE procedure to intelligently
underapproximate the set of precondition states it needs to explore,
based on the automata structures latent in the provided
specifications and the safety property that is to be falsified.
Intuitively, derivatives enable symbolic execution to exploit temporal
constraints defined by trace-based specifications to quickly prune
unproductive paths and discover feasible error states.  Experimental
results on a wide-range of challenging ADT implementations demonstrate
the effectiveness of our approach.

\end{abstract}

\begin{CCSXML}
  <ccs2012>
   <concept>
       <concept_id>10003752.10003766.10003776</concept_id>
       <concept_desc>Theory of computation~Regular languages</concept_desc>
       <concept_significance>500</concept_significance>
       </concept>
   <concept>
       <concept_id>10003752.10003790.10003794</concept_id>
       <concept_desc>Theory of computation~Automated reasoning</concept_desc>
       <concept_significance>500</concept_significance>
       </concept>
   <concept>
       <concept_id>10003752.10003790.10003793</concept_id>
       <concept_desc>Theory of computation~Modal and temporal logics</concept_desc>
       <concept_significance>300</concept_significance>
       </concept>
   <concept>
       <concept_id>10011007.10010940.10010992.10010998.10011000</concept_id>
       <concept_desc>Software and its engineering~Automated static analysis</concept_desc>
       <concept_significance>500</concept_significance>
       </concept>
 </ccs2012>
\end{CCSXML}

\ccsdesc[500]{Theory of computation~Regular languages}
\ccsdesc[500]{Theory of computation~Automated reasoning}
\ccsdesc[300]{Theory of computation~Modal and temporal logics}
\ccsdesc[500]{Software and its engineering~Automated static analysis}

\keywords{symbolic execution, regular expression derivatives}  %

\maketitle
\section{Introduction}\label{sec:intro}

Symbolic
execution~\cite{baldoniSurveySymbolicExecution2018,cadarSymbolicExecutionSoftware2013}
(SE) is a well-studied program analysis technique whose goal is to
statically explore a bounded set of (symbolic) program executions in
search of one that yields a symbolic state inconsistent with a given
safety property.  The states generated during the course of these
executions consist of a set of path constraints; a violation is
identified if the conjunction of these constraints with the negation
of the safety property is logically satisfiable.  By knowing the
prestate under which a method may be invoked, SE can be performed on
individual methods in a \emph{compositional} fashion.  Oftentimes,
however, the program being analyzed interacts with libraries whose
implementations are unavailable for analysis.  In this case, we can
augment the SE procedure to interpret
models~\cite{chipounovS2EPlatformVivo2011} or
specifications~\cite{tobin-hochstadtHigherorderSymbolicExecution2012,xuStaticContractChecking2009a}
attached to library methods that describe the intended behavior of
their implementation in a form suitable for symbolic reasoning.

In this paper, we consider the design of an SE procedure for
functional programs that interface with effectful, opaque libraries.
Since we cannot express the behavior of library methods directly in
terms of how they manipulate their hidden state (since their
implementations are opaque), we instead reason about the interaction
of clients with these methods in terms of \emph{traces}, sequences of
method invocations and return values that constrain the shape of
allowed symbolic states that the symbolic interpreter needs to
consider.  Our primary contribution is a formalization of symbolic
execution in this setting that directly leverages the temporal
ordering constraints latent in these traces to intelligently guide
path exploration.

A particular useful setting in which this style of symbolic reasoning
is likely to be effective are abstract data type (ADT) implementations
whose specifications and safety properties are often couched in terms
of temporal modalities that constrain how datatype instances can be
constructed and used.  For example, to establish that an
implementation of a functional \texttt{Set} datatype, implemented
using an effectful list representation, correctly respects the
semantics of a mathematical set (e.g., $|S\ \cup \{x\}| = |S|$ if $x
\in S$) necessitates showing that any element added to its list
representation is different from any \emph{previously} added element.
Because the list implementation is potentially effectful, but does not
expose the state it manages to its clients, we can only reason about
the \texttt{Set} ADT methods that use it behaviorally, in terms of how
inputs to the list type's setters affect the values returned by its
getters that are subsequently consumed.

In our running example, the representation type \texttt{List}, defined
as a library, may provide a number of operations on a list instance,
some of which are pure like \texttt{mem} that checks for list
membership, and others of which are effectful, such as
\texttt{append!} that destructively appends its argument to its
instance.  The \texttt{Set} ADT might provide methods like \texttt{in}
that simply uses the \texttt{mem} method from \texttt{List} to check
if an element is included in a set instance, or \texttt{insert} that
adds a new element to the set using \texttt{append!}.  Suppose
\texttt{insert}'s implementation incorrectly adds a new element by
simply invoking \texttt{append!}, without first checking if the
element is already present.  Constructing a set using this
implementation would violate our desired safety property, namely that
every element in the set is unique.  Our goal is to use symbolic
execution to identify such errors.

Given the availability of specifications on ADT and representation-type
methods, symbolic execution of an ADT then involves:
\begin{enumerate*}
\item the generation of feasible (aka constructible)
    precondition states for an ADT method being analyzed in the form
    of \emph{symbolic traces} of method calls (and return values) on
  the representation type that is nonetheless consistent with the
  ADT's specification, and
\item devising an effective search procedure from this precondition
  state that identifies a feasible execution path, again expressed as
  a symbolic trace over symbolic invocations of methods on the
  representation type, whose final state violates the safety property.
\end{enumerate*}

In this work, we develop an SE procedure for a class of behavioral
specifications that can be concisely expressed in linear temporal
logic
(\LTLf~\cite{degiacomoLinearTemporalLogic2013,degiacomoSynthesisLTLLDL2015a,bansalModelCheckingStrategies2023});
these specifications correspond to symbolic finite automata
(SFA~\cite{veanesApplicationsSymbolicFinite2013b,dantoniMinimizationSymbolicAutomata2014,dantoniPowerSymbolicAutomata2017}),
in which automata transitions represent effectful and opaque
operations made by the ADT on its representation type(s).  Our SE
procedure exploits the latent SFA structure through \emph{symbolic
  derivatives}.  By computing the residual language after consuming a
prefix, Brzozowski derivatives simplify membership checking of regular
and context-free
languages~\cite{mightParsingDerivativesFunctional2011}.  In our
setting, symbolic derivatives compute the residual specification after
observing a sequence of ADT operations, enabling both the extraction
of admissible temporally ordered symbolic events (\ie, method
invocations and returns expressed in terms of symbolic variants of
program variables) of the ADT's representation type and the prediction
of future admissible events by progressively refining the space of
safe behaviors.

When equipped with such derivatives, our SE procedure is capable of
\begin{enumerate*}
\item generating precondition traces whose interpretation yields a
  prestate consistent with method specifications,
\item correlating pre- and post-invocation events with the safety
  property, and
\item guiding exploration along paths likely to lead to a
  falsification of the safety property.
\end{enumerate*}
By viewing the set of traces prior to and after method invocations
from the lens of the safety property we wish to falsify, our SE
procedure intelligently performs path exploration.  In the case of
\eff{Set} ADT, our SE procedure may, in the presence of a past
\eff{append!} event, actively look for another \eff{append!} of the
identical element, in an attempt to accelerate the falsification of
the unique-element property.
As a result, we oftentimes observe many orders-of-magnitude
improvement in path enumeration times, enabling it to scale favorably
with specification complexity.

In summary, this paper makes the following contributions:

\begin{enumerate}

\item We formalize an SE framework suitable for falsifying safety
  properties of effectful ADT implementations that manage hidden
  states.  Specifications are expressed as \LTLf formulae and capture
  temporal dependencies over a history of interactions between an ADT
  implementation and its underlying representation type(s).

\item We identify the latent SFA structures within these
  specifications and treat them as executable representations that
  enable the formalization of an SE procedure in terms of the traces
  characterized by these automata.
  
\item We propose to integrate a notion of symbolic derivatives as part
  of our SE procedure that intelligently underapproximates trace-based
  symbolic states and accelerates the search for a falsification
  witness.
  
\item We describe an implementation of these ideas in OCaml and
  show its effectiveness on a challenging set of data structure
  programs.
\end{enumerate}  

\noindent The remainder of the paper is organized as follows.
Motivation and informal explanation of our ideas is provided in the
next section. Section~\ref{sec:prelim} provides preliminaries and
details about \LTLf, SFAs, and derivatives.  The syntax of a core
language and a na\"{i}ve (derivative-free) semantics is given in
Section~\ref{sec:naive}.  The semantics of deriviative-based execution
is provided in Section~\ref{sec:derivative}.  We show how to translate
the declarative semantics of derivatives into an efficient algorithm
in Section~\ref{sec:algo}.  Implementation details and evaluation
results are provided in Section~\ref{sec:eval}.  Related work and
conclusions are given in Section~\ref{sec:related} and~\ref{sec:conc},
resp.

\section{Motivation}\label{sec:motiv}

\begin{figure}[t]
\begin{subfigure}{0.45\textwidth}
\begin{camlNoLines}[basicstyle=\fontsize{8}{9}\ttfamily]
module type KVStore
  (K: Key) (V: Value) : T = 
sig
  (** (k: K.t) -> (v: V.t)
      ghost (v': V.t)
      context $\textsf{stored}(\eff{T},k,v')$
      effect $\inangle{\eff{get}\ \var{k}\ \var{v}}$
      ensure $v=v'$ *)
  val get : K.t -> V.t

  (** (k:K.t) -> (v:V.t) -> unit
      effect $\inangle{\eff{put}\ \var{k}\ \var{v}}$ *)
  val put : K.t -> V.t -> unit
end
module type Node = sig
  type t
  val null : t  ... 
end
module type Elem = sig
  type t ...
end
\end{camlNoLines}
\end{subfigure}
\begin{subfigure}{0.57\textwidth}
\begin{camlNoLines}[numbers=left,numbersep=1pt,basicstyle=\fontsize{8}{9}\ttfamily]
module Nxt = KVStore (Node) (Node)
module Val = KVStore (Node) (Elem)

(** (hd:Node.t) -> (v:Elem.t) -> Node.t
    ghost (a: Node.t), (b: Node.t)
    context $\textsf{stored}(\eff{Nxt}, a, b)$
    effect $\lneg\inangle{\eff{Nxt.put}\ \cancel{\var{a}}\ \var{b}}\WeakUntil\inangle{\eff{Nxt.put}\ \var{a}\ \cancel{\var{b}}}$ *)
let remove (hd: Node.t) (elem: Elem.t) =
  if hd = null then hd(@\label{line:return-hd}@)
  else if Val.get hd = elem then(@\label{line:get-value-of-hd}@)
    Nxt.get hd(@\label{line:ret-next-of-hd}@)
  else
    let rec loop prev =(@\label{line:loop}@)
       let curr = Nxt.get prev in(@\label{line:prevget}@)
      if curr = null then ()(@\label{line:curr-null}@)
      else if Val.get curr = elem then(@\label{line:get-value-of-curr}@)
        let next = Nxt.get curr in(@\label{line:before}@)
        (@@)`Nxt.put curr null;`(@\label{line:missing}@)
        Nxt.put prev next(@\label{line:falsify}@)
      else loop curr
    in loop hd; hd(@\label{line:exit}@)
\end{camlNoLines}
\end{subfigure}
\caption{An implementation of a node \texttt{remove} operation in a linked-list ADT using two key-value stores.}\label{fig:motive-ex}
\end{figure}

To motivate our ideas, consider the program shown in
\cref{fig:motive-ex}.  The function \inlinecaml{remove} is a method in
a linked-list ADT that uses two effectful key-value stores as its
representation type, one to maintain an ordering relation among nodes
in the list (named \eff{Nxt}), and the other to record the elements associated with these
nodes (named \eff{Val}).  The implementation of the store is opaque to the ADT.  Given a
node \inlinecaml{curr} in a linked-list instance containing argument
value \inlinecaml{v}, \inlinecaml{remove} removes \inlinecaml{curr}
from the list by first initializing its successor field to
\inlinecaml{null} (given as the shaded statement at Line \ref{line:missing}), and then
adjusting the link from its predecessor \inlinecaml{prev} to point to
its successor \inlinecaml{next}.  

\subsection{Specifications}

The specification associated with \inlinecaml{remove} is expressed as
\LTLf~\cite{degiacomoLinearTemporalLogic2013}\footnote{Linear temporal
  logic over finite sequences.} formulae given in the comment above
its definition.  Informally, we can think of such formulae as
characterizing a set of admissible \emph{traces}, event sequences
defined in terms of method invocations and results.  The specification
has several elements.  In the case of \inlinecaml{remove}, it
introduces ghost variables \var{a} and \var{b}; these variables
represent an arbitrary pair of nodes in the list, constrained by the
method's precondition (identified by the keyword \inlinecaml{context})
and postcondition (identified by the keyword \inlinecaml{effect}).
The precondition characterizes all traces that construct a linked-list
in terms of the underlying key-value store representation type,
identifying an arbitrary consecutive pair of nodes using the
introduced ghost variables $a$ and $b$; it uses the following
definition:
\[ \textsf{stored}(\eff{Store}, \var{k}, \var{v}) \defeq 
\Final(\inangle{\eff{Store.put}\ k\ v}\wedge
\Next\Global\lneg\inangle{\eff{Store.put}\ k\ \_}) \]
The
postcondition reflects the actions performed by the method: it
specifies that node \var{b} can be linked to a predecessor other than \var{a} 
(as denoted by $\cancel{a}$)
only after \var{a} is linked to a successor other than \var{b}
(as denoted by $\cancel{b}$).  Both the pre- and
post-condition use LTL modalities.  The precondition uses the
\emph{finally} modality ($\Final$) to represent the eventual
establishment of a link between \var{a} and \var{b} in a trace, and
next ($\Next$) and global ($\Global$) modalities to prevent subsequent
actions in the trace from modifying that link; similarly, the
postcondition uses the weak-until ($\WeakUntil$) modality to specify a
conditional action, namely that \var{b} can be linked to a predecessor
other than \var{a} \emph{only after} \var{a} is no longer \var{b}'s predecessor.

The specification for the key-value stores used by the linked-list ADT
is given in the left of \cref{fig:motive-ex}.  As before, we capture
the effectful behavior of these methods using \LTLf specifications.
The precondition for \inlinecaml{get} requires that it be invoked in a
state constructed from a sequence of actions that include a
\inlinecaml{put} operation which associates key \var{k} to value
\var{v'}; it leverages the definition of \textsf{stored} defined
above, except using the key-value store instance in which the
\inlinecaml{get} is performed.  The method's postcondition ensures
that this property holds upon return.  Additionally, the specification
establishes an equality constraint, using the \inlinecaml{ensure}
annotation, between the value returned (\var{v}) and the value
previously \inlinecaml{put} on key \var{k} (\var{v'}). Note that
specifications used in this way constrain the set of precondition
states that a symbolic execution engine should consider; in
particular, the specification ignores any state that does not contain
a binding for \var{k}.  The specification for \inlinecaml{put} imposes
no structure on the store that must hold before it can execute, and
only guarantees that the \inlinecaml{put} action is performed upon
return.

\paragraph{Trace Specification as Safety Property}
To reiterate, specifications written in this way characterize
admissible execution \emph{traces} whose effects determine the context
in which a function can execute as well as the behavior manifested by the
function upon return from a call, allowing us to reason about the
behavior of the ADT without having to expose implementation details
about its underlying representation type.  Together, a pair of such
pre- and post-condition traces captures a safety property against
which the function must be checked.  In the case of \texttt{remove},
an execution under the specified context (precondition trace) that
does not satisfy the post-condition trace serves as a witness of a
violation of the \emph{predecessor uniqueness} safety property.

\begin{figure}[t]
  \begin{subfigure}[b]{0.38\textwidth}
    \centering
    \footnotesize
    \begin{tikzpicture}[shorten >=1pt,node distance=3mm,auto]
      \node[state,initial] (q0) {$q_0$};
      \node[state,accepting] (q1) [below=of q0] {$q_1$};
      \path[->]
      (q0) 
      edge [out=330,in=30] node {$\inangle{\eff{Nxt.put}\ \var{a}\ \var{b}}$} (q1)
      edge [loop above] node {$\lneg\inangle{\eff{Nxt.put}\ \var{a}\ \var{b}}$} ()
      (q1)
      edge [out=150,in=210] node {$\inangle{\eff{Nxt.put}\ \var{a}\ \cancel{\var{b}}}$} (q0)
      edge [loop below] node {$\lneg\inangle{\eff{Nxt.put}\ \var{a}\ \cancel{\var{b}}}$} ()
      ;
    \end{tikzpicture}
    \caption{Admissible traces prior to \texttt{remove}.}\label{fig:remove-context}
  \end{subfigure}
  \begin{subfigure}[b]{.45\textwidth}
    \centering
    \footnotesize
    \begin{tikzpicture}[shorten >=1pt,node distance=2cm,auto]
      \node[state,initial,accepting] (q0) {$q_2$};
      \node[state,accepting] (q1) [right=of q0] {$q_3$};
      \node[state,draw=red] (q2) [below=1.5cm of q0] {$\emptyset$};
      \path[->]
      (q0) edge node {$\inangle{\eff{Nxt.put}\ \var{a}\ \cancel{\var{b}}}$} (q1)
      edge node [swap] {$\inangle{\eff{Nxt.put}\ \cancel{\var{a}}\ \var{b}}$} (q2)
      edge [in=300,out=330,loop,align=left] node {
        $\phantom{\ldisj} \inangle{\eff{Nxt.put}\ \var{a}\ \var{b}}$\\
        $\vee \inangle{\eff{Nxt.put}\ \cancel{\var{a}}\ \cancel{\var{b}}}$\\
        $\vee \lneg\inangle{\eff{Nxt.put}}$} ()
      (q1) edge [loop right] node {$\bullet$} ()
      (q2) edge [loop right] node {$\bullet$} ()
      ;
    \end{tikzpicture}
    \caption{New actions allowed from \texttt{remove}.}\label{fig:remove-effect}
  \end{subfigure}
  \caption{The SFA representation of \texttt{remove}'s trace-based specification.}
  \label{fig:remove-automata}
\end{figure}

\paragraph{SFA Representation of Trace Specifications}
The set of traces characterized by \LTLf specifications can be
naturally represented by (symbolic) finite
automata~\cite{veanesApplicationsSymbolicFinite2013b} (SFA) structures
whose labels are events representing ADT method invocations and their
return values, and whose transitions reflect control dependencies over
these actions, defined by modalities used in the specification.
\cref{fig:remove-automata} shows how the \LTLf specifications given in
\cref{fig:motive-ex} can be represented as SFAs.  The automaton in
\cref{fig:remove-context} captures the precondition for
\inlinecaml{remove}.  The start state $q_0$ admits traces which
contain an arbitrary number of \inlinecaml{get} or \inlinecaml{put}
operations, not involving \inlinecaml{put} operations with key \var{a}
or value \var{b}; it allows such traces to be augmented with
\inlinecaml{put} operations that store a binding of \var{a} to
\var{b}, thus establishing the required shape of lists to which
\inlinecaml{remove} can be applied.  The store can be subsequently
updated with the effects of other \inlinecaml{put} operations on key
\var{a} that bind the key to nodes other than \var{b}, leading to a
transition that exits the accepting state $q_1$.  Traces accepted by
the precondition automaton encapsulate program states that can be used
as the basis for a successful symbolic execution run of
\texttt{remove}.  The postcondition for \texttt{remove} can be
represented as the automaton shown in \cref{fig:remove-effect}.  Here,
the initial state of the postcondition $q_2$ presumes the
precondition, namely ghost nodes \var{a} and \var{b} such that \var{a}
is the predecessor of \var{b} in the list.  A safe implementation of
\eff{remove} is allowed to repeatedly (re)link \var{a} to \var{b}
($\inangle{\eff{Nxt.put}\ \var{a}\ \var{b}}$), link other nodes
besides \var{a} to \var{b}
({$\inangle{\eff{Nxt.put}\ \cancel{\var{a}}\ \var{b}}$}), or perform
\inlinecaml{get} operations ($\lneg\inangle{\eff{Nxt.put}}$).  An
event that links another node to \var{b} \emph{without} first removing
the link from \var{a} results in a violation of the safety property,
however, depicted by the error state with a red circle containing
$\emptyset$.  State $q_3$ represents another accepting state
corresponding to a linked-list in which node \var{a} no longer points
to \var{b}.  The traces admitted by these automata correspond to the
\emph{hidden states} constructible by method invocations to the
underlying \inlinecaml{Nxt} and \inlinecaml{Val} store instances.

\paragraph{Symbolic Derivatives and SFAs}
To reveal the latent SFA representations of trace specifications that
qualify over symbolic variables, we propose to compute a variant of
Brzozowski derivatives
\cite{brzozowskiDerivativesRegularExpressions1964}, dubbed a
\emph{symbolic derivative}.  Let's revisit the postcondition for
\texttt{remove} in connection with its SFA representations in
\cref{fig:remove-effect}: its \LTLf formula
$\lneg\inangle{\eff{Nxt.put}\ \cancel{\var{a}}\
  \var{b}}\WeakUntil\inangle{\eff{Nxt.put}\ \var{a}\
  \cancel{\var{b}}}$ admits
\begin{enumerate*}
\item action $\inangle{\eff{Nxt.put}\ \var{a}\ \var{b}}$,
  $\inangle{\eff{Nxt.put}\ \cancel{\var{a}}\ \cancel{\var{b}}}$, or
  $\lneg\inangle{\eff{Nxt.put}}$, followed by traces admissible by the
  formula itself ($q_2\to q_2$), or
\item action $\inangle{\eff{Nxt.put}\ a\ \cancel{b}}$ followed by any
  trace of actions ($q_2\to q_3$).
\end{enumerate*}
Additionally, it does not admit
$\inangle{\eff{Nxt.put}\ \cancel{a}\ b}$ regardless of the following
actions ($q_2\to\emptyset$).  The symbolic derivatives of the
postcondition over
$\inangle{\eff{Nxt.put}\ \var{a}\ \var{b}} \vee
\inangle{\eff{Nxt.put}\ \cancel{\var{a}}\ \cancel{\var{b}}} \vee
\lneg\inangle{\eff{Nxt.put}}$,
$\inangle{\eff{Nxt.put}\ \var{a}\ \cancel{\var{b}}}$, and
$\inangle{\eff{Nxt.put}\ \cancel{\var{a}}\ \var{b}}$ corresponds to
states $q_2$, $q_3$, and $\emptyset$ respectively.  Such derivatives
allow symbolic execution to, as we will discuss in
\cref{sec:motiv-deriv-exec}, ``execute'' trace specifications
following their latent SFA representations and make the \eff{put}
operation at \cref{line:falsify} a witness of the action
$\inangle{\eff{Nxt.put}\ \cancel{\var{a}}\ \var{b}}$ that leads to the
dead state $\emptyset$.

\subsection{Trace-Based Symbolic Execution}\label{sec:motiv-symb-exec}

\paragraph{Expressing Hidden States as Traces.}
While our specification language can express a rich set of behaviors
that can be exhibited by the ADT, it is not immediately obvious how to
incorporate such specifications as part of an efficient symbolic
execution procedure.  Yet, it is clear that \inlinecaml{remove}'s
specification naturally entails the uniqueness property that we wish
to check, albeit in terms of traces over the representation type's
operations, rather than directly in terms of the method's
implementation.

Conventionally, symbolic execution explores symbolic states along a
program's CFG to find a reachable path that ends at an erroneous
state; however, in our setting, the linked list maintained by the
key-value stores \eff{Nxt} and \eff{Val} does not have an explicit
state representation that can be trivially constructed from the
program; it is instead manifested by traces extracted from SFAs
associated with the ADT's specification.  We will show shortly in
\cref{sec:motiv-deriv-exec} how to precisely relate the trace
structure described by the specification with the execution paths
explored by symbolic execution to manifest these hidden states.
Establishing this relation via the use of symbolic derivatives, which
will also be described shortly, enables a novel form of
property-directed exploration that can be exploited by a symbolic
execution procedure to avoid searching over unproductive execution
paths.

A relatively straightforward approach is to encapsulate symbolic
states into a set of traces that records the temporally ordered events
produced along the current execution path being explored.  Such a set
of traces, like our specifications, has a natural representation in
SFA.  Take the precondition of \texttt{remove} as an example;
\textsf{stored}(\eff{Nxt}, \var{a}, \var{b}) encapsulates a symbolic
state $S_0$ as follows:

\begin{center}\begin{tikzpicture}[
  baseline=(current bounding box.center),
  list/.style={
    rectangle split,
    rectangle split parts=2,
    rectangle split horizontal,
    label={[anchor=south,inner sep=1pt]north west:#1},
    draw},
  dotarrow/.style={Circle-Latex[round]}]
  \node[list=\var{a}] (a) {?};
  \node[list=\var{b},right=5mm of a] (b) {?\nodepart{second}?};
  \node[above right=1mm and -1mm of b] {$\textsf{S}_0$};
  \draw[dotarrow] (a.two |- a.center) -- (b);
\end{tikzpicture}\end{center}

\noindent Since \var{a} and \var{b} are two symbolic variables
denoting two arbitrary nodes as long as the former is the predecessor
of the latter, the \emph{path condition} is $\top$ initially.  Before
symbolically executing \inlinecaml{remove}, we introduce two
additional symbolic variables $\var{n}_0$ and $\var{u}$ as the
arguments passed to \inlinecaml{remove}, respectively denoting the
first node in the input linked-list, and the element that the node to
be removed stores.  Since the precondition of \inlinecaml{remove}
places no constraint on these variables, the path condition is $\top$
initially.  Substituting $\var{n}_0$ for \texttt{hd} (and $\var{u}$
for \texttt{elem}) in the body of \texttt{remove}, symbolic execution
may choose to enter the first branch at \cref{line:return-hd}, augment
the path condition with $\var{n}_0=\texttt{null}$, and return
$\var{n}_0$.  An \emph{empty} symbolic trace (of length zero) is
produced from this execution and ghost nodes \var{a} and node \var{b}
are left untouched in the symbolic state.  Therefore, the execution
complies with the predecessor uniqueness safety property as specified
by the method's postcondition.

Conventionally, symbolic states are constrained by path
conditions whose satisfiability \emph{directly} manifests the
reachability of the respective state.
In this setting, however, events executed along the path may also
quantify over symbolic variables like path conditions and thus may
impose additional constraints on the associated symbolic state.  Take
$S_0$ as an example: the SFA representation of
\textsf{stored}(\eff{Nxt}, \var{a}, \var{b}), as shown in
\cref{fig:remove-context}, admits the traces of executed events
allowed upon entering \eff{remove}.  Transitions labeled by
\inangle{\eff{Nxt.put}~a~\cancel{b}} admit event
$\eff{Nxt.put}~key~val$ only if $key=a \wedge val\neq b$ holds.  As a
result, path condition $\top$ and the SFA \emph{synergistically}
encapsulate the symbolic state $S_0$, reflecting the conditions
necessary for the execution paths that lead to \eff{remove} to be
feasible.

\paragraph{Refinement of Trace-Based Symbolic States.}
We further illustrate this synergy by considering other feasible
execution paths in \texttt{remove} along which symbolic states are
refined.  A symbolic execution procedure may also assume that the
linked list is not empty, \ie, $n_0\neq\texttt{null}$, and can thus
take the branch at \cref{line:get-value-of-hd} where a symbolic method
invocation \eff{Val.get} takes place with $n_0$ as its argument.  Just
as path conditions are refined by branching conditions, SFAs are
refined by the effectful operations performed along execution paths,
consistent with the constraints given by the preconditions of executed
operations.  Here, the precondition of \eff{Val.get}, when applied to
$n_0$, is $\textsf{stored}(\eff{Val},n_0,u_0)$, where $u_0$ is a fresh
symbolic variable representing the result of the invocation.  That is,
$u_0$ is the element stored at $n_0$.  Noticing that $n_0$ is not even
mentioned in $S_0$, $n_0$ may be $a$, $b$, or some other node not
explicitly specified in $S_0$.  The refined SFA representation, as
denoted by
$\textsf{stored}(\eff{Nxt},a,b)\wedge\textsf{stored}(\eff{Val},n_0,u_0)$,
encapsulates the prestate of the method invocation
$\inangle{\var{u}_0{\gets}\eff{Val.get}\ \var{n}_0}$, which can be
cleanly dissected into the following three states:
\begin{center}
\begin{tabular}{c|c|c}
\begin{tikzpicture}[
  baseline=(current bounding box.center),
  list/.style={
    rectangle split,
    rectangle split parts=2,
    rectangle split horizontal,
    label={[anchor=south,inner sep=1pt]north west:#1},
    draw},
  dotarrow/.style={Circle-Latex[round]}]
  \node[list={$\var{a}{=}\var{n}_0$}] (a) {$\var{u}_0$\nodepart{second}\phantom{?}};
  \node[list=\var{b},right=5mm of a] (b) {?\nodepart{second}?};
  \node[above right=1mm and -2mm of b] {$\textsf{S}_1$};
  \draw[dotarrow] (a.two |- a.center) -- (b);
\end{tikzpicture}
&
\begin{tikzpicture}[
  baseline=(current bounding box.center),
  list/.style={
    rectangle split,
    rectangle split parts=2,
    rectangle split horizontal,
    label={[anchor=south,inner sep=1pt]north west:#1},
    draw},
  dotarrow/.style={Circle-Latex[round]}]
  \node[list=\var{a}] (a) {?};
  \node[list={$\var{b}{=}\var{n}_{0}$},right=5mm of a] (b) {$\var{u}_0$\nodepart{second}?};
  \node[above right=1mm and -2mm of b] {$\textsf{S}_2$};
  \draw[dotarrow] (a.two |- a.center) -- (b);
\end{tikzpicture}
&
\begin{tikzpicture}[
  baseline=(current bounding box.center),
  list/.style={
    rectangle split,
    rectangle split parts=2,
    rectangle split horizontal,
    label={[anchor=south,inner sep=1pt]north west:#1},
    draw},
  dotarrow/.style={Circle-Latex[round]}]
  \node[list=$\var{n}_0$,right=5mm of b] (n0) {$\var{u}_0$\nodepart{second}?};
  \node[list=\var{a},right=5mm of n0] (a) {?};
  \node[list=\var{b},right=5mm of a] (b) {?\nodepart{second}?};
  \node[above right=1mm and -2mm of b] {$\textsf{S}_3$};
  \draw[dotarrow] (a.two |- a.center) -- (b);
\end{tikzpicture}
\end{tabular}
\end{center}
\noindent Augmenting the path condition with $\var{u}_0=\var{u}$ as we
enter \cref{line:ret-next-of-hd}, symbolically executing the
invocation of \eff{Nxt.get} with $\var{n}_0$ first introduces a new
symbolic variable $\var{n}_1$ denoting the successor of $\var{n}_0$,
and then further refines the traces of executed events to be
$((\textsf{stored}(\eff{Nxt},a,b)\wedge\textsf{stored}(\eff{Val},n_0,u_0))\cdot\inangle{\var{u}_0{\gets}\eff{Val.get}\
  \var{n}_0})\wedge\textsf{stored}(\eff{Nxt},n_0,n_1)$, where
$\cdot\inangle{\var{u}_0{\gets}\eff{Val.get}\ \var{n}_0}$ records the
previous method invocation.  Within the refined state, similarly, node
$\var{n}_1$ could be \var{a}, \var{b}, some other node, or even
$\var{n}_0$.  Three ordinary cases are depicted below:
\begin{center}
\begin{tabular}{c|c|c|c}
\begin{tikzpicture}[
  baseline=(current bounding box.center),
  list/.style={
    rectangle split,
    rectangle split parts=2,
    rectangle split horizontal,
    label={[anchor=south,inner sep=1pt]north west:#1},
    draw},
  dotarrow/.style={Circle-Latex[round]}]
  \node[list={$\var{a}{=}\var{n}_0$}] (a) {$\var{u}_0$\nodepart{second}\phantom{?}};
  \node[list={$\var{b}{=}\var{n}_1$},right=5mm of a] (b) {?\nodepart{second}?};
  \node[above right=1mm and -4mm of b] {$\textsf{S}_4$};
  \draw[dotarrow] (a.two |- a.center) -- (b);
\end{tikzpicture}
&
\begin{tikzpicture}[
  baseline=(current bounding box.center),
  list/.style={
    rectangle split,
    rectangle split parts=2,
    rectangle split horizontal,
    label={[anchor=south,inner sep=1pt]north west:#1},
    draw},
  dotarrow/.style={Circle-Latex[round]}]
  \node[list=\var{a}] (a) {?};
  \node[list={$\var{b}{=}\var{n}_0$},right=5mm of a] (b) {$\var{u}_0$\nodepart{second}\phantom{?}};
  \node[list=$\var{n}_1$,right=5mm of b] (c) {?\nodepart{second}?};
  \node[above right=1mm and -4mm of c] {$\textsf{S}_5$};
  \draw[dotarrow] (a.two |- a.center) -- (b);
  \draw[dotarrow] (b.two |- b.center) -- (c);
\end{tikzpicture}
&
\begin{tikzpicture}[
  baseline=(current bounding box.center),
  list/.style={
    rectangle split,
    rectangle split parts=2,
    rectangle split horizontal,
    label={[anchor=south,inner sep=1pt]north west:#1},
    draw},
  dotarrow/.style={Circle-Latex[round]}]
  \node[list=$\var{n}_0$] (n0) {$\var{u}_0$\nodepart{second}\phantom{?}};
  \node[list={$\var{a}{=}\var{n}_1$},right=5mm of n0] (a) {?};
  \node[list=\var{b},right=5mm of a] (b) {?\nodepart{second}?};
  \node[above right=1mm and -4mm of b] {$\textsf{S}_6$};
  \draw[dotarrow] (a.two |- a.center) -- (b);
  \draw[dotarrow] (n0.two |- n0.center) -- (a);
\end{tikzpicture}
&\dots
\end{tabular}
\end{center}
where $S_4$, $S_5$ and $S_6$ refines the above dissected states $S_1$,
$S_2$, and $S_3$ respectively.  Then, node $\var{n}_1$ is returned at
(Line~\ref{line:ret-next-of-hd}).  Although the symbolic state is
refined, node \var{a} and node \var{b} are still left intact to comply
with the constraints defined by the method's specification.  Since the
safety property holds over all these possible symbolic states, this
execution path is also deemed to satisfy the postcondition.

The violation of the postcondition may happen within the loop
(\cref{line:loop}) if we do not execute the shaded operation at
(\cref{line:missing}).  Substituting $\var{n}_0$ for \texttt{prev} in
the loop body, the invocation of \eff{Nxt.get} with $\var{n}_0$ then
takes us to the same symbolic states, $S_4$, $S_5$, and $S_6$,
generated in the earlier explored branch.  If we continue the
execution from $S_6$ up until \cref{line:before}, one possible
symbolic state that holds after the invocation of \eff{Val.get} and
\eff{Nxt.get} with $\var{n}_1$ would be:
\begin{center}
\begin{tabular}{c|c|c}
\dots&
\begin{tikzpicture}[
  baseline=(current bounding box.center),
  list/.style={
    rectangle split,
    rectangle split parts=2,
    rectangle split horizontal,
    label={[anchor=south,inner sep=1pt]north west:#1},
    draw},
  dotarrow/.style={Circle-Latex[round]}]
  \node[list=$\var{n}_0$] (a) {$\var{u}_0$};
  \node[list=$\var{a}{=}\var{n}_1$,right=5mm of a] (b) {$\var{u}_1$};
  \node[list=$\var{b}{=}\var{n}_2$,right=5mm of b] (c) {?\nodepart{second}?};
  \node[above right=1mm and -4mm of c] {$\textsf{S}_7$};
  \draw[dotarrow] (a.two |- a.center) -- (b);
  \draw[dotarrow] (b.two |- b.center) -- (c);
\end{tikzpicture}
&\dots
\end{tabular}
\end{center}
where $\var{u}_1$ and $\var{n}_2$ are fresh symbolic variables that
denotes the element stored in $n_1$ and the successor of $n_1$
respectively.  Without executing \cref{line:missing}, the invocation
of \eff{Nxt.put} at \cref{line:falsify} makes node \var{b} (\ie,
$\var{n}_2$) the successor of both $\var{n_0}$ and \var{a} (\ie,
$\var{n}_1$):
\begin{center}
\begin{tabular}{c|c|c}
\dots&
\begin{tikzpicture}[
  baseline=(current bounding box.center),
  list/.style={
    rectangle split,
    rectangle split parts=2,
    rectangle split horizontal,
    label={[anchor=south,inner sep=1pt]north west:#1},
    draw},
  dotarrow/.style={Circle-Latex[round]}]
  \node[list=$\var{n}_0$] (a) {$\var{u}_0$};
  \node[list=$\var{a}{=}\var{n}_1$,right=5mm of a] (b) {$\var{u}_1$};
  \node[list=$\var{b}{=}\var{n}_2$,right=5mm of b] (c) {?\nodepart{second}?};
  \node[above right=1mm and -4mm of c] {$\color{red}\textsf{S}_8$};
  \draw[dotarrow,bend right=25] (a.two |- a.center) to (c);
  \draw[dotarrow] (b.two |- b.center) -- (c);
\end{tikzpicture}
&\dots
\end{tabular}
\end{center}
This symbolic state ($S_8$), among other possible states that we omit
here, happens to manifest a violation of the method's safety property
because $S_8$ with path condition
$n_0\neq\texttt{null} \wedge u_0\neq u \wedge n_1\neq\texttt{null}
\wedge u_1=u$ is obviously reachable and it is obvious from the above
illustration that any trace of events encapsulated in $S_8$ is
rejected by the method's postcondition.

\subsection{Symbolic Execution with Symbolic Derivatives}\label{sec:motiv-deriv-exec}

Efficiency is a serious problem that must be considered by any
symbolic execution procedure.  Conventional techniques can prune
infeasible paths~\cite{baldoniSurveySymbolicExecution2018} by
leveraging the control structure in programs along with provided
preconditions to consider an underapproximation of program behavior
that follows a single execution path at a time.  For example, if a
precondition requires the input list to be not empty, path exploration
can ignore paths that contradict this constraint (e.g.,
\cref{line:return-hd} in \cref{fig:motive-ex}).  A trace-aware
symbolic execution procedure can additionally discover the shape of
linked-lists automatically from the SFA structures latent in
specifications that induce these traces, and thus can introduce new
opportunities for pruning unproductive paths.  For example, as
currently described, although the precondition of
$\inangle{\var{u}_0{\gets}\eff{Val.get}\ \var{n}_0}$ refines $S_0$,
the refined symbolic state does not specify whether $n_0$ is equal to
$a$, $b$, or some other node.
More notably, when the symbolic method invocation
$\inangle{\var{n}_1{\gets}\eff{Nxt.get}\ \var{n}_0}$ is made, the SFA
encapsulating the symbolic state after the invocation includes
contradicting paths not depicted among $S_4$, $S_5$, and $S_6$, in
which $n_0$ can be both equal to and not equal to $a$.  Explicitly
leveraging the control structure latent in SFAs would enable us to
correlate traces (and the hidden states they induce) to specific
program paths, exposing new pruning opportunities that would otherwise
not be possible.  We propose to compute \emph{symbolic derivatives}
over the specifications to explore and exploit the latent SFA
structures within the trace-based specifications.

\paragraph{Derivative-Guided Path Exploration}
Our symbolic execution procedure employs symbolic derivatives to
explore the SFA structures latent in the specifications and
intelligently enumerate admissible traces that encapsulate prestates
along execution paths, lowering the cost of path feasibility check and
enabling effective path pruning.  In our running example, the initial
symbolic state $S_0$ of \eff{remove} requires that some node \var{a}
is the predecessor of some node \var{b} but does not specify when
\var{b} is made the successor of \var{a} via \eff{Nxt.put}.  Recall
the precondition automaton (\cref{fig:remove-automata}), which denotes
the traces encapsulating $S_0$: the relevant \eff{Nxt.put} operation
may be performed ($q_0\to q_1$) after some indefinite number of
irrelevant actions ($q_0\to q_0$), or after a previously executed
\inangle{\eff{Nxt.put}\ \var{a}\ \var{b}} is invalidated
($q_1\to q_0$).  A symbolic derivative computation helps explore this
automaton structure by sampling paths from the start state $q_0$ to
the accepting state $q_1$.  In the case of \texttt{remove}, its
behavior happens to be invariant to when the call
\inangle{\eff{Nxt.put}\ \var{a}\ \var{b}} is performed.  Therefore, in
order to reach the erroneous symbolic state $S_8$, it is sufficient to
begin the symbolic execution of \texttt{remove} with a precondition
trace that consists of \inangle{\eff{Nxt.put}\ \var{a}\ \var{b}}
followed by actions that do not invalidate this operation.

Our symbolic execution procedure further exploits the latent SFA
structures to make informed decisions in choosing the precondition
trace that favors the efficient exploration of feasible execution
paths.  To minimize the complexity of reasoning about the behavior of
method invocations performed, one straightforward strategy is to
choose the ``simplest'' symbolic state based on the length of the
corresponding trace induced from the precondition automaton.  For
example, \texttt{remove} may be invoked under a state encapsulated by
the singleton trace \inangle{\eff{Nxt.put}\ \var{a}\ \var{b}}.  This
trace, however, cannot be refined to admit a \eff{Val.put} event,
which is required by the invocation of \eff{Val.get} at
\cref{line:get-value-of-hd} in \cref{fig:motive-ex}, thus failing to
reach the error state $S_8$.  As we proceed to consider longer
precondition traces, the simplicity criteria soon becomes insufficient
to distinguish between precondition traces.  Here are two traces of
four symbolic events that can be induced from the precondition
automaton and thus equally encapsulate a valid prestate of
\texttt{remove}:
\[
  \inangle{\eff{Nxt.put}\ \var{a}\ \var{b}}
  (\lneg\inangle{\eff{Nxt.put}\ \var{a}\ \cancel{\var{b}}})^3
  \quad \text{(repeated 3 times)}
\]
\[
  \inangle{\eff{Nxt.put}\ \var{a}\ \var{b}}
  \inangle{\eff{Nxt.put}\ \var{a}\ \cancel{\var{b}}}
  \inangle{\eff{Nxt.put}\ \var{a}\ \var{b}}
  (\lneg\inangle{\eff{Nxt.put}\ \var{a}\ \cancel{\var{b}}})
\]
We know from before that the erroneous execution path leading to $S_8$
requires at least two \eff{Val.put} events but the later trace can
only admit one \eff{Val.put} event.  In this case, symbolic
derivatives guide SE to first consider the former trace when executing
\texttt{remove}.
This behavior arises from symbolic derivatives' tendency to maximize
``progress'' when inducing traces from the precondition automaton.  As
symbolic derivatives facilitate the exploration within the latent SFA
structures of precondition automata, this tendency manifests in
several ways:
\begin{enumerate*}
\item consistently select the states closer to the accepting state;
\item avoid unnecessarily transiting back to non-accepting states, and;
\item steer clear of generating the stagnation pattern of "setting",
  "unsetting", and "resetting".
\end{enumerate*}
In the case of the precondition automaton shown in
\cref{fig:remove-context}, the execution tends to move from $q_0$ to
$q_1$ via \inangle{\eff{Nxt.put}\ \var{a}\ \var{b}} and stays at
$q_1$.  Intuitively, the former trace induced in the described fashion
encapsulates a relatively more permissive state that potentially leads
to more interesting feasible paths being explored, including the one
that leads to the error state $S_8$.
\newcommand{\lref}[1]{\texttt{L}\ref{#1}:\xspace}
\begin{figure}
\tikzstyle{decision} = [diamond, draw, text centered, aspect=3, text width=2.5cm, inner sep=.5pt]
\tikzstyle{process} = [rectangle split, rectangle split horizontal, rectangle split part fill={gray!50,gray!50,white}, rectangle split parts=3, draw, text centered, rounded corners, minimum height=2em]
\begin{tikzpicture}[node distance=.3cm and 1.5cm,>=Latex]
  \node [decision] (n0null) {\lref{line:return-hd}$n_0 = \texttt{null}$};
  \node [process, below left=of n0null] (ret9) {$S_0$\nodepart{two}$q_2$\nodepart{three}\lref{line:return-hd}$\texttt{return}~n_0$};
  \draw [->] (n0null.south) -- (ret9) node [above,midway] {\textrm{Y}};
  \node [process, below right=of n0null] (getValOfn0) {$S_0$\nodepart{two}$q_2$\nodepart{three}\lref{line:get-value-of-hd}$u_0\gets\eff{Val.get}~n_0$};
  \draw [->] (n0null.south) -- (getValOfn0) node [above,midway] {\textrm{N}};
  \node [decision, below left=of getValOfn0] (u0test) {\lref{line:get-value-of-hd}$u_0 = u$};
  \draw [->] (getValOfn0) -- (u0test.north);
  \node [process, align=left, below left=of u0test] (ret11) {$S_3$\\$S_6$\nodepart{two}$q_2$\\$q_2$\nodepart{three}\lref{line:ret-next-of-hd}$n_1\gets\eff{Nxt.get}~n_0$\\\lref{line:ret-next-of-hd}$\texttt{return}~n_1$};
  \draw [->] (u0test.south) -- (ret11) node [above,midway] {\textrm{Y}};
  \node [process, below right=of u0test] (getNxtOfn0) {$S_3$\nodepart{two}$q_2$\nodepart{three}\lref{line:prevget}$n_1\gets\eff{Nxt.get}~n_0$};
  \draw [->] (u0test.south) -- (getNxtOfn0) node [above,midway] {\textrm{N}};
  \node [decision, below left=of getNxtOfn0] (n1null) {\lref{line:curr-null}$n_1 = \texttt{null}$};
  \draw [->] (getNxtOfn0) -- (n1null.north);
  \node [process, below left=of n1null] (ret15) {$S_6$\nodepart{two}$q_2$\nodepart{three}\lref{line:exit}$\texttt{return}~n_0$};
  \draw [->] (n1null.south) -- (ret15) node [above,midway] {\textrm{Y}};
  \node [process, below right=of n1null] (getValOfn1) {$S_6$\nodepart{two}$q_2$\nodepart{three}\lref{line:get-value-of-curr}$u_1\gets\eff{Val.get}~n_1$};
  \draw [->] (n1null.south) -- (getValOfn1) node [above,midway] {\textrm{N}};
  \node [decision, below left=of getValOfn1] (u1test) {\lref{line:get-value-of-curr}$u_1=u$};
  \draw [->] (getValOfn1) -- (u1test.north);
  \node [process, below left=of u1test, align=left] (getNxtOfn1) {$\cdot\cdot$\\$S_7$\\$S_8$\nodepart{two}$q_2$\\$q_2$\\$\color{red}\emptyset$\nodepart{three}\lref{line:before}$n_2\gets\eff{Nxt.get}~n_1$\\\lref{line:falsify}$\eff{Nxt.put}~n_0~n_2$\\\lref{line:exit}$\texttt{return}~n_0$};
  \draw [->] (u1test.south) -- (getNxtOfn1) node [above,midway] {\textrm{Y}};
  \node [rectangle, draw, rounded corners, below right=of u1test,minimum width=1cm,minimum height=.7cm] (loop) {\dots};
  \draw [->] (u1test.south) -- (loop) node [above,midway] {\textrm{N}};
\end{tikzpicture}
\caption{Derivative-guided symbolic execution of \texttt{remove}.}\label{fig:remove-exec}
\end{figure}
\paragraph{Derivative-Guided Falsification}
In addition to intelligently enumerating precondition traces, symbolic
derivatives can guide the symbolic execution of \inlinecaml{remove}
itself, by again exploiting the latent SFA structure of the
specification.  Specifically, they allow our symbolic execution to
relate method invocations in \inlinecaml{remove} to the transitions in
the postcondition automaton (\cref{fig:remove-effect}).  Consider the
symbolic execution tree of \texttt{remove} in terms of operations over
symbolic variables, as depicted in \cref{fig:remove-exec}.  Each
program point is associated with a state in the postcondition
automaton that effectively determines the set of future traces (\ie,
sequences of future actions) that are (un)safe to explore.  The
symbolic execution of \texttt{remove} begins with the initial state
$q_2$ of the postcondition automaton because it admits all traces of
actions that \texttt{remove} is safe to perform.  As \texttt{remove}
traverses the input linked list via repetitive \eff{get} invocations
before unsafely invoking \eff{Nxt.put}, symbolic derivatives
intelligently determine that the future safe traces after those
\eff{get} invocations are also represented by $q_2$.  This is because
$q_2 \to q_2$ is the only outgoing transition from $q_2$ that is
compatible with \eff{get} actions.
Recall the frontier state $S_7$ before the unsafe invocation of
\eff{Nxt.put} from before: the past traces of actions already
determine $\var{n}_0\neq\var{a}$ and $\var{n}_2=\var{b}$.  Then our
SE procedure can indeed determine that \inangle{\eff{Nxt.put}\
  \var{n}_0\ \var{n}_2} is unsafe because it is compatible with the
transition $q_2\to\emptyset$, which happens to be the only
compatible outgoing transition from $q_2$.
Note that even if \texttt{remove} continues traversing the linked list
after removing the first found element via a recursive call to
\texttt{loop},
the symbolic derivatives guide SE to avoid unprofitably unrolling the
\texttt{loop} because any future action is unsafe.  In contrast, the
na\"ive trace-based SE described in \cref{sec:motiv-symb-exec} would
wastefully relate each explored execution path of \texttt{remove} with
traces in the postcondition automaton.

To conclude this section, the main contribution of this paper is a new
symbolic execution procedure that computes such symbolic derivatives
as symbolic execution proceeds to maintain a trace of symbolic events
that witness the current execution path and its relationship with the
safety property, in an attempt to accelerate the search for a feasible
execution that violates the property.

\section{Preliminaries}\label{sec:prelim}

The SFA representations of specifications expressed in \LTLf
\cite{degiacomoLinearTemporalLogic2013} facilitate the discussion in
\cref{sec:motiv}.  To properly formulate the relationship between
traces admissible by \LTLf formulae and SFAs (see
\cref{sec:derivative}), however, we need to introduce regular
expressions.  The language of regular expressions (RE) is strictly
more expressive for representing traces than \LTLf and enjoys an
important closure property under the classic derivative computation
\cite{brzozowskiDerivativesRegularExpressions1964}.  In this section,
we present symbolic regular expressions (SREs) whose atoms are
predicates, show how to express common temporal modalities from \LTLf
in terms of RE operations, and relate classic derivatives with states
in finite state automata.
  
\paragraph{Effective Boolean Algebras} Tuple
$(\domain, \preds, \denot{\_}, \bot, \top, \lor, \land, \lnot)$
defines \emph{Effective Boolean Algebra} (EBA
\cite{veanesApplicationsSymbolicFinite2013b}) where \domain is a set
of domain elements and \preds is a set of \emph{predicates}, closed
under the Boolean connectives with $\bot,\top\in\preds$.  The
denotation of $\phi,\psi\in\preds$ are provided by
$\denot{\_}{:}\preds\rightarrow 2^\domain$ where
\begin{mathpar}
  \denot{\bot}=\{\} \and \denot{\top}=\domain \and
  \denot{\phi\lor\psi}=\denot{\phi}\cup\denot{\psi} \and
  \denot{\phi\land\psi}=\denot{\phi}\cap\denot{\psi} \and
  \denot{\lnot\phi}=\domain/\denot{\phi}
\end{mathpar}

\paragraph{Traces} Finite sequences of elements $\alpha,\beta$ from
domain $\Sigma$ are called \emph{traces} \trace.  Let $\epsilon$ be
the empty trace and $\trace_1\cdot\trace_2$ be the associative
concatenation of $\trace_1$ and $\trace_2$.  We write
$\trace_1\trace_2$ for $\trace_1\cdot\trace_2$ when it is clear
from the context that juxtaposition stands for concatenation.
Following the convention, we further denote that
$\Sigma^{(0)}=\{\epsilon\}$, $\Sigma^{(k+1)}=\Sigma\cdot\Sigma^{(k)}$,
for $k\geq0$, and $\Sigma^*=\bigcup_{k\geq0}\Sigma^{(k)}$, where
$L_1 \cdot L_2 = \{\trace_1\trace_2\mid \trace_1\in L_1,
\trace_2\in L_2\}$ for $L_1\subseteq\Sigma^*$ and
$L_2\subseteq\Sigma^*$.  Lastly, we write $L^*$ for the closure of $L$
under concatenation when it is clear from the context that
$L\subseteq\Sigma^*$.

\paragraph{Symbolic Regular Expressions}
We define \emph{Symbolic Regular Expressions} (SRE) modulo Boolean
Algebra
$(\Sigma, \preds, \denot{\_}, \bot, \bullet, \ldisj, \lconj, \lneg)$
such that SREs use \emph{literals} \lit from \preds as predicates over
these characters, \ie, $\denot{\lit}\subseteq\Sigma$, and accept
traces of characters from alphabet $\Sigma$.  The top literal is
denoted by $\bullet$ following the convention of regular expressions.
Note that, to avoid later confusion with boolean predicates, we adopt
a different set of notations for the boolean connectives.  The syntax
of SREs is then defined by the following operations: empty set
($\varnothing$), null ($\varepsilon$), literals ($\lit$),
Kleene Star ($\R^*$), concatenation ($\R_1\cdot\R_2$), negation
($\neg\R$), conjunction ($\R_1\land\R_2$), and disjunction
($\R_1\vee\R_2$).
\[\arraycolsep=4pt \begin{array}{lrl}
  \R & ::= & \varnothing \mid \varepsilon \mid \lit \mid \R^* \mid \R_1 \cdot \R_2 \mid \lnot\R \mid \R_1\land\R_2 \mid \R_1\lor\R_2 \\
  \end{array}\]
Abusing the notation $\denot{\_}$, the denotation of SREs,
$\denot{\R} \subseteq \Sigma^*$, is recursively defined as:
\begin{mathpar}
  \denot{\varnothing} = \{\}
  \and
  \denot{\varepsilon} = \{ \epsilon \}
  \and
  \denot{\R^*} = \denot{\R}^*
  \and
  \denot{\R_1 \cdot \R_2} = \denot{\R_1}\cdot\denot{\R_2}
  \\
  \denot{\lnot\R} = \Sigma^* \setminus \denot{\R}
  \and
  \denot{\R_1 \land \R_2} = \denot{\R_1} \cap \denot{\R_2}
  \and
  \denot{\R_1 \lor \R_2} = \denot{\R_1} \cup \denot{\R_2}
\end{mathpar}
Following the denotation of SREs, we write $\R_1\sqsubseteq\R_2$ for
$\denot{\R_1}\subseteq\denot{\R_2}$ and $\R_1\equiv\R_2$ for
$\denot{\R_1}=\denot{\R_2}$.

\paragraph{Conversion from \LTLf to SRE}
Interestingly, common temporal modalities from \LTLf can be expressed
in SRE.  As the leaf nodes in \LTLf formulae are literals \lit, which
is expressible in SRE, we provide the translation semantics of common
temporal operators, assuming that the operands have already been
converted to SREs, as follows:
\begin{mathpar}
  \Next \R \defeq \bullet \cdot \R
  \quad
  \lit \Until \R \defeq \lit^* \cdot \R
\quad
\Final \R \defeq \bullet^* \cdot \R
\quad
\Global \R \defeq \neg (\bullet^* \cdot \neg\R)
\quad
\lit \WeakUntil \R \defeq \neg(\bullet^*\cdot\R) \vee (\lit^*\cdot\R)
\end{mathpar}
That is, $\Next\R$ (next) holds if $\R$ accepts the trace starting
from the next position; $\lit\Until\R$ (until) holds if there exists
such a position that $\R$ accepts the following trace and $\lit$ holds
until that position; $\Final\R$ (finally) holds if there exists such a
position that $\R$ accepts the following trace; $\Global\R$ holds if
there does not exist such a position that $\R$ rejects the following
trace, and; $\lit\WeakUntil\R$ (weak until) holds if either there does
not exist such a position that $\R$ accepts the following trace, or
$\lit$ holds until such a position.  For simplicity, we limit the
first operand of \Until and \WeakUntil to be a single literal \lit,
which suffices for common cases found in the ADT specifications we
consider, including the modalities used in our evaluation
(\cref{sec:eval}).  We directly use SRE in the rest of the paper.

\paragraph{Derivatives of SRE}
A \textit{derivative} is a notion from language theory.  Given a
language, say defined by an SRE \R, and a string \trace, the
derivative operation returns a new language accepting all strings that
are accepted by \R when appended to \trace, which can be thought of as
a \textit{prefix} to those strings.
\[\denot{\deriv{\trace}{\R}} = \{\trace' \mid \trace \cdot \trace' \in \denot{\R}\}\]
Following the literature on derivatives of regular expressions
\cite{antimirovPartialDerivativesRegular1995,brzozowskiDerivativesRegularExpressions1964,berryRegularExpressionsDeterministic1986c},
we first inductively define a \emph{nullable} predicate $\nu(\R)$ that
determines if \R accepts the empty string.  That is, $\nu(\R)$ iff
$\epsilon\in\denot{\R}$.
\begin{mathpar}
  \nu(\varepsilon) = \nu(\R^*) = \top
  \and
  \nu(\varnothing) = \nu(\lit) = \bot
  \and
  \nu(\neg\R) = \neg\nu(\R)
  \and
  \nu(\R_1 \cdot \R_2) = \nu(\R_1 \land \R_2) = \nu(\R_1) \land \nu(\R_2)
  \and
  \nu(\R_1 \lor \R_2) = \nu(\R_1) \lor \nu(\R_2)
\end{mathpar}
Then the derivatives of SREs follow and can be computed recursively via the
following rules:
\begin{mathpar}
  \deriv{\varepsilon}{\R} = \R
  \and
  \deriv{\alpha\trace}{\R} = \deriv{\trace}{\deriv{\alpha}{\R}}
  \and
  \deriv{\alpha}{\varnothing} = \deriv{\alpha}{\varepsilon} = \varnothing
  \and
  \deriv[r]{\alpha}{\R^*} = \deriv{\alpha}{\R} \cdot \R^*
  \and
  \deriv{\alpha}{\lit} =
  \begin{cases}
    \varepsilon & \text{if } \alpha \in \denot{\lit} \\
    \varnothing & \text{if } \alpha \notin \denot{\lit}
  \end{cases}
  \and
  \deriv[r]{\alpha}{\R_1 \cdot \R_2} =
  \begin{cases}
    \inparen{\deriv{\alpha}{\R_1} \cdot \R_2} \vee \deriv{\alpha}{\R_2} & \text{if }\nu(\R_1) \\
    \deriv{\alpha}{\R_1} \cdot \R_2 & \text{if } \neg\nu(\R_1)
  \end{cases}
  \and\newline
  \deriv[r]{\alpha}{\neg\R} = \neg \deriv{\alpha}{\R}
  \and
  \deriv[r]{\alpha}{\R_1 \wedge \R_2} = \deriv{\alpha}{\R_1} \wedge \deriv{\alpha}{\R_2}
  \and
  \deriv[r]{\alpha}{\R_1 \vee \R_2} = \deriv{\alpha}{\R_1} \vee \deriv{\alpha}{\R_2}
\end{mathpar}

Computing the derivative of a regular expression is a well-known technique for
constructing an automaton that accepts the same language as the given
regular expression.  The construction closely follows a property of
regular expressions --- every SRE \R can be written in the form of a disjunction
as follows:
\[\R[nullable]\vee\bigvee_{\alpha\in\Sigma}\alpha\deriv{\alpha}{\R}
  \quad\text{where}\ \R[nullable]\ \text{is}\
  \varepsilon\ \text{if}\  \nu(\R), \varnothing\ \text{otherwise.}\]
Informally, starting with the initial state, each disjunct
$\alpha\deriv{\alpha}{\R}$ denotes a transition to a new state with
label $\alpha$.  If $\nu(\R)$, then we mark the current state as an
accepting state.  Iteratively, we repeat the same procedure on the new
states with the corresponding derivative $\deriv{\alpha}{\R}$ until no
new state can be added.  Intuitively, each state $q_i$ in the
constructed automaton is denoted by a derivative (also in SRE) of the
original \R -- the derivative accepts the same language as the
constructed automaton with its initial state set to $q_i$.  This can
be manifested by a different disjunctive form
$\bigvee_{\trace\in\Sigma^*}\trace\deriv{\trace}{\R}$.  For each
disjunct $\trace\deriv{\trace}{\R}$, if $\deriv{\trace}{\R}$ is not
empty, then $\trace$ denotes a path from the accepting state to the
state denoted by $\deriv{\trace}{\R}$.  Hence, whether
$\deriv{\trace}{\R}$ denotes an accepting state determines if $\R$
accepts $\trace$:
\[\trace\in\denot{\R}\ \text{iff}\ \nu(\deriv{\trace}{\R})\]

\noindent
However, $\Sigma$ often contains a large if not infinite number of
symbols and thus enumerating over all symbols to build an automata is
inefficient at best, and impossible in the general case.
\emph{Mintermization} solves this problem by constructing a finite set
of equivalence classes over the infinite domain $\Sigma$ such that all
literals $\lit$ can be mapped to elements in this finite set
(\cite{veanesRexSymbolicRegular2010,dantoniMinimizationSymbolicAutomata2014}).
Then, following a similar procedure of constructing automata from
regular expressions, one may construct an equivalent SFA where
transtitions between states are labeled by equivalence classes.  We
will present the characterization of \emph{symbolic derivatives} in
\cref{sec:derivative} as a device to exploit SREs' latent SFA
structures without upfront mintermization and later its computation in
\cref{sec:algo}.

\section{Trace-Based Symbolic Execution}\label{sec:naive}

\newcommand{\finalA}{{\lozenge}}
\newcommand{\globalA}{{\square}}
\newcommand{\nextA}{{\bigcirc}}
\newcommand{\lastA}{{\mathbf{LAST}}}
\newcommand{\allA}{{\mathbf{All}}}
\newcommand{\singletonA}{{\mathbf{Single}}}
\newcommand{\untilA}{\,\mathcal{U}\,}
\newcommand{\seqA}{\mathbf{;}}
\newcommand{\negA}{\neg}
\newcommand{\landA}{\land}
\newcommand{\lorA}{\lor}
\newcommand{\anyA}{\bullet}

\begin{figure}[th]
  \small
  \[\arraycolsep=4pt \begin{array}{llrl}
    \multicolumn{4}{c}{
    \text{Variable}\quad x,y,\ldots
    \qquad
    \text{Symbolic Variable}\quad \symb[\tau]{x}, \symb[\tau]{y},\ldots
    \qquad
    \text{Data Constructor}\quad d
    }                                                                                                                                                                              \\
    \multicolumn{4}{c}{
    \text{Primitive Operator}\quad op
    \qquad
    \text{Effectful API of Representation Type}\quad \eff{f}, \eff{g} \in\Delta

}                                                                                                                                                                              \\
    \text{Simple Type}                 & \tau       & ::=  & \var{unit} \mid \var{bool} \mid \var{int} \mid \ldots \mid \times\ \overline{\tau} \mid +\ \overline{d\ \tau}                        \\
    \text{Constant}                    & c          & ::=  & () \mid \mathbb{B} \mid \mathbb{Z} \mid \ldots \mid (\overline{c}) \mid d\ c                                                         \\
    \text{Symbolic First-Order Value}  & q          & ::=  & c \mid x \mid \symb[\tau]{x} \mid (\overline{q}) \mid d\ q \mid \op\ \overline{q}                                    \\
    \text{Boolean Formula}             & \phi, \Phi & ::=  & q \mid \bot \mid \top \mid \neg\phi \mid \phi \wedge \phi \mid \phi \vee \phi                                        \\
  \text{Symbolic Event}                & \lit       & ::=  & \inangle{x_\textsf{ret} \gets \eff{f}\ \overline{x_\textsf{arg}} \mid \phi} \mid \lneg\lit \mid \lit\lconj\lit \mid \lit\ldisj\lit    \\
    \text{Symbolic Value}              & v          & ::=  & x \mid q \mid (\overline{v}) \mid \funbind{x}e \mid \fixbind{f}\funbind{x}e                                          \\
    \text{Symbolic Expression}         & e          & ::=  & v \mid \?[\tau] \mid \abort \mid \assume{\phi} \mid \admit{\R} \mid \append{\R}                                    \\
                                        &            & \mid & \letin{x}{v~v}\ e \mid \letin{x}{e}\ e  \mid e \otimes e                                                             \\
  \end{array}\]
\[\begin{array}{lll}
  e_1; e_2                    & \doteq \letin{x}{e_1}\ e_2                               & \text{for fresh $x$}   \\
  \assert{\phi}               & \doteq (\assume{\neg\phi}; \abort) \otimes \assume{\phi} &                        \\
  \affirm{\R}                & \doteq (\admit{\neg\R}; \abort) \otimes \admit{\R}     & 
\end{array}\]
\caption{Syntax of the core language.}\label{fig:core-syntax}
\end{figure}

We first introduce a na\"{\i}ve variant of our symbolic execution
framework for falsifying functional ADT implementations that interact
with an underlying effectful representation type.  Symbolic execution
is defined on a core functional language with explicit constructs for
generating symbolic values and expressing specifications of two kinds:
formulae $\Phi$ from decidable theories amenable to SMT solving, which
are standard for symbolic execution techniques, and trace-based
specifications $\R$ expressed as symbolic regular expressions, which
is the novelty of our framework.  \figref{fig:core-syntax} presents
the syntax of our core language, where the expression $e$ is expressed
in monadic normal form
(MNF)~\cite{hatcliffGenericAccountContinuationpassing1994}, a variant
of A-normal form
(ANF)~\cite{flanaganEssenceCompilingContinuations1993} that permits
nested let-bindings.  Recursive functions take the form of
$\fixbind{f}\funbind{x}e$ using an explicit fixpoint construction, and
control flow is modeled by nondeterministic choice $\otimes$ together
with an \kwd{assume} construct.  The symbolic constructs in this
language, including \kwd{assume}, will be discussed throughout the
rest of the section.
\cref{fig:sem-naive} formalizes the naive symbolic execution of the
core language as a small-step, substitution-based operational
semantics over symbolic states $(\Phi, \R, e)$.  In this section,
we first introduce symbolic execution of pure
functional programs and then extend it with the capability to reason over traces
that interact with an ADT's underlying representation type.

To enable symbolic reasoning, the language supports
symbolic variables \symb[\tau]{x}, which stand for constants $c$ of
type $\tau$.  In contrast to program variables $x$, symbolic variables are
internal to symbolic execution: they are never written by developers but
are generated by the \?[\tau] construct during symbolic
execution (Rule \textsc{GenSym}).  The $\tau$ subscript can be omitted
whenever it is clear from the context; $\tau$ denotes simple types (primitive
types, \eg, \var{unit} and \var{int}, product
types $\times\ \overline{\tau}$, and user-defined data types
$+\ \overline{d\ \tau}$) but not function types.  Variables, symbolic
variables, and constants, when composed by tuple constructors
$(\dots)$, data constructors $d$, and primitive operators \op, build
up to symbolic (first-order) values.  Then, Boolean symbolic values, when
composed by logical connectives, build up to Boolean formulae $\Phi$.
Since primitive operators are drawn from decidable first-order
theories, \eg, arithmetic operators, or uninterpreted functions with
user-provided axioms, the satisfiability of Boolean formulae can be
straightforwardly discharged to SMT queries.  

\begin{definition}[Denotation of Boolean Formulae]\label{defn:denote-formula}
  Let $\sigma$ denote an interpretation of symbolic variables as
  constants and $\sigma(\Phi)$ denote the Boolean formula $\Phi$ with
  its symbolic variables substituted for constants according to
  $\sigma$.  Then, the denotation of a \emph{closed Boolean formula}
  $\Phi$, where all variables are symbolic, is the set of
  interpretations $\sigma$ such that $\sigma(\Phi)$ holds, \ie,
  $\denot{\Phi}=\{\sigma \mid \sigma(\Phi) \}$.
\end{definition}

\begin{figure}[t]
  \[ \text{Symbolic State} \qquad S\ ::=\ (\Phi, \R, e) \]
  \input{naive-semantics-rules}
  \caption{Naive trace-augmented semantics.}
  \label{fig:sem-naive}
\end{figure}

Now, we may introduce the symbolic execution of pure functional
programs, in which case each symbolic state $(\Phi, e)$ consists of a
closed Boolean formula $\Phi$ representing the current \emph{path
  condition}, and a \emph{closed} expression $e$ --- all variables are
bound by \kwd{let}, \kwd{fun}, or \kwd{fix}.  The path condition
collects all the conditions that need to be satisfied for the symbolic
state to be reachable, \ie, have a corresponding concrete state.
The initial path condition is true, denoted by $\true$.  Let $e$ be
the expression to be reduced.  Then the initial symbolic state is
$(\true, e)$.  The reduction rules between symbolic states are
described in \cref{fig:sem-naive} if we omit rules related to \R.
Rule \textsc{Assume} describes the augmentation of the current path
condition with the argument of \kwd{assume}, which is also a Boolean
formula $\phi$.  In contrast to path conditions, we use $\phi$ for
Boolean formulae that may involve program variables bound in
expressions, which will be substituted for closed symbolic values via
Rule \textsc{LetVal}.  Each \kwd{assume} along an execution path
further restricts the state space represented by the path condition.
Suppose a path condition $\Phi$ is satisfiable, \ie, there exists
$\sigma\in\denot{\Phi}$.  Then, a sequence of reductions from
$(\true, e)$ to $(\Phi, \abort)$, where \abort represents a failure in
execution, witnesses a \emph{feasible} execution path of $e$ that
leads to a failure.  In practice, \abort is rarely written by
developers and can be expressed using \kwd{assert}, which is
defined as  syntactic sugar (\cref{fig:core-syntax}).  
Whether a path condition $\Phi$ passes an assertion $\assert{\phi}$ is
effectively determined by the satisfiability of $\Phi\wedge\neg\phi$.

To reason about an ADT's interaction with underlying representation
types, we equip symbolic execution with the capability to model such
interactions extensionally, by recording the history of calls to the
representation type's methods, along with their argument and return
values.  In particular, interactions are captured in symbolic regular
expressions (SRE) \R whose literals denote sets of such API calls to
the representation types.  Recall in \cref{sec:prelim}, such literals
are elements from EBA
$(\Sigma, \preds, \denot{\_}, \bot, \bullet, \ldisj, \lconj, \lneg)$.
Here, $\Sigma$ stands for the domain of \emph{events}, denoted by
$c_\textsf{ret}\gets\eff{f}\ \overline{c_\textsf{arg}}$, and $\preds$ includes all
the \emph{symbolic events} $\lit$, each denoting set of events,
according to the syntax shown in \cref{fig:core-syntax}.  An
\emph{atomic symbolic event}
$\inangle{x_\textsf{ret}\gets\eff{f}\ \overline{x_\textsf{arg}}\mid\phi}$ denotes
the calls to \eff{f} such that the arguments $\overline{c_\textsf{arg}}$ and
the return value $c_\textsf{ret}$ satisfy the qualifier $\phi$:
\[
  \denot{\inangle{x_\textsf{ret}\gets\eff{f}\ \overline{x_\textsf{arg}}\mid\phi}}
  \doteq \{ c_\textsf{ret}\gets\eff{f}\ \overline{c_\textsf{arg}} \mid
  [\overline{x_\textsf{arg}\mapsto c_\textsf{arg}}, x_\textsf{ret}\mapsto c_\textsf{ret}]\phi\}
\]
The boolean connectives have standard denotation as shown in
\cref{sec:prelim}.  
Notice that the scope of
$\overline{x_\textsf{arg}}$ and $x_\textsf{ret}$ is limited to the qualifier $\phi$
of the symbolic event.  We omit such variables \emph{local} to the
symbolic event when they are either obvious from or irrelevant to the
context.  For example, we always use $key$ and $val$ to denote keys
and values of the calls to \eff{put} and \eff{get} from key-value
stores, with the result of \eff{put} omitted.  And similar to
\cref{sec:motiv}, we write \inangle{\eff{put}\ \hat{k}\ \hat{v}} for
\inangle{\eff{put}\ key\ val \mid key=\hat{k} \wedge val=\hat{v}} and
\inangle{\eff{put}\ \cancel{\hat{k}}\ \hat{v}} for \inangle{\eff{put}\
  key\ val \mid key\neq \hat{k} \wedge val=\hat{v}}.  An atomic
symbolic event is \emph{closed} if all variables in its qualifier are
either symbolic or local to the event; a symbolic event $\lit$ is
\emph{closed} if all its atomic symbolic events are, and; an SRE is
\emph{closed} if all its symbolic events are.  The denotation does not
apply to all closed SREs but only those SREs without symbolic
variables.  For SREs that reference symbolic variables, we can only
interpret them after interpreting these symbolic variables in a way
consistent with the path condition if any.

By augmenting symbolic states with SRE \R[curr] to represent the
events that have happened, we define a reduction semantics over
$(\Phi,\R[curr],e)$ as shown in \cref{fig:sem-naive}.  We refer to
such an SRE \R[curr] as the \emph{current context} of the execution
from the associated symbolic state.  In addition, we refer to SREs as
\emph{contexts} or \emph{effects} of a method (ADT's or representation
type's) depending on whether they describe admissible traces prior to
calling the method or traces the method is supposed to produce.
Similar to path conditions, the SREs that represent the current
context of symbolic states are always closed.
\cref{defn:reachability} gives the reachability of a symbolic state
$S$ based on the satisfiability of its path condition $\Phi$ and its
current context \R[curr].
\begin{definition}[Reachability]\label{defn:reachability}
  $\sat{\Phi, \R[curr]}$ iff there exists $\sigma \in \denot{\Phi}$
  such that $\denot{\sigma(\R[curr])} \neq \emptyset$.
\end{definition}
\noindent
Henceforth, we omit the carat ($\hat{\phantom{x}}$) on symbolic variables
\symb{x} and assume all variables are symbolic except for those
variables bound in expressions.

For a symbolic state $(\Phi, \R[curr], e)$, its path condition $\Phi$
effectively captures the history of pure computation up to
this state
while its current context \R[curr] captures the history of effectful
computation.  Because the events in \R[curr] are qualified with
reference to the symbolic variables in $\Phi$, both structures
synergyistically enable the recording of a sufficient condition that
allows a computation to reach the symbolic state.

\newcommand{\nodenext}[2]{\EM{\R_{#1 \curvearrowright #2}}}
\newcommand{\nodenextunder}[2]{\EM{\Trace_{#1 \curvearrowright #2}}}
\newcommand{\nodeelem}[2]{\EM{\R_{#1 : #2}}}
\newcommand{\nodeelemunder}[2]{\EM{\Trace_{#1 : #2}}}
\newcommand{\nodebidir}[2]{\EM{\R_{#1 \curvearrowleftright #2}}}
\begin{example}\label{ex:spec}
    The \texttt{remove} method from \cref{fig:motive-ex} can be
  rewritten in our core language, with the conditional expression
  represented by a combination of \kwd{assume} and choice operation
  $\otimes$, as follows:
  \[e_\texttt{remove}\defeq\funbind{hd}\funbind{elem}
    (\assume{hd=\texttt{null}}; hd) \otimes
    (\assume{hd\neq\texttt{null}};\ \dots)
  \]
  Recall that the specification in \cref{fig:motive-ex} requires
  \texttt{remove} to be called in a symbolic state where node $a$
  is linked to node $b$ as its successor.  Its precondition can be
  written as an SRE thus:
  \[\nodenext{a}{b}\defeq
    \bullet^*\cdot \inangle{\eff{Nxt.put}\ a\ b}\cdot
    (\lneg\inangle{\eff{Nxt.put}\ a\ \_})^*\] %
  This context admits traces in which a call to
  \eff{Nxt.put} is made on key $a$ and value $b$, followed by
  subsequent events that do not include calls to \eff{Nxt.put} with
  key $a$.  Hence, the context encapsulates the intended requirement
  on symbolic states prior to calling \texttt{remove}.

  The specification in \cref{fig:motive-ex} also requires that
  \texttt{remove}, when called under the specified context, can link a
  new node other than $a$ to $b$ \emph{only} when $b$ has been
  unlinked from $a$.  The postcondition can be written as an SRE
  parameterized by $a$ and $b$ thus:
  \[\nodebidir{a}{b}\defeq
    ((\lneg\inangle{\eff{Nxt.put}\
      \cancel{a}\ b})^* \cdot\inangle{\eff{Nxt.put}\ a\
      \cancel{b}}\cdot\bullet^*)
    \vee
    (\lneg\inangle{\eff{Nxt.put}\ a\ \cancel{b}})^*
\]
The effect of \textt{remove} admits traces where no node other than
$a$ is linked to $b$ (via \lneg\inangle{\eff{Nxt.put}\ \cancel{a}\ b})
before $a$ is unlinked from $b$ (via \inangle{\eff{Nxt.put}\ a\
  \cancel{b}}), or $a$ is never unlinked from $b$ (via
\lneg\inangle{\eff{Nxt.put}\ a\ \cancel{b}}).  If an execution of
\texttt{remove} produces traces not admissible to \nodebidir{a}{b},
then we conclude that some node, as witnessed by $b$, may unexpectedly
have two predecessors at some point during the execution.  \qed
\end{example}

\begin{toappendix}
  \begin{figure}[h]
      \small
  \[\arraycolsep=4pt \begin{array}{llrl}
    \multicolumn{4}{c}{
    \text{Variable}\quad x,y,\ldots
    \qquad
    \text{Data Constructor}\quad d}
    \\
    \multicolumn{4}{c}{
    \text{Primitive Operator}\quad op
    \qquad
    \text{Effectful API of Representation Type}\quad \eff{f}\in\Delta}
    \\
    \text{Constant}                    & c          & ::=  & () \mid \mathbb{B} \mid \mathbb{Z} \mid \ldots \mid (\overline{c}) \mid d\ c                                                         \\
    \text{Value}              & v          & ::=  & c \mid x \mid (\overline{v}) \mid \funbind{x}e \mid \fixbind{f}\funbind{x}e                                          \\
    \text{Expression}         & e          & ::=  & v \mid \abort \mid \assume{\phi} \mid  e \otimes e \mid \letin{x}{\eff{f}~\overline{v}}~e \\
                                        &            & \mid & \letin{x}{op~\overline{v}} \mid \letin{x}{v~v}\ e \mid \letin{x}{e}\ e 
  \end{array}\]
\caption{Syntax of the source language.}
  \end{figure}
  \begin{figure}[h]
  \[\textsf{spec} ::= \texttt{function}\ (\overline{x})\
    \texttt{ghost}\ \overline{z}\ \texttt{require}\ \phi\
    \texttt{context}\ \R[ctx]\ \texttt{return}\ y\ \texttt{ensures}\ \psi\
    \texttt{effect}\ \R[eff]
  \]
  \judgbox{ \textsf{getHarness}(\textsf{spec}, v_f)=e }{ %
    Construct a harness for symbolic execution of $v_f$
    against \textsf{spec}. %
  }
  \begin{mathpar}
    \textsf{getHarness}(\texttt{function}\ (\overline{x})\
    \texttt{ghost}\ \overline{z}\ \texttt{require}\ \phi\
    \texttt{context}\ \R[ctx]\ \texttt{return}\ y\ \texttt{ensures}\ \psi\
    \texttt{effect}\ \R[eff], v_f)=\newline
    \letin{\overline{x}}{\overline{\?}}\ %
    \letin{\overline{z}}{\overline{\?}}\ %
    \assume{\phi}; \append{\R[ctx]};
    \letin{y}{\textsf{V}\denot{v_f}\ \overline{x}}\ %
    \assert{\psi}; \affirm{\R[ctx]\cdot\R[eff]}
  \end{mathpar}
  \judgbox{\textsf{V}\denot{v}=v}{%
    Translate values in source language to symbolic values.}
  \begin{mathpar}
    \textsf{V}\denot{c} = c \and %
    \textsf{V}\denot{x} = x \and %
    \textsf{V}\denot{(\overline{v})} =
    (\overline{\textsf{V}\denot{v}}) \\%
    \textsf{V}\denot{\funbind{x}e} = \funbind{x}\textsf{E}\denot{e}
    \and %
    \textsf{V}\denot{\fixbind{f}\funbind{x}e} =
    \fixbind{f}\funbind{x}\textsf{E}\denot{e} %
  \end{mathpar}
  \judgbox{\textsf{E}\denot{e}=e}{Translate expressions in source
    language to symbolic expressions.}
  \begin{mathpar}
    \textsf{E}\denot{v} = v \and
    \textsf{E}\denot{\abort} = \abort \and
    \textsf{E}\denot{\assume{\phi}} = \assume{\phi} \and
    \textsf{E}\denot{e_1\otimes e_2} = %
    \textsf{E}\denot{e_1} \otimes \textsf{E}\denot{e_2} \and
    \textsf{E}\denot{\letin{x}{v_f\ v}\ e} =
    \letin{x}{\textsf{V}\denot{v_f}\ \textsf{V}\denot{v}}\
    \textsf{E}\denot{e} \and %
    \textsf{E}\denot{\letin{x}{e_1}\ e_2} =
    \letin{x}{\textsf{E}\denot{e_1}}\ \textsf{E}\denot{e_2} \\
    \textsf{E}\denot{\letin{x}{op~\overline{v}}~e} = %
    \letin{x}{ %
        op~\overline{\textsf{V}\denot{v}}
    }~\textsf{E}\denot{e} \\
    \textsf{E}\denot{\letin{x}{\eff{f}~\overline{v}}~e} =
    \letin{x}{ %
      (\funbind{\overline{x}} %
      \letin{\overline{z}}{\overline{\?}}\ %
      \assert{\phi}; \admit{\R[ctx]}; \\%
      \letin{y}{\?}\ %
      \assume{\psi}; \append{\R[eff]}; y)\ \overline{v} %
    }\ \textsf{E}\denot{e} \newline\text{where the spec of \eff{f} is}\ 
    \texttt{function}\ (\overline{x})\
    \texttt{ghost}\ \overline{z}\ \texttt{require}\ \phi\
    \texttt{context}\ \R[ctx]\ \texttt{return}\ y\ \texttt{ensures}\ \psi\
    \texttt{effect}\ \R[eff]
  \end{mathpar}
  \caption{Translation semantics of the source language with respect to the specification language.}
  \end{figure}
\end{toappendix}
Similar to how path conditions are augmented by \kwd{assume}
constructs, the current contexts of executions are augmented by two
constructs, $\admit{\R[past]}$ and $\append{\R[eff]}$.  The former,
\kwd{admit}, combines the current context \R[curr] and its argument
\R[past] with conjunction, as described by Rule \textsc{Admit}; thus,
it restricts the traces of past events in \R[curr] to only those
admissible by \R[past].  In contrast, \kwd{append} concatenates the
current context \R[curr] with the argument \R[eff], as described by
Rule \textsc{Append}; thus it records new events produced during
symbolic execution.  The initial context before starting the symbolic
execution is $\varepsilon$, indicating that no event has happened yet.

Now, we illustrate that a pair of \kwd{append} and \kwd{affirm}
constructs the translation of the specification attached to an ADT
method, capturing the safety property.  Recall from
\cref{fig:motive-ex} that the specification includes three key
components, ghost variables, required context (\texttt{context}), and
expected effect (\texttt{effect}).  Intuitively, the specification
states that when being executed in a required context (with possible
reference to both ghost variables and method parameters), the method
with the specification attached should produce events in compliance
with the expected effect.  The \kwd{append} helps set up this required
context while the \kwd{affirm} is responsible for affirming that the
context upon exiting the method complies with its argument \R[post]
by conjoining the context \R[curr] with $\neg\R[post]$.  The
satisfiability of $\R[curr]\wedge\neg\R[post]$ then witnesses a
violation of \R[post] in the execution manifested by \R[curr].  To
falsify the implementation of the ADT method against its
specification, we construct a harness $e_{harness}$\footnote{See
  supplementary material for details on the translation of source
  language expressions to core language ones.} that wraps a
call to the ADT method with such a pair of \kwd{append} and
\kwd{affirm}.
\begin{example}\label{ex:harness}
  Continuing from \cref{ex:spec}, the specification of\newline
  \begin{minipage}[t]{.68\textwidth}\hyphenpenalty=10000
    \texttt{remove} is converted into a harness $e_{harness}$.  First,
    symbolic variables $a$ and $b$ are generated (by ``\?'' with the
    same name as the program variables) to denote two arbitrary nodes.
    Second, symbolic variables $hd$ and $u$ are generated (again by
    ``\?'')  to denote the input to \texttt{remove}.  Third, the
    required context \nodenext{a}{b} of \texttt{remove} is appended to
    the initial context $\varepsilon$.  Lastly, after calling
    \texttt{remove}, the postcondition of
  \end{minipage}\hfill
  \begin{minipage}[t]{0.3\textwidth}\vspace*{-1.1cm}
    \begin{align*}
      &e_{harness}\defeq \\
      &\letin{a,b}{\?[node],\?[node]} \\
      &\letin{hd,u}{\?[node],\?[elem]} \\
      &\append{\nodenext{a}{b}}; \\
      &\texttt{remove}\ hd\ u; \\
      &\affirm{\nodenext{a}{b}\cdot\nodebidir{a}{b}}
    \end{align*}  
  \end{minipage}
  the harness is affirmed to check for any violation during the
  execution of the harness.  Notably, the postcondition prepends the
  required context \nodenext{a}{b} to the expected effect
  \nodebidir{a}{b}. \qed
\end{example}
\noindent
In contrast, pair(s) of \kwd{admit} and \kwd{append} construct the
translation of the specifications attached to APIs of representation
types, providing an extensional and underapproximate model for their
behavior.  And the \kwd{admit} relates the context require by the API
with the current context by conjoining them while the following
\kwd{append} records the expected traces of events produced by the
API.
\begin{example}\label{ex:get}
  Taking the same form as the required context \nodenext{a}{b} of
  \texttt{remove} from \cref{ex:spec}, the required context of
  \eff{Nxt.get} is \nodenext{s}{t}, admitting traces where node $s$
  (the argument) is linked to node $t$ (the return value).  And the
  expected effect of the \eff{Nxt.get} is a symbolic event
  \inangle{t\gets\eff{Nxt.get}\ s}.  And thus, calls to \eff{Nxt.get}
  can be replaced by a function $e_\eff{Nxt.get}$ defined as follows:
  \[e_\eff{Nxt.get}\defeq\funbind{s}
    \letin{t}{\?[node]}\ \admit{\nodenext{s}{t}};\
    \append{\inangle{t\gets\eff{Nxt.get}\ s}};\ t\]
  Similarly, let
  $\R_{n : u}\defeq \bullet^*\cdot
  \inangle{\eff{Val.put}\ n\ u}\cdot
  (\lneg\inangle{\eff{Val.put}\ n\ \_})^*$
  denote the required context of \eff{Val.get}, where node $n$ (the
  argument) stores an element $u$ (the return value).
  Correspondingly, calls to \eff{Val.get} are replaced by a function
  $e_\eff{Val.get}$ defined as follows:
  \[e_\eff{Val.get}\defeq\funbind{n}
    \letin{u}{\?[elem]}\ \admit{\nodeelem{n}{u}};\
    \append{\inangle{u\gets\eff{Val.get}\ n}};\ u\]
  Here, the calls to \eff{get} always succeed because our goal is to
  falsify the implementation of \texttt{remove} with respect to the specified safety property. %
  \qed
\end{example}
\noindent
By replacing API calls in the direct translation of the ADT method,
\eg, $e_\texttt{remove}$ from \cref{ex:spec}, with symbolic expressions
that augment the context of execution using \kwd{admit} and
\kwd{append}, we now have an implementation $e_{remove}$ of the ADT
method \texttt{remove} that is ready to be plugged in $e_{harness}$ for symbolic
execution.
\begin{example}\label{ex:naive}
  By substituting the call to \texttt{remove} for $e_\texttt{remove}$
  (\cref{ex:spec}) with $\eff{Val.get}$ and $\eff{Nxt.get}$
  respectively substituted for $e_\eff{Val.get}$ and $e_\eff{Nxt.get}$
  (\cref{ex:get}), the harness $e_{harness}$ (\cref{ex:harness}) is
  closed and ready for symbolic execution.  Initially, the symbolic
  state is $(\true,\varepsilon,e_{harness})$.  The required context
  \nodenext{a}{b} of \texttt{remove} is first appended to $\varepsilon$.
  Following the second branch, $n_0\neq\texttt{null}$ augments the
  path condition.  As \eff{Val.get} is called on $n_0$, $n_0$
  substitutes $n$ in the body of $e_\eff{Val.get}$ and a fresh
  symbolic variable $u_0$ is generated to represent the element stored
  in $n_0$.  The symbolic state becomes
  \[(n_0\neq\texttt{null}, \nodenext{a}{b},
    \letin{u'}{
      \admit{\nodeelem{n_0}{u_0}};\
      \append{\inangle{u_0{\gets}\eff{Val.get}\ n_0}};\
      u_0}\ \dots)
  \]
  As $u_0$ is returned to the top level and substitutes $u'$, the
  current context becomes
  $\nodenext{a}{b}\wedge\nodeelem{n_0}{u_0}\cdot\inangle{u_0{\gets}\eff{Val.get}\
    n_0}$ ($\wedge$ binds SREs tighter than $\cdot$).  Following the
  nested second branch, the path condition becomes
  $n_0\neq\texttt{null}\wedge u_0\neq u$.  As we enter the
  \texttt{loop} and follow the execution path illustrated in
  \cref{sec:motiv-symb-exec}, we
  \begin{enumerate*}
  \item get the successor of $n_0$, $n_1$,
  \item get the element stored in $n_1$, $u_1$,
  \item assume $u_1=u$,
  \item get the successor of $n_1$, $n_2$, and
  \item remove $n_1$ by linking $n_0$ to $n_2$.
  \end{enumerate*}
  The symbolic state becomes
  $(\Phi[bad],\R[bad],
  \affirm{\nodenext{a}{b}\cdot\nodebidir{a}{b}})$, where
 \begin{align*}
   \Phi[bad]\defeq\ &
   n_0\neq\texttt{null}\wedge u_0\neq u\wedge
   n_1\neq\texttt{null}\wedge u_1=u \quad\text{and} \\
   \R[bad]\defeq\ &
   ((((\nodenext{a}{b}{\wedge}
   \nodeelem{n_0}{u_0}\cdot\inangle{u_0{\gets}\eff{Val.get}\ n_0})
   {\wedge}\nodenext{n_0}{n_1}\cdot\inangle{n_1{\gets}\eff{Nxt.get}\ n_0}) \\
   &{\wedge}\nodeelem{n_1}{u_1}\cdot\inangle{u_1{\gets}\eff{Val.get}\ n_1})
   {\wedge}\nodenext{n_1}{n_2}\cdot\inangle{n_2{\gets}\eff{Nxt.get}\ n_1})
   \cdot\inangle{\eff{Nxt.put}\ n_0\ n_2}
 \end{align*}
 $\R[bad]$ denotes the traces that can be produced following the
 execution path.  To show that the affirmation may fail, it is
 sufficient to find an interpretation for the symbolic variables such
 that the path condition $\Phi[bad]$ holds and there exists a trace
 included in $\R[bad]$ but excluded from the postcondition of the
 harness,
 \ie,
 $\sat{\Phi[bad],
   \R[bad]\wedge\neg(\nodenext{a}{b}\cdot\nodebidir{a}{b})}$.\qed
\end{example}

\noindent
As illustrated in \cref{ex:naive}, the size of the SRE that represents
the current context of the execution quickly blows up during symbolic
execution.  This in turn makes the symbolic affirmation check at the
end of each execution path potentially very expensive as we quantify in
\cref{sec:eval}.

To conclude this section, we lift the \kwd{affirm} check at the end of
the harness progress and regard it as a falsification query on the
harness program without the \kwd{affirm} statement.  Then
\cref{defn:naive-falsification} describes a falsification
problem in terms of this trace-based symbolic execution
framework.
\begin{definition}[Na\"{\i}ve Falsification]\label{defn:naive-falsification}
  Given a safety property \R[post].
  If $(\top, \varepsilon, e) \hookrightarrow^* (\Phi, \R[curr], v)$
  and $\sat{\Phi, \R[curr]\wedge\neg\R[post]}$
  then this execution of $e$ is falsified with respect to \R[post].
\end{definition}

\section{Symbolic Execution with Symbolic Derivatives}\label{sec:derivative}

The inefficiency of the naive semantics stems from its failure to
recognize \emph{regularity} --- the capacity of specifications to be
represented as automata structures --- during symbolic execution.  We
can exploit this regularity by underapproximating the required context
or the expected effect of method calls.  This approximation
facilitates a derivative computation, effectively emulating state
transitions in the SFAs associated with SREs.  Specifically, the
underapproximation takes the form of \emph{symbolic traces} $\Trace$,
where only a subset of operations from the SRE (with the same
denotation) are allowed: empty trace ($\varepsilon$), symbolic event
($\lit$), and concatenation ($\Trace[1]\cdot\Trace[2]$).  Then, a
derivative-based notion of symbolic state $S_\mathcal{D}$ that
underapproximates a symbolic state $S$, besides the expression $e$
under execution, is given by (i) $\Phi$ and $\Trace$ to encapsulate
the execution path that leads to $S_\mathcal{D}$; along with (ii)
$\R[cont]$ that predicts the traces allowed to be produced in the
continuation of the execution in compliance with the safety property,
dubbed \emph{continuation effect}.
\[\text{Symbolic Trace}\ \Trace\ ::=\ \varepsilon \mid \lit \mid \Trace \cdot \Trace
  \qquad
  \text{Derivative-Based Symbolic State}\ S_\mathcal{D}\ ::=\ (\Phi,\Trace,\R[cont],e)
\]
In this section, we present
\begin{enumerate*}
\item \emph{symbolic derivatives} that allow us to effectively explore
  and thus exploit the automata structures of specifications, without
  appealing to their calculation (see \cref{sec:algo}),
  and
\item a \emph{derivative-based semantics} that leverages this notion
  to facilitate symbolic execution over $S_\mathcal{D}$ as
  well as the falsification of a given safety property that is both
  sound and complete with respect to the na\"{\i}ve semantics given
  in the previous section.
\end{enumerate*}

\subsection{Symbolic Derivatives}\label{sec:symbolic-derivative}

SREs that represent the context of the current execution or the
arguments to \kwd{admit} and \kwd{append} may refer to symbolic
variables that are also constrained by path conditions, as
discussed in \cref{sec:naive}.  In what follows, we first revisit 
notions on SREs from \cref{sec:prelim} with such symbolic
variables left uninterpreted, \ie, treating symbolic variables as
abstract symbols whose interpretation is unknown.   We then define
\emph{symbolic derivatives} of such SREs, which may also refer to
symbolic variables involved in those SREs.

First, the inclusion and equivalence relationship between two SREs
$\R[1]$ and $\R[2]$ is given by \cref{defn:inclusion} and
\cref{defn:equivalence} such that the relationship holds under any
interpretation of symbolic variables involved in \R[1] and \R[2].
\begin{definition}[Inclusion]\label{defn:inclusion}
  $\R[1] \sqsubseteq \R[2]$ iff
  $\denot{\sigma(\R[1])}\subseteq\denot{\sigma(\R[2])}$ for all $\sigma$.
\end{definition}
\begin{definition}[Equivalence]\label{defn:equivalence}
  $\R[1] \equiv \R[2]$ iff
  $\denot{\sigma(\R[1])}=\denot{\sigma(\R[2])}$ for all $\sigma$.
\end{definition}
\noindent
Second, the nullable operation $\nu$ defined over SREs in
\cref{sec:prelim}, when applied to any symbolic event $\lit$, returns
false irrespective of \lit's qualifiers and any symbolic variables
involved.  Hence, the nullable operation $\nu(\R)$ determines if $\R$
accepts the empty trace $\epsilon$ regardless of the interpretation of
its symbolic variables (\cref{lem:symbolic-nullable}).

\begin{lemmarep}\label{lem:symbolic-nullable}
  $\nu(\R)$ iff $\nu(\sigma(\R))$ for all $\sigma$.
  % \footnote{All
  % proofs are deferred to the full version of this paper
  % \cite{yuanDerivativeGuidedSymbolicExecution2024}.}
  \footnote{See appendix for all proofs.}
\end{lemmarep}
\begin{proof}
  By structural induction on $\R$:
  \begin{enumerate}[label=Case , wide=0pt, font=\bfseries]
  \item $\varnothing$: %
    $\neg\nu(\varnothing)$ and
    $\neg\nu(\sigma(\varnothing))$ for all $\sigma$.
  \item $\varepsilon$: %
    $\nu(\varepsilon)$ and $\nu(\sigma(\varepsilon))$ for all $\sigma$.
  \item $\lit$: %
    $\nu(\lit)$ and $\nu(\sigma(\lit))$ for all $\sigma$.
  \item $\R^*$: %
    $\nu(\R^*)$ and $\nu(\sigma(\R^*))$ for all $\sigma$.
  \item $\R\cdot\R'$: %
    $\nu(\R\cdot\R')=\nu(\R)\wedge\nu(\R')$, which, by IH, is
    equivalent to
    $\nu(\sigma(\R))\wedge\nu(\sigma(\R'))=\nu(\sigma(\R\cdot\R'))$
    for all $\sigma$.
  \item $\neg\R$: %
    $\nu(\neg\R)=\neg\nu(\R)$, which, by IH, is equivalent to
    $\neg\nu(\sigma(\R))=\nu(\sigma(\neg\R))$ for all $\sigma$.
\item $\R\wedge\R'$: %
$\nu(\R\wedge\R')=\nu(\R)\wedge\nu(\R')$, which, by IH, is
equivalent to
$\nu(\sigma(\R))\wedge\nu(\sigma(\R'))=\nu(\sigma(\R\wedge\R'))$
for all $\sigma$.
\item $\R\vee\R'$: %
$\nu(\R\vee\R')=\nu(\R)\vee\nu(\R')$, which, by IH, is
equivalent to
$\nu(\sigma(\R))\vee\nu(\sigma(\R'))=\nu(\sigma(\R\vee\R'))$ for
all $\sigma$. \qedhere
\end{enumerate}
\end{proof}
\noindent
Third, we need to revisit the notion of prefixes of SREs.
Recall in \cref{sec:prelim}, for an SRE \R that does not involve 
symbolic variables, any concrete trace \trace can be a prefix of
\R since the derivative $\deriv{\trace}{\R}$, which contains all
concrete traces that are accepted by \R when appended to \trace, is
always well-defined.  To account for symbolic variables involved in \R, we
consider symbolic traces \Trace, which may also involve these symbolic
variables, as a consolidated form of prefixes of \R.  
\cref{defn:prefix} gives the criteria that a valid prefix \Trace of
\R has to meet.
\begin{definition}\label{defn:prefix}
$\Trace$ is a prefix of $\R$
iff there exists $\R'$ s.t. $\deriv{\trace}{\sigma(\R)}{=}\sigma(\R')$
for all $\trace{\in}\denot{\sigma(\Trace)}$ for all $\sigma$.
\end{definition}
\noindent
Intuitively, \Trace qualifies as a prefix of \R if it represents a
collection of partial runs of an SFA associated with \R.  These
partial runs must begin with the SFA's start state and end at any
arbitrary state.  Notably, the ending state does not need to be
accepting and may even be a \emph{dead state}, from which no accepting
state is accessible.  For example, recall the postcondition automaton
from \cref{fig:remove-automata}: \inangle{\eff{Nxt.put}\ \var{a}\
  \cancel{\var{b}}} and \inangle{\eff{Nxt.put}\ \cancel{\var{a}}\
  \var{b}} are both valid prefixes but their disjunction is not
because some runs end at $q_3$ while the others end at the dead state
denoted by $\varnothing$.  We dub such singleton prefixes as
\emph{next events}.

Following the definition of prefixes, we introduce symbolic
derivatives in \cref{defn:symbolic-derivative}.
\begin{definition}[Symbolic Derivative]\label{defn:symbolic-derivative}
$\Deriv{\Trace}{\R} = \R'$ iff
$\deriv{\trace}{\sigma(\R)} = \sigma(\R')$
for all $\trace \in \denot{\sigma(\Trace)}$ for all $\sigma$.
\end{definition}
\noindent
In contrast to conventional derivatives discussed in
\cref{sec:prelim}, symbolic derivatives of \R are well defined
\emph{only} over its prefixes (\cref{defn:prefix}) but not arbitrary
symbolic traces.  And the property of prefixes ensures that symbolic
derivatives can still be succinctly expressed as SREs with references
to symbolic variables if any.  Notably, \R, its prefix \Trace, and its
symbolic derivative \R' over \Trace shall interpret any referenced
symbolic variables in a consistent way;
\cref{defn:symbolic-derivative} serves as a guard against inconsistent
interpretations.

Since each valid prefix \Trace of \R establishes an equivalence class
where all \trace denoted by \Trace produce the same derivative, a
symbolic derivative \Deriv{\Trace}{\R} is not only a \emph{quotient}
but also a \emph{residual} of \R with respect to \Trace.  As noted by
\cite{prattActionLogicPure1991}, the quotient of \R contains traces
that are accepted by \R when appended to \emph{some} \trace denoted by
prefix \Trace, while the residual of \R contains traces that are
accepted by \R when appended to \emph{any} \trace denoted by prefix
\Trace.  This residuality is manisfest by \cref{cor:residuality}, \ie,
the concatenation of prefix \Trace and \Deriv{\Trace}{\R} is included
in \R itself.
\begin{corollaryrep}[Residuality]\label{cor:residuality}
Let $\R'=\Deriv{\Trace}{\R}$.
Then $\Trace\cdot\R'\sqsubseteq\R$.
\end{corollaryrep}
\begin{proof}
Assume an arbitrary interpretation $\sigma$ that closes \Trace, \R, and \R'.
By \cref{defn:symbolic-derivative},
$\deriv{\trace}{\R}=\R'$ for all $\trace \in \denot{\Trace}$.
That is
$\denot{\R'}=\{ \trace' \mid \trace \cdot \trace' \in \denot{\R} \}$
for all $\trace\in\denot{\Trace}$.
Then
$\denot{\Trace\cdot\R'}= \{ \trace\cdot\trace' \mid
\trace\in\denot{\Trace} \wedge \trace \cdot \trace'\in\denot{\R} \}
\subseteq\denot{\R}$.
By \cref{defn:inclusion}, $\Trace\cdot\R'\sqsubseteq\R$.
\end{proof}
\begin{example}\label{ex:symbolic-derivative}
Consider the expected effect $\nodebidir{a}{b}$ of \texttt{remove}
from \cref{ex:spec}, admitting traces where either no node other
than $a$ may be linked to $b$ before $b$ is unlinked from $a$, or
$a$ is linked to $b$ during the course of execution.
Its next events include \inangle{\eff{Nxt.put}\ a\ \cancel{b}},
\inangle{\eff{Nxt.put}\ \cancel{a}\ b}, and \inangle{\eff{Nxt.put}\
\var{a}\ \var{b}} \ldisj \inangle{\eff{Nxt.put}\ \cancel{\var{a}}\
\cancel{\var{b}}} \ldisj \lneg\inangle{\eff{Nxt.put}}.
Their symbolic derivatives are defined as follows,
with their respective residuality manifested:
\begin{enumerate*}
\item
\Deriv{\inangle{\eff{Nxt.put}\ a\ \cancel{b}}}
{\nodebidir{a}{b}}=$\bullet^*$
because any event is allowed once $b$ is unlinked from $a$;
\item
\Deriv{\inangle{\eff{Nxt.put}\ \cancel{a}\ b}}
{\nodebidir{a}{b}}=$\varnothing$
because it is unsafe to link node other than $a$ to $b$ with $a$ linked to $b$, and;
\item
\Deriv{\inangle{\eff{Nxt.put}\ \var{a}\ \var{b}}
\ldisj \inangle{\eff{Nxt.put}\ \cancel{\var{a}}\ \cancel{\var{b}}}
\ldisj \lneg\inangle{\eff{Nxt.put}}}
{\nodebidir{a}{b}}=\nodebidir{a}{b}
because linking $a$ to $b$ again, linking nodes other than $a$ and
$b$, or \inlinecaml{get} calls have no effect on subsequent traces
admissible by \nodebidir{a}{b}.
\end{enumerate*}
\qed
\end{example}

Recall that the nullablility of derivative \deriv{\trace}{\R}
determines if a concrete trace \trace is accepted by~\R.
Given a prefix \Trace of an SRE \R, the nullability of
symbolic derivative \Deriv{\Trace}{\R} determines, as established by
\cref{cor:symbolic-recognition}, whether the symbolic trace \Trace is
included in \R, \ie, all runs of \Trace in \R end at an accepting
state irrespective of the interpretation of symbolic variables.
\begin{corollaryrep}\label{cor:symbolic-recognition}
Let $\R'=\Deriv{\Trace}{\R}$.
Then
\begin{enumerate*}
\item $\Trace\sqsubseteq\R$ iff $\nu(\R')$ and
\item $\Trace\sqsubseteq\neg\R$ iff $\neg\nu(\R')$.
\end{enumerate*}
\end{corollaryrep}
\begin{proof}
Consider arbitrary $\R'=\Deriv{\Trace}{\R}$.
We have the following chain of equivalence.
$\Trace\sqsubseteq\R$, iff (\cref{defn:inclusion});
$\denot{\sigma(\Trace)}\subseteq\denot{\sigma(\R)}$ for all $\sigma$, iff;
$\nu(\deriv{\trace}{\sigma(\R)})$ for all $\trace\in\denot{\sigma(\Trace)}$
for all $\sigma$, iff (\cref{defn:symbolic-derivative});
$\nu(\sigma(\R'))$ for all $\sigma$, iff (\cref{lem:symbolic-nullable});
$\nu(\R')$.
Therefore, $\Trace\sqsubseteq\R$ iff $\nu(\R')$.
In addition, $\neg\R'=\Deriv{\Trace}{\neg\R}$ (\crefpart{cor:symbolic-derivative}{negation})
and $\nu(\neg\R')=\neg\nu(\R')$.
Then $\Trace\sqsubseteq\neg\R$ iff $\neg\nu(\R')$.
\end{proof}
\noindent
Therefore, by enumerating prefixes of \R, we may sample symbolic
traces included in \R.
\begin{example}\label{ex:sample-pre}
  Consider the required context \nodenext{a}{b} of \texttt{remove}
  from \cref{ex:spec}, admitting traces where $a$ is linked to $b$ and
  stays pointing to $b$.  The prefixes of \nodenext{a}{b} include
  $\inangle{\eff{Nxt.put}\ \var{a}\ \var{b}}\cdot
  (\lneg\inangle{\eff{Nxt.put}\ \var{a}\ \cancel{\var{b}}})^n$ and
  $\inangle{\eff{Nxt.put}\ \var{a}\ \var{b}}\cdot
  \inangle{\eff{Nxt.put}\ \var{a}\ \cancel{\var{b}}}\cdot
  \inangle{\eff{Nxt.put}\ \var{a}\ \var{b}}\cdot
  (\lneg\inangle{\eff{Nxt.put}\ \var{a}\ \cancel{\var{b}}})^n$ for any
  number $n$ of repetitions, which all lead to the same symbolic derivative:
  \[
    (\lneg\inangle{\eff{Nxt.put}\ a\ \cancel{b}})^*\vee
    (\inangle{\eff{Nxt.put}\ a\ \cancel{b}}\cdot\R_{a\curvearrowright b})
  \]
  admitting traces where either no subsequent event invalidates the
  link between $a$ and $b$, or $a$ is linked to $b$ again after being
  unlinked.  The symbolic derivative is nullable because its first
  disjunct is a Kleene Star.  Therefore, all these prefixes are
  included in \nodenext{a}{b}. \qed
\end{example}

\begin{toappendix}
\begin{corollaryrep}\label{cor:acceptance}
$\trace\in\denot{\sigma(\R)}$ iff
there exists $\Trace$ such that
$\trace\in\denot{\sigma(\Trace)}$ and
$\nu(\Deriv{\Trace}{\R})$.
\end{corollaryrep}
\begin{proof}
We prove two directions separately:
\begin{enumerate}
\item Let $\trace\in\denot{\sigma(\R)}$ and
$\sigma(\R')=\deriv{\trace}{\sigma(\R)}$.
Then $\nu(\sigma(\R'))$, thus $\nu(\R')$.
Let $\denot{\sigma(\Trace)}=\{\trace\}$.
Then by \cref{defn:symbolic-derivative},
$\Deriv{\Trace}{\R}=\R'$.
\item We have $\trace\in\denot{\sigma(\Trace)}$
and $\nu(\Deriv{\Trace}{\R})$.
By \cref{defn:symbolic-derivative},
$\deriv{\trace}{\sigma(\R)}=\sigma(\Deriv{\Trace}{\R})$
hence $\nu(\deriv{\trace}{\sigma(\R)})$.
That is $\sigma\in\denot{\sigma(\R)}$.
\qedhere
\end{enumerate}
\end{proof}
\begin{corollaryrep}\label{cor:symbolic-derivative}
$\R'=\Deriv{\Trace}{\R}$ if any of following conditions hold:
\begin{enumerate}[(1)]
\item\label{cancellation}
$\R\equiv\Trace\cdot\R'$;
\item\label{weakening}
there exists $\Trace'$ such that $\Trace\sqsubseteq\Trace'$
and $\R'=\Deriv{\Trace'}{\R}$;
\item\label{negation}
there exists $\R'\equiv\neg\R$ and $\R''\equiv\neg\R'$
such that $\R''=\Deriv{\Trace}{\R'}$, and;
\item\label{concatenation}
there exists $\Trace=\Trace_1\cdot\Trace_2$,
and $\R_1=\Deriv{\Trace_1}{\R}$
such that $\R'=\Deriv{\Trace_2}{\R_1}$.
\end{enumerate}
\end{corollaryrep}
\begin{proof}
Consider arbitrary $\sigma$ and $\trace\in\denot{\sigma(\Trace)}$.
By \cref{defn:symbolic-derivative}, to show
$\R'=\Deriv{\Trace}{\R}$, it is sufficient to prove
$\deriv{\trace}{\sigma(\R)}=\sigma(\R')$, %
which we derive from each condition:
\begin{enumerate}[(1)]
\item By assumption, $\R\equiv\Trace\cdot\R'$. %
By \cref{defn:equivalence},
$\denot{\sigma(\R)}= \denot{\sigma(\Trace\cdot\R')}=
\denot{\sigma(\Trace)}\cdot\denot{\sigma(\R')}$.  That is,
$\deriv{\trace}{\sigma(\R)}=\sigma(\R')$.
\item By assumption, $\Trace\sqsubseteq\Trace'$ and
$\R'=\Deriv{\Trace'}{\R}$. %
Consider arbitrary $\sigma$ and $\trace\in\denot{\sigma(\Trace)}$.
By \cref{defn:inclusion}, $\trace\in\denot{\sigma(\Trace')}$. %
By \cref{defn:symbolic-derivative},
$\deriv{\trace}{\sigma(\R)}=\sigma(\R')$. %
\item By assumption, $\R'\equiv\neg\R$, $\R''\equiv\neg\R'$, and
$\R''=\Deriv{\Trace}{\R'}$. %
By \cref{defn:equivalence},
$\denot{\sigma(\R')}=\denot{\neg\sigma(\R)}$ and
$\denot{\sigma(\R'')}=\denot{\neg\sigma(\R')}$. %
By \cref{defn:symbolic-derivative},
$\deriv{\trace}{\sigma(\R')}=\sigma(\R'')$. %
Then $\deriv{\trace}{\neg\sigma(\R)}=\neg\sigma(\R')$. %
That is $\deriv{\trace}{\sigma(\R)}=\sigma(\R')$.
\item By assumption, $\Trace=\Trace_1\cdot\Trace_2$,
$\R_1=\Deriv{\Trace_1}{\R}$, and $\R'=\Deriv{\Trace_2}{\R_1}$.
Let $\trace=\trace_1\cdot\trace_2$ such that
$\trace_1\in\denot{\sigma(\Trace_1)}$ and
$\trace_2\in\denot{\sigma(\Trace_2)}$. %
By \cref{defn:symbolic-derivative},
$\deriv{\trace_1}{\sigma(\R)} = \sigma(\R_1)$
$\deriv{\trace_2}{\sigma(\R_1)} = \sigma(\R')$. %
That is $\deriv{\trace}{\sigma(\R)}=\sigma(\R')$.
\end{enumerate}
\end{proof}
\end{toappendix}

\subsection{Derivative-based Semantics}

\begin{toappendix}
\begin{figure}[h]
\include{deriv-semantics-rules}
\caption{Derivative-based semantics.}
\end{figure}
\end{toappendix}

\begin{figure}[tb]
\begin{mathpar}
\Infer{DAdmit}{
\Trace[past]\sqsubseteq\R[past] \\
\Trace' \equiv \Trace \wedge \Trace[past] \\
}{
\left( \Phi, \Trace, \R[cont], \admit{\R[past]} \right) \hookrightarrowd
\left( \Phi, \Trace', \R[cont], () \right)
}

\Infer{DAppend}{
\Trace[eff]\sqsubseteq\R[eff] \\
\Trace[new] \equiv \Trace[eff] \wedge \Trace[prefix] \\
}{
\left(\Phi, \Trace, \R[cont], \append{\R[eff]} \right) \hookrightarrowd
\left(\Phi, \Trace\cdot\Trace[new], \Deriv{\Trace[prefix]}{\R[cont]}, ()\right)
}
\end{mathpar}
\caption{Selected rules of derivatve-based semantics.}
\label{fig:deriv-semantics}
\end{figure}

Now, we facilitate symbolic execution with symbolic derivatives.  As
\admit{\R[past]} and \append{\R[eff]} are the only two constructs
that augment contexts in symbolic states, we only present their
reduction rules in \cref{fig:deriv-semantics}, exhibiting the complete
set of rules in the supplementary material.  In contrast to the
na\"{\i}ve semantics, a derivative-based semantics begins symbolic
execution with the postcondition, denoted by \R[post].  Recall from
\cref{sec:naive}, \R[post] is the concatenation of the required
context and the expected effect attached to the ADT method to be
falsified.  Effectively, \R[post] predicts that the context will be
set up before calling the method, and the execution of the method
complies with its specified effect.  During symbolic execution,
we maintain the continuation effect \R[cont]
such that it precisely predicts the safe traces to be produced as 
execution continues.
\begin{example}\label{ex:init}
Consider the harness program $e_{harness}$ from \cref{ex:harness}.
Regarding the trailing \kwd{affirm} as a postcondition to be
affirmed upon finishing each execution path, we assume some symbolic
variables $a$ and $b$ and discharge \kwd{affirm} from $e_{harness}$
as part of the symbolic state.  In a derivative-based semantics, the
initial symbolic state is
\[
(\true, \varepsilon, \nodenext{a}{b}\cdot\nodebidir{a}{b},
\letin{hd,u}{\?[node],\?[elem]}\
\append{\nodenext{a}{b}};\
\texttt{remove}\ hd\ u) %
\]
where $hd$ and $u$ will then immediately be replaced by symbolic variables with the
same name for demonstration's purposes.
\qed
\end{example}

Rule \textsc{DAdmit} describes the semantics of \admit{\R[past]}.
Given a symbolic trace \Trace, a sequence of symbolic events, as an
underapproximation of what has happened so far following the current
execution, \admit{\R[past]} imposes constraint on \Trace, also in an
underapproximated fashion.  The underapproximation of \R[past] can be
found by sampling symbolic traces $\Trace[past]\sqsubseteq\R[past]$
via symbolic derivatives.  Then the execution is forked on each
\Trace[past] and its conjunction $\Trace'$ with \Trace.  
Intuitively, the conjunction $\Trace'$ is the pairwise conjunction
of events in \Trace and \Trace[past] 
(see \cref{sec:algo} for details).  A
straightforward pruning strategy then is to discard \Trace[past] with
\begin{enumerate*}
\item a different number of events than \Trace or
\item an event associated with a different effectful function than 
the corresponding event in \Trace.
\end{enumerate*}
In both cases, the conjunction $\Trace'$ trivially denotes an empty set.
We describe such \Trace[past] as \emph{incompatible} with \Trace.
Note that we deliberately exclude from the compatibility check the
consistency check between qualifiers of paired symbolic events to
avoid generating an excessive number of SMT queries.

\begin{example}\label{ex:admit-library-pre}
Consider the naive symbolic state prior to calling \eff{Val.get} on
$n_0$ from \cref{ex:naive},
$(n_0\neq\texttt{null},\nodenext{a}{b},
\letin{u'}{
\admit{\nodeelem{n_0}{u_0}};\
\append{\inangle{u_0{\gets}\eff{Val.get}\ n_0}};\
u_0}\ \dots)$.
A derivative-based symbolic state that underapproximates this naive
state is
$(n_0\neq\texttt{null},\nodenextunder{a}{b}, \nodebidir{a}{b}, \dots)$,
where
$\nodenextunder{a}{b}\defeq
\inangle{\eff{Nxt.put}\ \var{a}\ \var{b}}\cdot
(\lneg\inangle{\eff{Nxt.put}\ \var{a}\ \cancel{\var{b}}})^3
\sqsubseteq\nodenext{a}{b}$
as shown in \cref{ex:sample-pre}, and \nodebidir{a}{b} is the
continuation effect prior to the call as we will discuss shortly in
\cref{ex:append-client-pre}.
The \admit{\nodeelem{n_0}{u_0}} operation enforces the required context of the call to
\eff{Val.get}.  A symbolic trace that underapproximates
\nodeelem{n_0}{u_0} and is compatible with the current context
\nodenextunder{a}{b} is
$\nodeelemunder{n_0}{u_0}\defeq
\lneg\inangle{\eff{Val.put}\ n_0\ u_0}\cdot
\inangle{\eff{Val.put}\ n_0\ u_0}\cdot
(\lneg\inangle{\eff{Val.put}\ n_0\ \cancel{u_0}})^2$.
Augmented by \nodeelemunder{n_0}{u_0}, the context in the symbolic
state then becomes
$\inangle{\eff{Nxt.put}\ \var{a}\ \var{b}}\cdot
\inangle{\eff{Val.put}\ n_0\ u_0}\cdot
(\lneg\inangle{\eff{Val.put}\ n_0\ \cancel{u_0}}\lconj
\lneg\inangle{\eff{Nxt.put}\ \var{a}\ \cancel{\var{b}}})^2$.
\qed  
\end{example}

Rule \textsc{DAppend} describes the semantics of \append{\R[eff]},
where new events are to be appended to the current symbolic trace
\Trace.  First, we underapproximate events to be produced by
\append{\R[eff]}, again by sampling symbolic traces
$\Trace[eff]\sqsubseteq\R[eff]$.  Second, we enumerate prefixes
\Trace[prefix] of \R[cont] that are compatible with \Trace[eff], along
with the symbolic derivative \Deriv{\Trace[prefix]}{\R[cont]}.  The
execution can be forked for each pair of \Trace[eff] and
\Trace[prefix].  Recall that \R[cont] imposes constraints on the
events produced during symbolic execution, including those produced by
\kwd{append}.  As long as the behavior of \kwd{append}, in this case,
the underapproximation \Trace[eff], complies with the constraints
imposed by \Trace[prefix], \Deriv{\Trace[prefix]}{\R[cont]} represents
the constraint on events to be produced after \kwd{append} and thus
can safely replace \R[cont] in the next symbolic state.  To enforce
this compliance, we take the conjunction of \Trace[eff] and
\Trace[prefix] and append the result \Trace[new] to the current
symbolic path \Trace.  Effectively, we relate an underapproximated
behavior of \kwd{append} with the postcondition of the method to be
falsified, from which \R[cont] is derived, and track this relation in
the symbolic state.

\begin{example}\label{ex:append-client-pre}
Consider the initial symbolic state from \cref{ex:init}.  The
\kwd{append} construct requires the traces of past events to be
admissible to its argument \nodenext{a}{b} before calling
\texttt{remove}.  Then \texttt{remove} can be called in a context
represented by any symbolic trace
$\nodenextunder{a}{b}\sqsubseteq\nodenext{a}{b}$.  Furthermore, each
such \nodenextunder{a}{b} is also a prefix of the postcondition
$\nodenext{a}{b}\cdot\nodebidir{a}{b}$.  The symbolic derivative
$\Deriv[r]{\nodenextunder{a}{b}}{\nodenext{a}{b}\cdot\nodebidir{a}{b}}=
\nodebidir{a}{b}$
becomes the continuation effect after evaluating the \kwd{append} operation.
The symbolic state prior to calling \texttt{remove} is
$(\true,\varepsilon,\nodebidir{a}{b},e_\texttt{remove}\ n_0\ u)$.
\qed
\end{example}

\begin{example}\label{ex:append-library-post}
Continuing from \cref{ex:admit-library-pre}, the symbolic state
after reducing the \kwd{admit} is
$(n_0\neq\texttt{null},
\nodenextunder{a}{b}\wedge\nodeelemunder{n_0}{u_0},
\nodebidir{a}{b}, \append{\inangle{u_0{\gets}\eff{Val.get}\ n_0}})$.
The \kwd{append} construct records the call to \eff{Val.get} and
appends a singleton event \inangle{u_0{\gets}\eff{Val.get}\ n_0} to
the context.  Correspondingly, the continuation effect
\nodebidir{a}{b} is updated by its symbolic derivative over the
\eff{Val.get} event, \Deriv{\inangle{u_0{\gets}\eff{Val.get}\
n_0}}{\nodebidir{a}{b}}, which is \nodebidir{a}{b} itself as shown in
\cref{ex:symbolic-derivative}.  \qed
\end{example}

Now we leverage derivative-based semantics to falsify program $e$ with
respect to the postcondition \R[post] on the symbolic trace produced
by $e$ and show that the falsification is sound.  Consider an
execution that is recorded by a
reduction from the initial symbolic state to some final symbolic
state,
$(\true,\varepsilon,\R[post],e)\hookrightarrowd^*(\Phi,\Trace,\R[cont],v)$.
This execution is falsified by \R[post] if the final state is
reachable, \ie, $\sat{\Phi,\Trace}$, and its continuation effect
is not nullable, \ie, $\neg\nu(\R[cont])$.  The soundness of
the falsification relies on two key properties of a derivative-based
semantics: \textcircled{1} the continuation effect is properly updated
to denote future traces that are safe to
produce as the execution continues, as established by
\cref{lem:sensible-deriv-singlestep}.
\begin{lemmarep}\label{lem:sensible-deriv-singlestep}
If $\R[cont]{=}\Deriv{\Trace}{\R[post]}$
and $(\Phi,\Trace,\R[cont],e){\hookrightarrowd}(\Phi',\Trace',\R[cont]',e')$
then $\R[cont]'{=}\Deriv{\Trace'}{\R[post]}$.
\end{lemmarep}
\begin{proof}
By assumption, $\R[cont]=\Deriv{\Trace}{\R[post]}$.
By rule induction on
$(\Phi, \Trace, \R[cont], e)\hookrightarrowd(\Phi', \Trace',\allowbreak \R[cont]', e')$,
the cases where $\Trace\neq\Trace'$ and/or $\R[cont]\neq\R[cont]'$ are:
\begin{enumerate}[label=Rule , wide=0pt, font=\bfseries]
\item\textsc{DAdmit}.
By assumption, $\R[cont]=\R[cont]'$
and $\Trace'\equiv\Trace\wedge\Trace[past]$.
By \cref{defn:inclusion}, $\Trace'\sqsubseteq\Trace$.
By \crefpart{cor:symbolic-derivative}{weakening},
$\R[cont]'=\Deriv{\Trace'}{\R[post]}$.
\item\textsc{DAppend}.
By assumption,
$\Trace'=\Trace\cdot\Trace[new]$,
$\Trace[new]\equiv\Trace[eff]\wedge\Trace[prefix]$, and
$\R[cont]'=\Deriv{\Trace[prefix]}{\R[cont]}$.
By \cref{defn:inclusion},
$\Trace[new]\sqsubseteq\Trace[prefix]$.
By \crefpart{cor:symbolic-derivative}{concatenation},
$\R[cont]'
=\Deriv{\Trace[new]}{\R[cont]}
=\Deriv{\Trace[new]}{\Deriv{\Trace}{\R[post]}}
=\Deriv{\Trace'}{\R[post]}$.
\item\textsc{DLetExp}.
By assumption, $e=\letin{x}{e_1}\ e_2$, $e'=\letin{x}{e_1'}\ e_2$,
and $(\Phi,\Trace,\R[cont],e_1)\hookrightarrowd(\Phi',\Trace',\R[cont]',e_1')$.
By IH, $\R[cont]'=\Deriv{\Trace'}{\R[post]}$.
\qedhere
\end{enumerate}
\end{proof}
\begin{toappendix}
\begin{corollary}\label{cor:sensible-deriv-multistep}
If $\R[cont]{=}\Deriv{\Trace}{\R[post]}$ and
$(\Phi,\Trace,\R[cont],e) {\hookrightarrowd^*}
  (\Phi',\Trace',\R[cont]',e')$ then
  $\R[cont]'{=}\Deriv{\Trace'}{\R[post]}$.
\end{corollary}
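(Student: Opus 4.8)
The plan is to prove this corollary by induction on the number of steps in the reduction sequence $(\Phi,\Trace,\R[cont],e)\hookrightarrowd^*(\Phi',\Trace',\R[cont]',e')$, treating the equality $\R[cont]{=}\Deriv{\Trace}{\R[post]}$ as an invariant that is threaded through each reduction step. The single-step result (\cref{lem:sensible-deriv-singlestep}) already does all the real work: it states precisely that one reduction step carries a state satisfying this invariant to another state that still satisfies it, reasoning rule by rule via the definition of symbolic derivatives and the properties collected in \cref{cor:symbolic-derivative}. The corollary is then the routine observation that an invariant preserved by a single step is preserved by any finite composition of steps.

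For the base case of zero steps, the initial and final states coincide, so $\Trace'=\Trace$ and $\R[cont]'=\R[cont]$, and the hypothesis $\R[cont]{=}\Deriv{\Trace}{\R[post]}$ immediately gives $\R[cont]'{=}\Deriv{\Trace'}{\R[post]}$. For the inductive step, I would decompose an $(n{+}1)$-step reduction as an $n$-step reduction $(\Phi,\Trace,\R[cont],e)\hookrightarrowd^*(\Phi'',\Trace'',\R[cont]'',e'')$ followed by a single step $(\Phi'',\Trace'',\R[cont]'',e'')\hookrightarrowd(\Phi',\Trace',\R[cont]',e')$. Applying the induction hypothesis to the $n$-step prefix, using the corollary's own premise $\R[cont]{=}\Deriv{\Trace}{\R[post]}$, yields $\R[cont]''{=}\Deriv{\Trace''}{\R[post]}$. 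This is exactly the premise needed to invoke \cref{lem:sensible-deriv-singlestep} on the final step, which concludes $\R[cont]'{=}\Deriv{\Trace'}{\R[post]}$.

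Because the single-step lemma already handles every reduction rule, there is no genuine obstacle here; the only point requiring care is that the invariant carried forward into the final step must be the intermediate equality $\R[cont]''{=}\Deriv{\Trace''}{\R[post]}$ rather than the original one, so the induction hypothesis has to be available for a general starting state satisfying the invariant. This is precisely how the corollary is phrased, which is what makes the single-step-to-multi-step lifting go through cleanly.
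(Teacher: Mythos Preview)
Your proposal is correct and takes essentially the same approach as the paper: both lift \cref{lem:sensible-deriv-singlestep} to the multi-step case by a routine induction on the reduction sequence. The only cosmetic difference is that the paper structures its induction on the transitive closure (base case a single step, inductive case splitting into two multi-step segments and applying the IH twice), whereas you induct on the number of steps with a zero-step base and peel off the last step; both are standard and equivalent.
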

\begin{proof}
  By assumption, $\R[cont]=\Deriv{\Trace}{\R[post]}$. %
  By induction on transitive closure of %
  $(\Phi,\Trace,\R[cont],e) {\hookrightarrowd^*}\allowbreak
  (\Phi',\Trace',\R[cont]',e')$.
  \begin{enumerate}
  \item Base Case: Follows directly from
    \cref{lem:sensible-deriv-singlestep}.
  \item Inductive Case: %
    By assumption,
    $(\Phi,\Trace,\R[cont],e) \hookrightarrowd^*
    (\Phi^\circ,\Trace^\circ,\R[cont]^\circ,e^\circ)$ and
    $(\Phi^\circ,\Trace^\circ,\R[cont]^\circ,e^\circ)\allowbreak
    \hookrightarrowd^* (\Phi',\Trace',\R[cont]',e')$. %
    By IH, $\R[cont]^\circ=\Deriv{\Trace^\circ}{\R[post]}$
    and $\R[cont]'=\Deriv{\Trace'}{\R[post]}$.
  \end{enumerate}
\end{proof}
\end{toappendix}

\begin{toappendix}
\begin{lemma}\label{lem:decreasing-singlestep}
  If $(\Phi,\Trace,\R,e) \hookrightarrowd (\Phi',\Trace',\R',e')$
  then $\denot{\Phi'}\subseteq\denot{\Phi}$.
\end{lemma}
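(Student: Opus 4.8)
The plan is to proceed by rule induction on the derivation of the single reduction step $(\Phi,\Trace,\R,e) \hookrightarrowd (\Phi',\Trace',\R',e')$, using the rules of the derivative-based semantics. The key observation driving the argument is that the path-condition component is altered in exactly one base rule, namely \textsc{DAssume}; every other base rule (\textsc{DGenSym}, \textsc{DAbortProp}, \textsc{DAdmit}, \textsc{DAppend}, \textsc{DChoice}, \textsc{DLetAppFun}, \textsc{DLetAppFix}, and \textsc{DLetVal}) leaves $\Phi$ untouched, so that $\Phi' = \Phi$ and the inclusion $\denot{\Phi'} \subseteq \denot{\Phi}$ holds trivially by reflexivity. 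First I would dispatch all of these reflexive cases in one stroke after confirming, by inspection of the rule set, that none of them mentions $\Phi$ in a position other than as an unchanged carry-over.

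For the \textsc{DAssume} case we have $\Phi' = \Phi \wedge \phi$. Here I would appeal to \cref{defn:denote-formula} together with the denotation of conjunction from the underlying Boolean algebra: $\denot{\Phi \wedge \phi} = \{\sigma \mid \sigma(\Phi \wedge \phi)\} = \{\sigma \mid \sigma(\Phi)\ \text{and}\ \sigma(\phi)\} = \denot{\Phi} \cap \denot{\phi} \subseteq \denot{\Phi}$, which is precisely the desired inclusion. This is the only base case carrying any real content, and it merely records that an \kwd{assume} can only strengthen, never weaken, the path condition.

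The single inductive rule is \textsc{DLetExp}, which reduces a bound subexpression $e_1$ while dragging the enclosing let-binding along unchanged. Its premise is itself a reduction $(\Phi, \Trace, \R, e_1) \hookrightarrowd (\Phi', \Trace', \R', e_1')$, carrying the very same $\Phi$ and $\Phi'$ that appear in the conclusion. Applying the induction hypothesis to this premise yields $\denot{\Phi'} \subseteq \denot{\Phi}$ directly, discharging the case with no further work.

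I do not anticipate a genuine obstacle: the lemma simply certifies that the path condition evolves monotonically under reduction, and the whole proof is a routine enumeration of the semantic rules. The only two points demanding care are to verify the complete list of rules that do not touch $\Phi$ (so that the reflexive cases are not overlooked) and to invoke the intersection semantics of conjunction in the \textsc{DAssume} case; both are mechanical.
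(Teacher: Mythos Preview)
Your proposal is correct and follows essentially the same approach as the paper's proof: rule induction on the derivation, observing that \textsc{DAssume} is the only rule that modifies $\Phi$, and appealing to \cref{defn:denote-formula} for that case. Your treatment is in fact slightly more explicit than the paper's, which folds the \textsc{DLetExp} case into the blanket observation without spelling out the appeal to the induction hypothesis.
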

\begin{proof}
  By rule induction on
  $(\Phi,\Trace,\R,e) \hookrightarrowd (\Phi',\Trace',\R',e')$, the
  only case where $\Phi\neq\Phi'$ is Rule \textsc{DAssume}.  By
  assumption, $\Phi'=\Phi\wedge\phi$.  By \cref{defn:denote-formula},
  $\denot{\Phi'}\subseteq\denot{\Phi}$.
\end{proof}
\begin{corollary}\label{cor:decreasing-multistep}
  If $(\Phi,\Trace,\R,e) \hookrightarrowd^* (\Phi',\Trace',\R',e')$
  then $\denot{\Phi'}\subseteq\denot{\Phi}$.
\end{corollary}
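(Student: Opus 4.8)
The plan is to lift the single-step \cref{lem:decreasing-singlestep} to the reflexive-transitive closure $\hookrightarrowd^*$ by a routine induction on the length of the reduction sequence, exploiting only that set inclusion $\subseteq$ is transitive. Since \cref{lem:decreasing-singlestep} already establishes that no individual $\hookrightarrowd$ step enlarges the denotation of the path condition (the sole rule that touches $\Phi$ is \textsc{DAssume}, which conjoins an extra constraint and hence can only shrink $\denot{\Phi}$), the multi-step statement is just the observation that a chain of shrinking sets shrinks overall.

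Concretely, I would induct on the derivation of $(\Phi,\Trace,\R,e) \hookrightarrowd^* (\Phi',\Trace',\R',e')$. In the base case the reduction takes zero steps, so $\Phi' = \Phi$ and therefore $\denot{\Phi'} = \denot{\Phi} \subseteq \denot{\Phi}$ holds trivially. In the inductive case the reduction decomposes as a single step $(\Phi,\Trace,\R,e) \hookrightarrowd (\Phi^\circ,\Trace^\circ,\R^\circ,e^\circ)$ followed by a strictly shorter reduction $(\Phi^\circ,\Trace^\circ,\R^\circ,e^\circ) \hookrightarrowd^* (\Phi',\Trace',\R',e')$. Applying \cref{lem:decreasing-singlestep} to the first step yields $\denot{\Phi^\circ} \subseteq \denot{\Phi}$, and applying the induction hypothesis to the remaining reduction yields $\denot{\Phi'} \subseteq \denot{\Phi^\circ}$. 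Composing these two inclusions by transitivity of $\subseteq$ gives $\denot{\Phi'} \subseteq \denot{\Phi}$, as required.

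I do not anticipate any genuine obstacle: the claim is purely the transitive-closure generalization of an already-established single-step fact, so the substance of the argument was discharged in \cref{lem:decreasing-singlestep}. The only point meriting minor care is aligning the induction scheme with the intended reading of $\hookrightarrowd^*$ as the reflexive-transitive closure, so that the zero-step base case is kept separate from the inductive step that peels off one $\hookrightarrowd$ transition. Whichever end one peels from (the first step or the last step), \cref{lem:decreasing-singlestep} supplies the single-step inclusion and the induction hypothesis supplies the inclusion for the remainder, and transitivity closes the gap either way. This mirrors exactly the structure used for \cref{cor:sensible-deriv-multistep}, with inclusion of denotations playing the role there played by equality of continuation effects.
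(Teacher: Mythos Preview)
Your proposal is correct and takes essentially the same approach as the paper: induction on the reduction sequence, invoking \cref{lem:decreasing-singlestep} for the single-step inclusion and closing by transitivity of $\subseteq$. The only cosmetic difference is that the paper takes the single-step case as its base (appealing directly to the lemma) rather than the zero-step reflexive case, but this is an inessential variation in how one structures the induction over the closure.
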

\begin{proof}
  By induction on transitive closure of
  $(\Phi,\Trace,\R,e) \hookrightarrowd (\Phi',\Trace',\R',e')$. %
  The base case follows directly from
  \cref{lem:decreasing-singlestep}.  In inductive case,
  $(\Phi,\Trace,\R,e) \hookrightarrowd
  (\Phi^\circ,\Trace^\circ,\R^\circ,e^\circ)$ and
  $(\Phi^\circ,\Trace^\circ,\R^\circ,e^\circ) \hookrightarrowd
  (\Phi',\Trace',\R',e')$.  By IH,
  $\denot{\Phi'}\subseteq\denot{\Phi^\circ}\subseteq\denot{\Phi}$.
\end{proof}
\end{toappendix}
\noindent
That is, \R[cont] in the final symbolic state, given that \Trace
records past events, correctly predicts future events to be produced
in compliance with the postcondition \R[post], \ie,
$\R[cont]=\Deriv{\Trace}{\R[post]}$.  Then $\neg\nu(\R[cont])$
suggests that without new events being produced, \Trace fails to
comply with \R[post], \ie, $\Trace\sqsubseteq\neg\R[post]$ by
\cref{cor:symbolic-recognition}.  Because the execution stops at value
$v$ and no more events are to be produced, the execution is indeed
falsified by \R[post].  \textcircled{2} Execution, 
including a falsified one,
underapproximate those of the non-derivative based na\"{\i}ve semantics, as established by
\cref{lem:monotonicity-singlestep}.
\begin{lemmarep}\label{lem:monotonicity-singlestep}
  If $\Trace \sqsubseteq \R[curr]$ and
  $(\Phi, \Trace, \R[cont], e) \hookrightarrowd (\Phi', \Trace',
  \R[cont]', e')$ then there exists $\R[curr]'$ such that
  $\Trace' \sqsubseteq \R[curr]'$ and
  $(\Phi, \R[curr], e) \hookrightarrow (\Phi', \R[curr]', e')$.
\end{lemmarep}
\begin{proof}
  By assumption, $\Trace\sqsubseteq\R[curr]$.
  By rule induction on 
  $(\Phi, \Trace, \R[cont], e) \hookrightarrowd (\Phi', \Trace', \R[cont]', e')$,
  the cases where $\Trace\neq\Trace'$ and/or $\R[cont]\neq\R[cont]'$ are:
  \begin{enumerate}[label=Rule , wide=0pt, font=\bfseries]
  \item\textsc{DAdmit}.
    By assumption,
    $e=\admit{\R[past]}$, $e'=()$,
    $\Phi=\Phi'$, $\R[cont]=\R[cont]'$,
    $\Trace[past]\sqsubseteq\R[past]$, and
    $\Trace'\equiv\Trace\wedge\Trace[past]$.
    Let $\R[curr]'=\R[curr]\wedge\Trace[past]$.
    Then by \cref{defn:inclusion}, $\Trace'\sqsubseteq\R[curr]'$.
    By Rule \textsc{Admit},
    $(\Phi,\R[curr],\admit{\R[past]})\hookrightarrow(\Phi,\R[curr]',())$.
    $\R[curr]'$ is a satisfying witness to the conclusion.
  \item\textsc{DAppend}.
    By assumption,
    $e=\append{\R[eff]}$, $e'=()$,
    $\Phi=\Phi'$, $\R[cont]'=\Deriv{\Trace[prefix]}{\R[cont]}$,
    $\Trace[eff]\sqsubseteq\R[eff]$,
    $\Trace[new]\equiv\Trace[eff]\wedge\Trace[prefix]$, and
    $\Trace'=\Trace\cdot\Trace[new]$.
    By \cref{defn:inclusion},
    $\Trace[new]\sqsubseteq\Trace[eff]\sqsubseteq\R[eff]$.
    Let $\R[curr]'=\R[curr]\cdot\R[eff]$.
    Then again by \cref{defn:inclusion},
    $\Trace'\sqsubseteq\R[curr]'$.
    By Rule \textsc{Append},
    $(\Phi,\R[curr],\append{\R[eff]})\hookrightarrow(\Phi,\R[curr]', e')$.
    $\R[curr]'$ is a satisfying witness to the conclusion.
  \item\textsc{DLetExp}.
    By assumption, $e=\letin{x}{e_1}\ e_2$, $e'=\letin{x}{e_1'}\ e_2$,
    and $(\Phi,\Trace,\R[cont],e_1)\hookrightarrowd(\Phi',\Trace',\R[cont]',e_1')$.
    By IH, there exists $\R[curr]'$ such that
    $\Trace'\sqsubseteq\R[curr]'$ and
    $(\Phi,\R[curr],e_1)\hookrightarrow(\Phi',\R[curr]',e_1')$.
    By Rule \textsc{LetExp},
    $(\Phi,\R[curr],e)\hookrightarrow(\Phi',\R[curr]',e')$.
    $\R[curr]'$ is a satsifying witness to the conclusion.
    \qedhere
  \end{enumerate}
\end{proof}
\begin{toappendix}
\begin{corollary}\label{cor:monotonicity-multistep}
  If $\Trace \sqsubseteq \R[curr]$ and
  $(\Phi, \Trace, \R[cont], e) \hookrightarrowd^* (\Phi', \Trace', \R[cont]', e')$
  then there exists $\R[curr]'$ such that
  $\Trace' \sqsubseteq \R[curr]'$ and
  $(\Phi, \R[curr], e) \hookrightarrow^* (\Phi', \R[curr]', e')$.
\end{corollary}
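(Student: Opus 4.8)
The plan is to prove this by induction on the length of the derivative-based reduction sequence $(\Phi,\Trace,\R[cont],e)\hookrightarrowd^*(\Phi',\Trace',\R[cont]',e')$, using the single-step result \cref{lem:monotonicity-singlestep} as the workhorse for each individual step. Since the statement is existential (``there exists $\R[curr]'$''), before starting I would make sure the induction hypothesis is \emph{generalized over the initial context} $\R[curr]$: the claim proved by induction is that for \emph{every} $\R[curr]$ with $\Trace\sqsubseteq\R[curr]$ a suitable $\R[curr]'$ exists. This generalization is precisely what lets me re-invoke the hypothesis with a freshly produced witness partway through the sequence.

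For the base case (zero steps) the source and target coincide, so $\Phi'=\Phi$, $\Trace'=\Trace$, and $e'=e$; I would simply pick $\R[curr]'=\R[curr]$, for which $\Trace'=\Trace\sqsubseteq\R[curr]=\R[curr]'$ holds by assumption, and $(\Phi,\R[curr],e)\hookrightarrow^*(\Phi,\R[curr],e)$ holds reflexively. This mirrors the base cases of \cref{cor:sensible-deriv-multistep} and the analogous closure corollaries.

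For the inductive case I would split the sequence at the first step, $(\Phi,\Trace,\R[cont],e)\hookrightarrowd(\Phi^\circ,\Trace^\circ,\R[cont]^\circ,e^\circ)\hookrightarrowd^*(\Phi',\Trace',\R[cont]',e')$. Applying \cref{lem:monotonicity-singlestep} to the first step together with $\Trace\sqsubseteq\R[curr]$ yields some $\R[curr]^\circ$ with $\Trace^\circ\sqsubseteq\R[curr]^\circ$ and a matching naive step $(\Phi,\R[curr],e)\hookrightarrow(\Phi^\circ,\R[curr]^\circ,e^\circ)$. Crucially, the single-step lemma leaves the path condition and the residual expression synchronized between the two semantics, so the $\Phi^\circ$ and $e^\circ$ produced here are exactly those appearing in the derivative reduction; this lets me feed $(\Phi^\circ,\Trace^\circ,\R[cont]^\circ,e^\circ)\hookrightarrowd^*(\Phi',\Trace',\R[cont]',e')$ together with $\Trace^\circ\sqsubseteq\R[curr]^\circ$ into the (generalized) induction hypothesis. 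The hypothesis then supplies $\R[curr]'$ with $\Trace'\sqsubseteq\R[curr]'$ and $(\Phi^\circ,\R[curr]^\circ,e^\circ)\hookrightarrow^*(\Phi',\R[curr]',e')$; concatenating the two naive reductions gives $(\Phi,\R[curr],e)\hookrightarrow^*(\Phi',\R[curr]',e')$, as required.

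I expect the only real subtlety, rather than a genuine obstacle, to be the bookkeeping around the existential witness: the $\R[curr]^\circ$ returned by \cref{lem:monotonicity-singlestep} is not canonical, so the induction must be arranged so that the hypothesis is available for an \emph{arbitrary} starting context, and the chaining step must confirm that the intermediate $\Phi^\circ$ and $e^\circ$ agree across the two semantics (which they do, because both \cref{lem:monotonicity-singlestep} and the derivative rules keep $\Phi$ and $e$ in lockstep). No new reasoning about derivatives or inclusions is needed beyond what \cref{lem:monotonicity-singlestep} and \cref{defn:inclusion} already provide; the corollary is essentially a reflexive-transitive closure argument layered on top of the single-step lemma.
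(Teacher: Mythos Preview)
Your proposal is correct and follows essentially the same approach as the paper: induction on the transitive closure, using \cref{lem:monotonicity-singlestep} as the per-step workhorse and chaining the resulting naive reductions. The only cosmetic difference is the decomposition---you take a zero-step base case and peel off the \emph{first} step in the inductive case, whereas the paper takes a one-step base case (directly invoking \cref{lem:monotonicity-singlestep}) and in the inductive case splits the sequence into two multi-step parts, applying the induction hypothesis to each; both variants rely on the same generalization over the starting context $\R[curr]$ that you rightly highlight.
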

\begin{proof}
  By assumption, $\Trace\sqsubseteq\R[curr]$.
  By induction on transitive closure of
  $(\Phi,\Trace,\R[cont],e) \hookrightarrowd^* (\Phi',\Trace',\R[cont]',e')$.
  \begin{enumerate}
  \item Base Case: Follows directly from \cref{lem:monotonicity-singlestep}.
  \item Inductive Case: %
    By assumption,
    $(\Phi,\Trace,\R[cont],e) {\hookrightarrowd^*}
    (\Phi^\circ,\Trace^\circ,\R[cont]^\circ,e^\circ)$ and
    $(\Phi^\circ,\Trace^\circ,\R[cont]^\circ,e^\circ)\allowbreak
    \hookrightarrowd^* (\Phi',\Trace',\R[cont]',e')$. %
    By IH, we have $\R[curr]^\circ$ such that
    $\Trace^\circ\sqsubseteq\R[curr]^\circ$ and
    $(\Phi,\R[curr],e)\hookrightarrow^*(\Phi^\circ,\R[curr]^\circ,e^\circ)$.
    Again by IH, we have $\R[curr]'$ such that
    $\Trace'\sqsubseteq\R[curr]'$ and
    $(\Phi^\circ,\R[curr]^\circ,e^\circ) \hookrightarrow^*
    (\Phi',\R[curr]',e')$. %
    By transitivity of $\hookrightarrow^*$,
    $(\Phi,\R[curr],e) \hookrightarrow^* (\Phi',\R[curr]',e')$. \qedhere
  \end{enumerate}
\end{proof}
\end{toappendix}
\noindent
That is, there exists an execution paths
$(\true,\varepsilon,e)\hookrightarrow^*(\Phi,\R[curr],v)$ in the na\"{\i}ve
semantics such that $\Trace\sqsubseteq\R[curr]$.  As all traces
denoted by \Trace fail to comply with \R[post], there exist some
trace in \R[curr] that fails to comply with \R[post].  We conclude
with \cref{thm:soundness}, establishing that given a falsified
execution in derivative-based semantics, there exists a corresponding
execution in naive semantics that overapproximates this execution and
thus can also be falsified.
\begin{theoremrep}[Soundness of $\hookrightarrowd^*$]\label{thm:soundness}
  Assume $\sat{\Phi, \Trace}$. %
  If $(\true, \varepsilon, \R[post], e) \hookrightarrowd^* (\Phi, \Trace, \R[cont], v)$
  and $\neg\nu(\R[cont])$
  then $e$ is falsified against \R[post].
\end{theoremrep}
\begin{proofsketch}\leavevmode
  \begin{enumerate}
  \item First, the single-step reduction in
    \cref{lem:sensible-deriv-singlestep} can be extended to
    multi-step.  Since $\R[post]=\Deriv{\varepsilon}{\R[post]}$, we have
    $\R[cont]=\Deriv{\Trace}{\R[post]}$.
  \item Then, by \cref{cor:symbolic-recognition} on
    $\neg\nu(\R[cont])$, we have $\Trace\sqsubseteq\neg\R[post]$.
  \item Additionally, the single-step reduction in
    \cref{lem:monotonicity-singlestep} can also be extended to
    multi-step.  Since $\varepsilon\sqsubseteq\varepsilon$, there exists
    \R[curr] such that $\Trace\sqsubseteq\R[curr]$ and
    $(\true,\varepsilon,e) \hookrightarrow^* (\Phi,\R[curr],v)$.
  \item Now that $\Trace\sqsubseteq\R[curr]\wedge\neg\R[post]$ and
    $\sat{\Phi,\Trace}$, we have
    $\sat{\Phi,\R[curr]\wedge\neg\R[post]}$.
  \item Lastly, by \cref{defn:naive-falsification}, $e$ is falsified
    against \R[post].\qedhere
  \end{enumerate}
\end{proofsketch}
\begin{proof}
  By \crefpart{cor:symbolic-derivative}{cancellation},
  $\R[post] = \Deriv{\varepsilon}{\R[post]}$.
  By \cref{cor:sensible-deriv-multistep},
  $\R[cont]=\Deriv{\Trace}{\R[post]}$.
  By \cref{cor:symbolic-recognition},
  $\Trace\sqsubseteq\neg\R[post]$.
  By \cref{cor:monotonicity-multistep},
  there exists $\R[curr]$ such that $\Trace \sqsubseteq \R[curr]$ and
  $(\true, \varepsilon, e) \hookrightarrow^* (\Phi, \R[curr], v)$.
  By \cref{defn:inclusion} and \cref{defn:reachability},
  $\sat{\Phi, \R[curr] \wedge \neg\R[post]}$.
  By \cref{defn:naive-falsification},
  $e$ is falsified against \R[post].
\end{proof}
\begin{example}\label{ex:falsify}
  Consider the execution path from \cref{ex:naive}.  Through calls to
  \eff{Nxt.get} and \eff{Val.get}, the execution iterates over two
  nodes, $n_0$ and $n_1$, of the given linked list before finding a
  node storing element $u$, \ie, $n_1$.  Then $n_1$ is removed by
  linking $n_0$ to its successor \ie, $n_2$.  Following
  \cref{ex:admit-library-pre,ex:append-library-post}, the execution
  before the removal can be manifested in a derivative-based symbolic
  state:
  $(\Phi[bad], \Trace[prestate], \nodebidir{a}{b},
  \eff{Nxt.put}\,n_0\,n_2;\,n_0)$, where the path condition is
  $\Phi[bad]$ from \cref{ex:naive} and
  \begin{gather*}
    \Trace[prestate]\defeq
    \inangle{\eff{Nxt.put}\,key\,val\mid key{=}a{=}n_1{\wedge} val{=}b{=}n_2}
    {\cdot}
    \inangle{\eff{Val.put}\,n_0\,u_0}{\cdot}
    \inangle{\eff{Nxt.put}\,n_0\,n_1}{\cdot}
    \inangle{\eff{Val.put}\,n_1\,u_1}\\
    {\cdot}
    \inangle{u_0{\gets}\eff{Val.get}\,n_0}{\cdot}
    \inangle{n_1{\gets}\eff{Nxt.get}\,n_0}{\cdot}
    \inangle{u_1{\gets}\eff{Val.get}\,n_1}{\cdot}
    \inangle{u_2{\gets}\eff{Nxt.get}\,n_2}
  \end{gather*}
  To relate the event \inangle{\eff{Nxt.put}\,n_0\,n_2} with
  \nodebidir{a}{b}, we consider \nodebidir{a}{b}'s next event
  \inangle{\eff{Nxt.put}\,\cancel{a}\,b}, leading to a symbolic
  derivative of $\varnothing$ as shown in
  \cref{ex:symbolic-derivative}.  Hence, the conjunction between
  \inangle{\eff{Nxt.put}\,n_0\,n_2} and
  \inangle{\eff{Nxt.put}\,\cancel{a}\,b} witnesses this relation and
  is appended to the context $\Trace[prestate]$.  The symbolic state
  becomes: $(\Phi[bad],\Trace[bad],\varnothing,n_0)$, where
  \[
    \Trace[bad]\defeq\Trace[prestate]\cdot
    \inangle{\eff{Nxt.put}\,key\,val\mid %
      key=n_0\neq a \wedge val=n_2=b}
  \]
  $a=n_1$ and $b=n_2$ witnesses the reachability of the final symbolic
  state.  In combination with $\neg\nu(\varnothing)$, the execution is
  falsified.  In fact, this execution underapproximates the execution
  shown in \cref{ex:naive}, \ie, $\Trace[bad]\sqsubseteq\R[bad]$,
  which could have been falsified but proves too costly using naive
  semantics. \qed
\end{example}

Furthermore, this refined semantics guarantees  completeness with
respect to falsification.  Consider an execution in the na\"{\i}ve semantics that
is manifested by a reduction from the initial symbolic state to some
final symbolic state,
$(\true,\varepsilon,e)\hookrightarrow^*(\Phi,\R[curr],v)$.  According to
\cref{defn:naive-falsification}, the execution is falsified with
respect to the postcondition \R[post] as long as
$\sat{\Phi,\R[curr]\wedge\neg\R[post]}$ holds.  Looking backward
from the final state, it is sufficient to falsify the execution if
there exists some underapproximation of the execution, encapsulated by
a symbolic trace $\Trace[curr]\sqsubseteq\R[curr]$, and some prefix
$\Trace[prefix]\sqsubseteq\neg\R[post]$ such that
$\Trace[curr]\wedge\Trace[prefix]$ represents a viable execution.
Effectively, all compatible pairs of symbolic paths
$\Trace[curr]\sqsubseteq\R[curr]$ and prefixes \Trace[prefix] of
\R[post] are exhaustively explored by executions in a derivative-based
semantics.  \cref{lem:completeness} establishes this exhaustiveness on
each reduction step.
\begin{lemmarep}\label{lem:completeness}
  Given a safety property $\R[post]$.
  If $(\Phi,\R[curr],e)\hookrightarrow(\Phi',\R[curr]',e')$
  then for all $\Trace[curr]'\sqsubseteq\R[curr]'$,
  prefix $\Trace[prefix]'$ of $\R[post]$, and
  $\Trace'\equiv\Trace[curr]'\wedge\Trace[prefix]'$,
  there exists $\Trace[curr]\sqsubseteq\R[curr]$,
  prefix $\Trace[prefix]$ of $\R[post]$, and
  $\Trace\equiv\Trace[curr]\wedge\Trace[prefix]$
  such that
  $(\Phi,\Trace,\Deriv{\Trace[prefix]}{\R[post]},e)
  \hookrightarrowd
  (\Phi',\Trace',\Deriv{\Trace[prefix]'}{\R[post]},e')$.
\end{lemmarep}
\begin{proof}
  By rule induction on
  $(\Phi,\R[curr],e)\hookrightarrow(\Phi',\R[curr]',e')$,
  the cases where $\R[curr]\neq\R[curr]'$ are:
  \begin{enumerate}[label=Rule , wide=0pt, font=\bfseries]
  \item\textsc{Admit}.
    By assumption, $e=\admit{\R[past]}$, $e'=()$,
    and $\R[curr]'=\R[curr]\wedge\R[past]$.
    Consider arbitrary $\Trace[curr]'\sqsubseteq\R[curr]'$,
    prefix $\Trace[prefix]'$ of $\R[past]$, and
    $\Trace'\equiv\Trace[curr]'\wedge\Trace[prefix]$.
    Then there exists $\Trace[curr]\sqsubseteq\R[curr]$
    and $\Trace[past]\sqsubseteq\R[past]$ such that
    $\Trace[curr]'\equiv\Trace[curr]\wedge\Trace[past]$.
    Let $\Trace[prefix]=\Trace[prefix]'$
    and $\Trace\equiv\Trace[curr]\wedge\Trace[prefix]$.
    Then $\Trace'\equiv\Trace\wedge\Trace[past]$.
    Let $\R[cont]
    =\Deriv{\Trace[prefix]}{\R[post]}
    =\Deriv{\Trace[prefix]'}{\R[post]}$.
    By Rule \textsc{DAdmit},
    $(\Phi,\Trace,\R[cont],\admit{\R[past]})\allowbreak
    \hookrightarrowd(\Phi',\Trace',\R[cont],())$.
  \item\textsc{Append}.
    By assumption, $e=\append{\R[eff]}$, $e'=()$,
    and $\R[curr]'=\R[curr]\cdot\R[eff]$.
    Consider arbitrary $\Trace[curr]'\sqsubseteq\R[curr]'$
    prefix $\Trace[prefix]'$ of $\R[past]$, and
    $\Trace'\equiv\Trace[curr]'\wedge\Trace[prefix]'$.
    Then there exists $\Trace[curr]\sqsubseteq\R[curr]$
    and $\Trace[eff]\sqsubseteq\R[eff]$ such that
    $\Trace[curr]'\equiv\Trace[curr]\cdot\Trace[eff]$.
    Let $\Trace[prefix]$ be a prefix of $\Trace[prefix]'$
    such that $|\Trace[curr]|=|\Trace[prefix]|$ and
    $\Trace[prefix]'\equiv\Trace[prefix]\cdot\Trace[diff]$.
    Then $\Trace[prefix]$ is also a prefix of $\R[post]$
    and $\Deriv{\Trace[prefix]'}{\R[post]}=
    \Deriv[r]{\Trace[diff]}{\Deriv{\Trace[prefix]}{\R[post]}}$.
    Let $\Trace\equiv\Trace[curr]\wedge\Trace[prefix]$
    and $\Trace[new]\equiv\Trace[eff]\wedge\Trace[diff]$.
    such that $\Trace'=\Trace\cdot\Trace[new]$.
    By Rule \textsc{DAppend},
    $(\Phi,\Trace,\Deriv{\Trace[prefix]}{\R[post]},\append{\R[eff]})
    \hookrightarrowd(\Phi',\Trace',\Deriv{\Trace[prefix]'}{\R[post]},())$.
  \item\textsc{LetExp}.
    By assumption, $e=\letin{x}{e_1}\ e_2$, $e'=\letin{x}{e_1'}\ e_2$,
    and $(\Phi,\R[curr],e_1)\hookrightarrow(\Phi',\R[curr]',e_1')$.
    Consider arbitrary $\Trace[curr]'\sqsubseteq\R[curr]'$,
    prefix $\Trace[prefix]'$ of $\R[post]$, and
    $\Trace'\equiv\Trace[curr]'\wedge\Trace[prefix]'$.
    By IH, there exists $\Trace[curr]\sqsubseteq\R[curr]$,
    prefix $\Trace[prefix]$ of $\R[post]$, and
    $\Trace\equiv\Trace[curr]\wedge\Trace[prefix]$
    such that
    $(\Phi,\Trace,\Deriv{\Trace[prefix]}{\R[post]},e_1)
    \hookrightarrowd
    (\Phi',\Trace',\Deriv{\Trace[prefix]'}{\R[post]},e_1')$,
    which is, by Rule \textsc{DLetExp},
    $(\Phi,\Trace,\Deriv{\Trace[prefix]}{\R[post]},\allowbreak\letin{x}{e_1}\ e_2)
    \hookrightarrowd
    (\Phi',\Trace',\Deriv{\Trace[prefix]'}{\R[post]},\letin{x}{e_1'}\ e_2)$.
    \qedhere
  \end{enumerate}
\end{proof}
\begin{toappendix}
\begin{corollary}\label{cor:completeness}
  Given a safety property $\R[post]$.
  If $(\Phi,\R[curr],e)\hookrightarrow^*(\Phi',\R[curr]',e')$
  then for all $\Trace[curr]'\sqsubseteq\R[curr]'$,
  prefix $\Trace[prefix]'$ of $\R[post]$, and
  $\Trace'\equiv\Trace[curr]'\wedge\Trace[prefix]'$,
  there exists $\Trace[curr]\sqsubseteq\R[curr]$,
  prefix $\Trace[prefix]$ of $\R[post]$, and
  $\Trace\equiv\Trace[curr]\wedge\Trace[prefix]$
  such that
  $(\Phi,\Trace,\Deriv{\Trace[prefix]}{\R[post]},e)
  \hookrightarrowd^*
  (\Phi',\Trace',\Deriv{\Trace[prefix]'}{\R[post]},e')$.
\end{corollary}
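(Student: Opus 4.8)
The plan is to induct on the length of the naive reduction $(\Phi,\R[curr],e)\hookrightarrow^*(\Phi',\R[curr]',e')$, using the single-step completeness result \cref{lem:completeness} as the engine for each step, exactly as \cref{cor:sensible-deriv-multistep} and \cref{cor:monotonicity-multistep} lift their single-step lemmas. The one subtlety I would flag up front is the quantifier alternation: the statement is \emph{backward}, quantifying universally over the witness data $(\Trace[curr]',\Trace[prefix]',\Trace')$ at the \emph{final} state and existentially over the data $(\Trace[curr],\Trace[prefix],\Trace)$ at the \emph{initial} state. This dictates the order in which the inductive hypothesis and \cref{lem:completeness} must be applied.

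For the base case (zero steps) the initial and final states coincide, so I take $\Trace[curr]=\Trace[curr]'$, $\Trace[prefix]=\Trace[prefix]'$, and $\Trace=\Trace'$, and the required derivative reduction is the empty (reflexive) one. For the inductive step I decompose the reduction as $(\Phi,\R[curr],e)\hookrightarrow(\Phi^\circ,\R[curr]^\circ,e^\circ)\hookrightarrow^*(\Phi',\R[curr]',e')$, peeling off the \emph{first} naive step. Given the final-state witnesses $(\Trace[curr]',\Trace[prefix]',\Trace')$, I first apply the IH to the tail $(\Phi^\circ,\R[curr]^\circ,e^\circ)\hookrightarrow^*(\Phi',\R[curr]',e')$, obtaining intermediate witnesses $\Trace[curr]^\circ\sqsubseteq\R[curr]^\circ$, a prefix $\Trace[prefix]^\circ$ of $\R[post]$, and $\Trace^\circ\equiv\Trace[curr]^\circ\wedge\Trace[prefix]^\circ$, together with a derivative reduction from $(\Phi^\circ,\Trace^\circ,\Deriv{\Trace[prefix]^\circ}{\R[post]},e^\circ)$ to $(\Phi',\Trace',\Deriv{\Trace[prefix]'}{\R[post]},e')$. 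I then feed exactly these intermediate witnesses as the \emph{final-state} input of \cref{lem:completeness} applied to the head step $(\Phi,\R[curr],e)\hookrightarrow(\Phi^\circ,\R[curr]^\circ,e^\circ)$, which produces initial witnesses $\Trace[curr]\sqsubseteq\R[curr]$, a prefix $\Trace[prefix]$ of $\R[post]$, and $\Trace\equiv\Trace[curr]\wedge\Trace[prefix]$ with a single derivative step from $(\Phi,\Trace,\Deriv{\Trace[prefix]}{\R[post]},e)$ to $(\Phi^\circ,\Trace^\circ,\Deriv{\Trace[prefix]^\circ}{\R[post]},e^\circ)$. Chaining the head step and the tail reduction through their shared junction state by transitivity of $\hookrightarrowd^*$ yields the desired reduction to $(\Phi',\Trace',\Deriv{\Trace[prefix]'}{\R[post]},e')$.

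The main obstacle is ensuring that the two derivative reductions actually compose at the junction state $(\Phi^\circ,\Trace^\circ,\Deriv{\Trace[prefix]^\circ}{\R[post]},e^\circ)$: the intermediate continuation effect must be the same object in both halves. This is precisely why \cref{lem:completeness} is phrased with continuation effects of the shape $\Deriv{\Trace[prefix]}{\R[post]}$ rather than an arbitrary \R[cont] --- the prefix $\Trace[prefix]^\circ$ returned by the IH fixes the continuation effect $\Deriv{\Trace[prefix]^\circ}{\R[post]}$, and the same prefix is what the single-step lemma consumes, so the two halves agree on the junction state's continuation effect, symbolic trace, path condition, and expression. The direction of quantification is equally essential: because the initial-state witnesses are existential, one cannot invoke \cref{lem:completeness} on the head first, since its required final-state witnesses would not yet exist; the IH on the tail must run first to manufacture them. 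Beyond this alignment, the argument is a routine transitive-closure induction.
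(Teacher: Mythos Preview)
Your proposal is correct and follows essentially the same approach as the paper: an induction on the transitive closure of $\hookrightarrow^*$, working backward from the final-state witnesses and chaining via the shared junction state. The only cosmetic difference is that the paper takes the single-step case as its base and applies the IH to \emph{both} halves of an arbitrary split in the inductive step, whereas you take the zero-step case as base and peel one head step (IH on the tail, \cref{lem:completeness} on the head); both organizations are standard and equivalent.
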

\begin{proof}
  By induction on transitive closure of
  $(\Phi,\R[curr],e)\hookrightarrow^*(\Phi',\R[curr]',e')$,
  \begin{enumerate}
  \item Base Case: Follows directly from \cref{lem:completeness}.
  \item Inductive Case:
    By assumption,
    $(\Phi,\R[curr],e)\hookrightarrow^*(\Phi^\circ,\R[curr]^\circ,e^\circ)$ and
    $(\Phi^\circ,\R[curr]^\circ,e^\circ)\hookrightarrow^*(\Phi',\R[curr]',e')$.
    Consider arbitrary $\Trace[curr]'\sqsubseteq\R[curr]'$,
    prefix $\Trace[prefix]'$ of $\R[post]$, and
    $\Trace'\equiv\Trace[curr]'\wedge\Trace[prefix]'$.
    By IH, there exists $\Trace[curr]^\circ\sqsubseteq\R[curr]^\circ$,
    prefix $\Trace[prefix]^\circ$ of $\R[post]$, and
    $\Trace^\circ\equiv\Trace[curr]^\circ\wedge\Trace[prefix]^\circ$
    such that
    $(\Phi^\circ,\Trace^\circ,\Deriv{\Trace[prefix]^\circ}{\R[post]},e^\circ)
    \hookrightarrowd^*
    (\Phi',\Trace',\Deriv{\Trace[prefix]'}{\R[post]},e')$.
    Again by IH, there exists $\Trace[curr]\sqsubseteq\R[curr]$,
    prefix $\Trace[prefix]$ of $\R[post]$, and
    $\Trace\equiv\Trace[curr]\wedge\Trace[prefix]$
    such that
    $(\Phi,\Trace,\Deriv{\Trace[prefix]}{\R[post]},e)\allowbreak
    \hookrightarrowd^*
    (\Phi^\circ,\Trace^\circ,\Deriv{\Trace[prefix]^\circ}{\R[post]},e^\circ)$.
    By transitivity of $\hookrightarrowd$,
    $(\Phi,\Trace,\Deriv{\Trace[prefix]}{\R[post]},e)
    \hookrightarrowd^*\break
    (\Phi',\Trace',\Deriv{\Trace[prefix]'}{\R[post]},e')$.\qedhere
  \end{enumerate}
\end{proof}
\end{toappendix}
\noindent
As a result, \cref{thm:completeness} establishes that given a
falsified execution manifested using the na\"{\i}ve semantics, there exists an
underapproximating execution in a derivative-based semantics that can
also be falsified.
\begin{theoremrep}[Completeness of $\hookrightarrowd^*$]
  \label{thm:completeness}
  If $(\true, \varepsilon, e)\allowbreak
  {\hookrightarrow^*}(\Phi, \R[curr], v)$
  and $\sat{\Phi, \R[curr] {\wedge} \neg\R[post]}$\fussy
  then there exists $\Trace{\sqsubseteq}\R[curr]$
  and $\neg\nu(\R[cont])$ such that
  $(\true, \varepsilon, \R[post], e)
  {\hookrightarrowd^*}(\Phi, \Trace, \R[cont], v)$
  and $\sat{\Phi, \Trace}$.
\end{theoremrep}
\begin{proofsketch}\leavevmode
    \begin{enumerate}
  \item First, by \cref{defn:naive-falsification}, there exists \Phi
    and \R[curr] such that $\sat{\Phi,\R[curr]\land\neg\R[post]}$ and
    $(\true,\varepsilon,e) \hookrightarrow^* (\Phi,\R[curr],v)$.
  \item Then, let $\Trace[curr]\sqsubseteq\R[curr]$ and
    $\Trace[prefix]\sqsubseteq\neg\R[post]$ such that
    $\sat{\Phi,\Trace[curr]\wedge\Trace[prefix]}$.
  \item Furthermore, the single-step reduction in
    \cref{lem:completeness} can be extended to multi-step.  As a
    result,
    $(\true,\varepsilon,\R[post],e) \hookrightarrowd^*
    (\Phi,\Trace[curr]\land\Trace[prefix],\Deriv{\Trace[prefix]}{\R[post]},v)$.
  \item Lastly, we have $\neg\nu(\Deriv{\Trace[prefix]}{\R[post]})$
    from $\Trace[prefix]\sqsubseteq\lnot\R[post]$.\qedhere
  \end{enumerate}
\end{proofsketch}
\begin{proof}
  By \cref{defn:naive-falsification},
  there exists \Phi and \R[curr] such that
  $\sat{\Phi,\R[curr]\land\neg\R[post]}$ and
  $(\true, \varepsilon, e)\allowbreak\hookrightarrow^*(\Phi, \R[curr], v)$.
  By \cref{defn:reachability},
  there exists $\sigma$ and $\trace$ such that $\sigma\in\denot{\Phi}$
  and $\trace\in\denot{\sigma(\R[curr]\wedge\neg\R[post])}$.
  That is $\trace\in\denot{\sigma(\R[curr])}$
  and $\trace\notin\denot{\sigma(\R[post])}$.
  By \cref{cor:acceptance} and \cref{cor:symbolic-recognition},
  there exists $\Trace[curr]\sqsubseteq\R[curr]$
  such that $\trace\in\denot{\sigma(\Trace[curr])}$.
  Let $\Trace[prefix]=\trace$.
  Again by \cref{cor:acceptance},
  $\neg\nu(\Deriv{\Trace[prefix]}{\R[post]})$.
  Let $\Trace\equiv\Trace[curr]\wedge\Trace[prefix]$.
  Then $\Trace\sqsubseteq\Trace[curr]\sqsubseteq\R[curr]$
  and $\trace\in\denot{\sigma(\Trace)}$.
  By \cref{defn:reachability}, $\sat{\Phi,\Trace}$.
  By \cref{cor:completeness},
  there exists $\Trace[curr]'\sqsubseteq\varepsilon$
  and $\Trace'\equiv\Trace[curr]'\wedge\Trace[prefix]'$
  such that $(\top,\Trace',\Deriv{\Trace[prefix]'}{\R[post]},e)
  \hookrightarrowd^*(\Phi,\Trace,\R[cont],v)$.
  That is $\Trace[curr]'=\Trace[prefix]'=\Trace'=\varepsilon$,
  $\Deriv{\Trace[prefix]'}{\R[post]}=\R[post]$,
  and $(\top,\varepsilon,\R[post],e)
  \hookrightarrowd^*(\Phi,\Trace,\R[cont],v)$.
  Hence, $\Trace$ and $\R[cont]$ are satisfying witness to the conclusion.
\end{proof}

The completeness argument requires the symbolic execution to
exhaustively relate the events produced during execution and the safe
events required by the postcondition.  In the hope of finding a
falsified execution at the earliest, symbolic derivative enables
strategic exploration of this relationship during the symbolic
execution.  Consider an unfinished execution
$(\true,\varepsilon,\R[post],e_0)\hookrightarrowd^*(\Phi,\Trace,\R[cont],e)$.
Recall that the continuation effect \R[cont] predicts future
traces that are safe to produce if we finish the execution from the
current symbolic state $(\Phi,\Trace,\R[cont],e)$.  Hence, the
concatenation of the current symbolic trace \Trace and the continuation
effect \R[cont] gives an optimistic overapproximation of the
safe behavior of the execution when finished.  Then
$\neg\sat{\Phi,\Trace\cdot\R[cont]}$ essentially states that all
behavior is unsafe following this execution.  Therefore, without
finishing the execution, we may determine it is falsified.  In theory,
we also require that the execution can be finished in a satisfiable
state, as stated in \cref{thm:soundness-varnothing}.
\begin{theoremrep}[Soundness of $\varnothing$]
  \label{thm:soundness-varnothing}
  Assume $(\Phi, \Trace, \R[cont], e) \hookrightarrowd^* (\Phi', \Trace', \R[cont]', v)$ and $\sat{\Phi', \Trace'}$.
  If $(\true, \varepsilon, \R[post], e_0) \hookrightarrowd^* (\Phi, \Trace, \R[cont], e)$
  and $\neg\sat{\Phi,\Trace\cdot\R[cont]}$
  then $e$ is falsified against \R[post].
\end{theoremrep}
\begin{proofsketch}\leavevmode
  \begin{enumerate}
  \item First, by transitivity of $\hookrightarrowd^*$,
    $(\true,\varepsilon,\R[post],e_0)
    \hookrightarrowd^*(\Phi',\Trace',\R[cont]',v)$.
  \item Then, by \cref{thm:soundness}, it is sufficient to prove $\neg\nu(\R[cont]')$.
  \item By multi-step variant of \cref{lem:sensible-deriv-singlestep}
    on $\R[cont] = \Deriv[r]{\Trace}{\Trace \cdot \R[cont]}$,
    $\R[cont]' = \Deriv[r]{\Trace'}{\Trace \cdot \R[cont]}$.
  \item $\neg\sat{\Phi,\Trace\cdot\R[cont]}$ suggests that
    $\Trace\cdot\R[cont]$ is equivalent to $\varnothing$ under the
    path condition $\Phi$ or its refined path condition $\Phi'$.  So
    does its derivative $\R[cont]'$.  We have
    $\neg\nu(\R[cont]')$. \qedhere
  \end{enumerate}
\end{proofsketch}
\begin{proof}
  By transitivity of $\hookrightarrowd^*$,
  $(\true,\varepsilon,\R[post],e_0)
  \hookrightarrowd^*(\Phi',\Trace',\R[cont]',v)$.
  By \cref{thm:soundness},
  it is sufficient to prove $\neg\nu(\R[cont]')$.
  By \crefpart{cor:symbolic-derivative}{cancellation},
  $\R[cont] = \Deriv[r]{\Trace}{\Trace \cdot \R[cont]}$.
  By \cref{cor:sensible-deriv-multistep},
  $\R[cont]' = \Deriv[r]{\Trace'}{\Trace \cdot \R[cont]}$.
  By \cref{cor:residuality},
  $\Trace' \cdot \R[cont]' \sqsubseteq \Trace \cdot \R[cont]$.
  By \cref{defn:reachability},
  there exists $\sigma' \in \denot{\Phi'}$
  and $\trace'\in\denot{\sigma'(\Trace')}$.
  By \cref{cor:decreasing-multistep},
  $\sigma'\in\denot{\Phi'}\subseteq\denot{\Phi}$.
  By \cref{defn:inclusion},
  $\denot{\sigma'(\Trace')}\cdot\denot{\sigma'(\R[cont]')}
  =\denot{\sigma'(\Trace'\cdot\R[cont]')}
  \subseteq\denot{\sigma'(\Trace\cdot\R[cont])}
  =\emptyset$.
  Therefore, $\denot{\sigma'(\R[cont]')}=\emptyset$
  and $\neg\nu(\sigma'(\R[cont]'))$.
  By \cref{lem:symbolic-nullable}, $\neg\nu(\R[cont]')$.
\end{proof}
\noindent
However, in practice, as long as the current symbolic state is
satisfiable, \ie, $\sat{\Phi,\Trace}$, it is safe to assume that the
execution can be finished in a satisfiable symbolic state, which in
turn witnesses the falsification.  Another practical concern is that
checking $\neg\sat{\Phi,\Trace\cdot\R[cont]}$ can be expensive as
discussed in \cref{sec:naive}.  Instead, we check whether \R[cont]
is \emph{syntactically} equal to $\varnothing$, which implies
$\neg\sat{\Phi,\Trace\cdot\R[cont]}$.  If not, we continue the
execution without compromising soundness.  With a standard set of
rewriting rules, \eg, $\varnothing\vee\R\equiv\varnothing$, the
syntactic approach is effective in falsifying unfinished execution for
programs considered in \cref{sec:eval}.  In fact, \cref{ex:falsify} is
such a case -- had \texttt{remove} not stopped at the first
node found to store the given element, we can still conclude that the
execution is falsified without needing to finish iterating over the
remaining linked list.

Intuitively, we exploit the existence of a dead state in the automaton
associated with the postcondition \R[post].  As events produced
during symbolic execution are related to transitions in the automaton,
it is sufficient to falsify a execution if the events produced can be
related to transitions in the automaton that leads to a dead state.
This is similar to the recognition of a string in a deterministic
automaton, where it is sufficient to determine the string cannot be
accepted if a character causes the automata to enter a dead state.
However, each event produced may still nondeterministically be related
to transtions from the current state in the postcondition automaton.

We would like to further exploit the structure of the postcondition
automaton by \emph{actively} looking for a dead state.  Again we
consider the symbolic state of an unfinished execution, its
continuation symbolic derivative \R[cont] is a symbolic derivative of
the postcondition \R[post] and thus \R[cont] represents a state in the
automaton associated with \R[post].  The minimal distance of the
state denoted by \R[cont] to a dead state gives us a lower bound on
the number of events that the current execution needs to produce in
order to be falsified.  It is also the minimal length of \R[cont]'s
prefixes such that the derivative over them denotes a dead state, \ie,
\textsf{DistToDead}(\R[cont]), where
\[
  \textsf{DistToDead}(\R)\defeq
  \min_{\Deriv{\Trace}{\R}=\varnothing}|\Trace|
\]
When new events \Trace[eff] are produced during the execution as in
Rule \textsc{DAppend}, among all \R[cont]'s prefixes \Trace[prefix]
that are compatible with \Trace[eff], we prioritize relating
\Trace[eff] with prefixes that brings us closer to a dead state,
according to \textsf{DistToDead}(\Deriv{\Trace[prefix]}{\R[cont]}).
In practice, we set a cut-off constant to limit the depth of such
exploitation.
\begin{example}
  Consider a different execution from what is shown in
  \cref{ex:falsify}.  \inangle{\eff{Nxt.put}\,a\,\cancel{b}} is also a
  next event of \nodebidir{a}{b} but leads to a symbolic derivative of
  $\bullet^*$.  Correspondingly, the event
  \inangle{\eff{Nxt.put}\,key\,val\mid key=n_0=a\wedge val=n_2\neq b}
  is appended to the symbolic trace.  While the symbolic state happens
  to becomes unreachable ($n_2\neq b$ contradicts $n_2=b$ from
  \Trace[prestate]) and thus can be pruned, it does not have to be the
  case and nondeterministic time may be spent on this infeasible
  execution before it is pruned. \qed
\end{example}

\section{Algorithm}\label{sec:algo}

In this section, we substitute the declarative components of
derivative-based semantics with their algorithmic equivalents, thus
demonstrating the derivative-based semantics is a sound and relatively
complete procedure for falsification.

First, we show that the reachability check (\cref{defn:reachability})
of derivative-based symbolic states, \ie, $\sat{\Phi,\Trace}$, can be
straightforwardly discharged to SMT queries like conventional symbolic
execution techniques.  Intuitively, since a symbolic path \Trace is a
sequence of symbolic events, we would like to collect constraints from
each symbolic event.  The constraint of an atomic symbolic event can
be built as:
\[\TrConstr(\inangle{x_\textsf{ret}\gets\eff{f}\ \overline{x_\textsf{arg}}\mid\phi})=
  [\overline{x_\textsf{arg}\mapsto\symb{x_\textsf{arg}}},x_\textsf{ret}\mapsto\symb{x_\textsf{ret}}]\phi
  \quad\text{for fresh}\ \overline{\symb{x_\textsf{arg}}}\ \text{and}\ \symb{x_\textsf{ret}}
\]
To facilitate constraint collection, we give a
stratified representation of symbolic events $\lit$ as a disjunction
of atomic symbolic events associated with \emph{disjoint} effectful functions:
\[\textstyle
  \lit \defeq \dots\parallel
  \inangle{x_\textsf{ret}\gets\eff{f}\ \overline{x_\textsf{arg}}\mid\phi}
  \parallel\dots \qquad\text{such that}\quad
  \denot{\lit}=\bigcup_{\inangle{x_\textsf{ret}\gets\eff{f}\ \overline{x_\textsf{arg}}\mid\phi}\in\lit}
  \denot{\inangle{x_\textsf{ret}\gets\eff{f}\ \overline{x_\textsf{arg}}\mid\phi}}
\]
Since the effectful functions $\eff{f}$ associated with the disjuncts
in $\lit$ are different, the constraint of a symbolic event \lit is simply the disjunction of constraints from \lit's atomic symbolic events, and the constraint of a symbolic path \Trace is the conjunction of constraints from \Trace's symbolic events:
\begin{mathpar}\textstyle
  \TrConstr(\varepsilon){=}\true \and
  \TrConstr(\lit){=}
  \bigvee_{\inangle{\eff{f}\mid\phi}\in\lit}
  \TrConstr(\inangle{\eff{f}\mid\phi}) \and
  \TrConstr(\Trace_1{\cdot}\Trace_2){=}
  \TrConstr(\Trace_1){\wedge}\TrConstr(\Trace_2)
\end{mathpar}  
It immediately follows that, as established by
\cref{cor:reachability}, the reachability of a symbolic state can be
determined by the satisfiabiliy of the conjunction between its path
condition $\Phi$ and the constraints from its current symbolic path
$\Trace$.
\begin{corollary}\label{cor:reachability}
  $\sat{\Phi,\Trace}$ iff $\sigma\in\denot{\Phi\wedge\TrConstr(\Trace)}$.
\end{corollary}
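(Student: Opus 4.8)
The plan is to read the right-hand side as the satisfiability of $\Phi \wedge \TrConstr(\Trace)$, i.e. the existence of an interpretation $\sigma$ with $\sigma \in \denot{\Phi \wedge \TrConstr(\Trace)}$, and to show this is equivalent to the definition of $\sat{\Phi, \Trace}$ (\cref{defn:reachability}): the existence of $\sigma \in \denot{\Phi}$ with $\denot{\sigma(\Trace)} \neq \emptyset$. Since $\TrConstr(\Trace)$ is built using only \emph{fresh} symbolic variables that by construction never occur in $\Phi$, any interpretation of $\Phi \wedge \TrConstr(\Trace)$ splits into a part fixing the symbolic variables shared with $\Phi$ and a part fixing the fresh witnesses; the former witnesses $\sigma \in \denot{\Phi}$ and is unconstrained by the latter. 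The problem therefore reduces to a single helper claim: for every $\sigma$, $\denot{\sigma(\Trace)} \neq \emptyset$ if and only if $\TrConstr(\Trace)$ is satisfiable under some extension of $\sigma$ to the fresh variables.

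First I would prove this helper claim by structural induction on $\Trace$, following the three productions $\varepsilon$, $\lit$, and $\Trace[1] \cdot \Trace[2]$. For $\varepsilon$ both sides hold trivially: $\denot{\sigma(\varepsilon)} = \{\epsilon\}$ is always non-empty and $\TrConstr(\varepsilon) = \true$ is always satisfiable. For a single symbolic event $\lit$, I would invoke its stratified representation as a disjunction of atomic symbolic events over \emph{disjoint} effectful functions; then $\denot{\sigma(\lit)} \neq \emptyset$ iff some atomic disjunct $\inangle{x_\textsf{ret}\gets\eff{f}\ \overline{x_\textsf{arg}}\mid\phi}$ has non-empty denotation, which by its denotation unfolds to the existence of constants $\overline{c_\textsf{arg}}, c_\textsf{ret}$ satisfying $\sigma(\phi)$ after substitution. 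This is exactly the satisfiability of the matching disjunct of $\TrConstr(\lit)$, since $\TrConstr$ replaces the local variables $\overline{x_\textsf{arg}}, x_\textsf{ret}$ with fresh symbolic variables $\overline{\symb{x_\textsf{arg}}}, \symb{x_\textsf{ret}}$ that serve as precisely those witnesses, and the disjunctive shape of $\TrConstr(\lit)$ mirrors the union in $\denot{\lit}$. For the concatenation case I would use that $\denot{\sigma(\Trace[1] \cdot \Trace[2])} = \denot{\sigma(\Trace[1])} \cdot \denot{\sigma(\Trace[2])}$ together with the fact that a product of two languages is non-empty exactly when both factors are, matching $\TrConstr(\Trace[1] \cdot \Trace[2]) = \TrConstr(\Trace[1]) \wedge \TrConstr(\Trace[2])$ via the induction hypothesis.

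The step I expect to be the main obstacle is this concatenation case, specifically justifying that the two conjuncts can be satisfied \emph{simultaneously}. The factorization argument shows each factor is individually non-empty, but to satisfy $\TrConstr(\Trace[1]) \wedge \TrConstr(\Trace[2])$ at once one must combine the two extensions of $\sigma$ into a single interpretation. This is sound precisely because the fresh symbolic variables introduced for the events of $\Trace[1]$ are disjoint from those introduced for $\Trace[2]$ (and from the variables of $\Phi$), so the two partial interpretations have disjoint domains apart from the shared variables of $\sigma$, on which they agree; hence they glue into one interpretation satisfying the conjunction, and conversely any satisfying interpretation restricts to witnesses for each factor. Freshness is thus the load-bearing assumption, and I would make it explicit rather than treat it as incidental bookkeeping. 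Re-attaching $\Phi$ and existentially quantifying over $\sigma$ then yields $\sat{\Phi, \Trace}$ iff $\Phi \wedge \TrConstr(\Trace)$ is satisfiable, which is the claim.
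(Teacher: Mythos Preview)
Your proposal is correct. The paper offers no proof of this corollary at all: it simply says ``It immediately follows that\ldots'' and states \cref{cor:reachability} without argument. Your structural induction on $\Trace$, with the key observation that freshness of the variables introduced by $\TrConstr$ lets you glue the two witnesses in the concatenation case, is exactly the argument one needs to make the claim rigorous; you are supplying the detail the paper elides.
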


In response to the stratified representation of symbolic events, we
discharge their boolean connectives using \cref{defn:literal-algebra},
which was part of the syntax in \cref{sec:naive}.
\begin{definition}[Events Algebra]\label{defn:literal-algebra}
  The boolean operations on symbolic events can be defined as:
\begin{align*}
  \lneg (\parallel_i \inangle{\eff{f}_i \mid \phi_i})
  =& (\parallel_i \inangle{\eff{f}_i \mid \neg\phi_i}) \parallel
     (\parallel_{\eff{g} \in \Delta / \overline{\eff{f}_i}} \inangle{\eff{g}})
  \\
  (\parallel_i \inangle{\eff{f}_i \mid \phi_i}) \sqcap
  (\parallel_j \inangle{\eff{g}_j \mid \psi_j})
  =& \parallel_{\eff{f}_i = \eff{g}_j} \inangle{\eff{f}_i \mid \phi_i \wedge \psi_j}
  \\
  (\parallel_i \inangle{\eff{f}_i \mid \phi_i}) \sqcup
  (\parallel_j \inangle{\eff{g}_j \mid \psi_j})
  =& (\parallel_{\eff{f}_i = \eff{g}_j} \inangle{\eff{f}_i \mid \phi_i \vee \psi_j}) \parallel
     (\parallel_{\eff{f}_i \notin \overline{\eff{g}_j}} \inangle{\eff{f}_i \mid \phi_i}) \parallel
     (\parallel_{\eff{g}_j \notin \overline{\eff{f}_i}} \inangle{\eff{g}_j \mid \psi_j})
\end{align*}
\end{definition}
\noindent
Again, all atomic symbolic events in $\lit$ are associated with
different effectful functions and thus are disjoint.  The negation of
$\lit$ includes atomic symbolic events from $\lit$ with their
qualifiers negated and atomic symbolic events out of $\lit$ with
$\true$ qualifier.  The conjunction of $\lit_1$ and $\lit_2$ includes
atomic symbolic events included by both $\lit_1$ and $\lit_2$ with the
qualifier being their conjunctions.  The disjunction of $\lit_1$ and
$\lit_2$ includes atomic symbolic events included by both $\lit_1$ and
$\lit_2$ with the qualifier being their disjunctions, as well as
atomic symbolic events included only in $\lit_1$ or $\lit_2$.
\cref{defn:literal-algebra} preserves the disjointness requirement in
the result and is consistent with the denotation \denot{\lit} above.

Before providing algorithms for computing prefixes and symbolic
derivatives, we first demonstrate a procedure for finding next events
of a given SRE \R by rediscovering the notion of ``next literals''
presented in \cite{keilSymbolicSolvingExtended2014}.  For
$\varnothing$ and $\varepsilon$, their next event can only be bottom.
For $\lit$, its next event is simply $\lit$ itself.  For $\R^*$, its
next events are the same as those of $\R$.  For $\neg\R$, its next
events include those of $\R$ and the complement of their disjunction.
For $\R_1\land\R_2$, its next events includes the conjunction of
events included in both $\R_1$ and $\R_2$.  For $\R_1\lor\R_2$, its next
events includes not only the conjunction of events included in both
$\R_1$ and $\R_2$, but also the conjunction of each event from $\R_1$ and
the negation of the disjunction of $\R_2$'s next events, and vice
versa, defined as a join operation $\Join$ between two sets of events.
As a result, the disjunction of $\R_1\vee\R_2$'s next events is
equivalent to the disjunction of $\R_1$'s and $\R_2$'s.  For
$\R_1\cdot\R_2$, its next events are determined by the join of those
of $\R_1$ and those of $\R_2$ if $\R_1$ is nullable. Otherwise, its next
events includes only those of $\R_2$.
\begin{definition}[Admissible Next Events]\label{defn:next}
  The set $\Lits$ of events admissible to $\R$ can be computed as:
  \begin{mathpar}
    \FAnext(\varnothing) = \FAnext(\varepsilon) = \{ \bot \}
    \and
    \FAnext(\lit) = \{ \lit \}
    \and
    \FAnext(\R^*) = \FAnext(\R)
    \and
    \FAnext(\R_1 \cdot \R_2) =
    \begin{cases}
      \FAnext(\R_1) \Join \FAnext(\R_2) & \nu(\R_1) \\
      \FAnext(\R_1) & \text{otherwise}
    \end{cases}
    \and
    \FAnext(\neg\R) = \FAnext(\R) \cup \left\{\FAnext(\R)^\complement\right\}
    \and
    \FAnext(\R_1 \wedge \R_2) = \FAnext(\R_1) \sqcap \FAnext(\R_2)
    \and
    \FAnext(\R_1 \vee \R_2) = \FAnext(\R_1) \Join \FAnext(\R_2)
  \end{mathpar}
  where the dual of an event set $\Lits$ is
  $\textstyle\Lits^\complement\defeq\lneg\bigsqcup_{\lit \in
    \Lits}\lit$  and the join of two event sets $\Lits_1$ and
  $\Lits_2$ is
  $ \Lits_1 \Join \Lits_2 \defeq \{ \lit_1 \sqcap \lit_2, %
  \lit_1 \sqcap \Lits_2^\complement, %
  \Lits_1^\complement \sqcap \lit_2 %
  \mid \lit_1 \in \Lits_1, \lit_2 \in \Lits_2 \}$.
\end{definition}
\begin{toappendix}
\begin{lemma}[Partial Equivalence\cite{keilSymbolicSolvingExtended2014}]
  \label{lem:partial-equivalence}
  \begin{enumerate*}
  \item Let $\Lits = \FAnext(\R)$.  Then $\lit\in\Lits$ is a
    singleton prefix of \R.
  \item Let $\lit = \FAnext(\R)^\complement$.  Then
    $\Deriv{\lit}{\R}=\varnothing$.
  \end{enumerate*}
\end{lemma}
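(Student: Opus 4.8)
The plan is to prove both statements together by a single structural induction on $\R$, because they are mutually supporting: showing that a compound next event is a singleton prefix (the first claim) will rely on the \emph{second} claim for the immediate sub-expressions, whereas showing the complement has derivative $\varnothing$ (the second claim) will rely on the \emph{first} claim for the parts. I would first unfold the definitions into a semantic form: by \cref{defn:prefix,defn:symbolic-derivative}, $\lit$ is a singleton prefix of $\R$ exactly when there is an SRE $\R'$ with $\deriv{\alpha}{\sigma(\R)}=\sigma(\R')$ for every $\alpha\in\denot{\sigma(\lit)}$ and every $\sigma$ --- i.e.\ the concrete Brzozowski derivative is \emph{coherent} across all characters selected by $\lit$ and all interpretations --- while the second claim unfolds to $\deriv{\alpha}{\sigma(\R)}=\varnothing$ for every $\alpha\in\denot{\sigma(\FAnext(\R)^\complement)}$. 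Since the concrete derivative rules are syntactic and substitution $\sigma$ distributes over all SRE constructors (with nullability handled by \cref{lem:symbolic-nullable}), every step can be carried out symbolically, pushing $\sigma$ through the derivative and reasoning about the parts.

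The base cases are immediate. For $\varnothing$ and $\varepsilon$ the only next event is $\bot$, which is a singleton prefix vacuously (empty denotation), and its complement $\lneg\bot=\bullet$ has $\deriv{\alpha}{\varnothing}=\deriv{\alpha}{\varepsilon}=\varnothing$, giving the second claim. For a literal $\lit_0$, $\FAnext(\lit_0)=\{\lit_0\}$ with $\deriv{\alpha}{\lit_0}=\varepsilon$ on $\denot{\sigma(\lit_0)}$ (first claim, $\R'=\varepsilon$) and $\deriv{\alpha}{\lit_0}=\varnothing$ on $\denot{\sigma(\lneg\lit_0)}$ (second claim). Several inductive cases are then routine reductions through the corresponding derivative rule: $\R^*$ uses $\deriv[r]{\alpha}{\R^*}=\deriv{\alpha}{\R}\cdot\R^*$, so coherence of $\deriv{\alpha}{\R}$ lifts directly and the complement collapses the product to $\varnothing$; the non-nullable $\R_1\cdot\R_2$ uses $\deriv{\alpha}{\R_1}\cdot\R_2$ with characters outside $\FAnext(\R_1)$ annihilating the product; $\R_1\wedge\R_2$ uses the pairwise meets $\FAnext(\R_1)\sqcap\FAnext(\R_2)$ together with the fact that a character missing from either part's next events sends $\deriv{\alpha}{\R_1}$ or $\deriv{\alpha}{\R_2}$ to $\varnothing$ and hence the conjunction to $\varnothing$; and $\neg\R$ uses $\deriv[r]{\alpha}{\neg\R}=\neg\deriv{\alpha}{\R}$ for coherence, while $\bigsqcup\FAnext(\R)\sqcup\FAnext(\R)^\complement\equiv\top$ forces $\FAnext(\neg\R)^\complement\equiv\bot$ so the second claim holds vacuously.

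The heart of the argument is an auxiliary lemma about the join $\Join$, which governs $\R_1\vee\R_2$ and the nullable case of $\R_1\cdot\R_2$. Assuming both claims for $\R_1$ and $\R_2$, I would prove: (i) each element of $\FAnext(\R_1)\Join\FAnext(\R_2)$ pins down \emph{both} $\deriv{\alpha}{\R_1}$ and $\deriv{\alpha}{\R_2}$ --- on $\lit_1\sqcap\lit_2$ both are fixed by the first claim for the parts; on $\lit_1\sqcap\FAnext(\R_2)^\complement$ the first claim fixes the $\R_1$-derivative while the second forces the $\R_2$-derivative to $\varnothing$; the family $\FAnext(\R_1)^\complement\sqcap\lit_2$ is symmetric --- so composing through $\deriv[r]{\alpha}{\R_1\vee\R_2}=\deriv{\alpha}{\R_1}\vee\deriv{\alpha}{\R_2}$ (resp.\ $\deriv[r]{\alpha}{\R_1\cdot\R_2}=(\deriv{\alpha}{\R_1}\cdot\R_2)\vee\deriv{\alpha}{\R_2}$ using $\nu(\R_1)$) yields coherence; and (ii) a Boolean-algebra computation gives $\bigsqcup(\FAnext(\R_1)\Join\FAnext(\R_2))\equiv\bigsqcup\FAnext(\R_1)\sqcup\bigsqcup\FAnext(\R_2)$, so the composite complement is denotationally $\denot{\sigma(\FAnext(\R_1)^\complement)}\cap\denot{\sigma(\FAnext(\R_2)^\complement)}$, on which both sub-derivatives vanish by the second claim and hence the composite derivative is $\varnothing$.

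The main obstacle I expect is precisely this join lemma. The three overlapping families $\lit_1\sqcap\lit_2$, $\lit_1\sqcap\FAnext(\R_2)^\complement$, and $\FAnext(\R_1)^\complement\sqcap\lit_2$ must be shown to refine both partitions finely enough that every element determines both sub-derivatives, while their disjunction collapses cleanly so that the complements compose. The delicate bookkeeping is tracking which sub-claim (coherence versus vanishing) is invoked on each family, and verifying the underlying set identity $(A\cap B)\cup(A\setminus B)\cup(B\setminus A)=A\cup B$ with $A=\denot{\sigma(\bigsqcup\FAnext(\R_1))}$ and $B=\denot{\sigma(\bigsqcup\FAnext(\R_2))}$, which is what makes the first claim (coherent transitions) and the second claim (the dead-state complement) fit together at each composite node.
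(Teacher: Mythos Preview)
The paper does not supply its own proof for this lemma: it is stated with a citation to \cite{keilSymbolicSolvingExtended2014}, from which the notion of ``next literals'' and the accompanying partial-equivalence property are adopted wholesale. Your mutual structural induction on $\R$ is the natural argument and matches what that reference establishes; in particular, your identification of the $\Join$ case as the crux --- where each of the three families $\lit_1\sqcap\lit_2$, $\lit_1\sqcap\Lits_2^\complement$, $\Lits_1^\complement\sqcap\lit_2$ pins down \emph{both} sub-derivatives (one via the coherence IH, the other possibly via the complement IH), and the Boolean identity $\bigsqcup(\Lits_1\Join\Lits_2)\equiv\bigsqcup\Lits_1\sqcup\bigsqcup\Lits_2$ handles the composite complement --- is exactly the combinatorial content needed.

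One small point worth making explicit: the equality $\Deriv{\lit}{\R}=\varnothing$ in claim (2) should be read modulo the standard absorbing simplifications ($\varnothing\cdot\R\equiv\varnothing$, $\varnothing\wedge\R\equiv\varnothing$, etc.), since otherwise even the $\R^*$ case of claim (2) produces $\varnothing\cdot\R^*$ rather than $\varnothing$ syntactically. The paper leans on the same convention elsewhere (e.g.\ when testing whether a continuation effect is ``syntactically equal to $\varnothing$''), so this is not a gap in your reasoning but a normalization convention you should state once at the outset.
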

\end{toappendix}
\noindent
\cref{defn:next} provides such a \FAnext operation such that each
symbolic event $\lit\in\FAnext(\R)$ is a singleton prefix of \R
(\cref{defn:prefix}).  Due to the negation rule, the disjunction of
$\FAnext(\R)$ overapproximates the set of events admissible to \R.
Then the negation of this disjunction, \ie, $\FAnext(\R)^\complement$
is also a next event of \R, the derivative over which is
$\varnothing$.  $\FAnext(\R)\cup\{\FAnext(\R)^\complement\}$ gives
us a set of symbolic events that covers the entire space of possible
events and are all amenable to symbolic derivative computation of \R.
Now, the symbolic derivative of \R over its next events can be
computed inductively in a similar fashion to \cref{sec:prelim} by the
following lemma:
\begin{lemmarep}\label{lem:symbolic-derivative}
  Given $\R$ and its prefix $\Trace$, the symbolic derivative
  $\Deriv{\Trace}{\R}$ can be computed via:
\begin{mathpar}
  \Deriv{\varepsilon}{\R} = \R
  \and
  \Deriv{\Trace_1 \cdot \Trace_2}{\R} = \Deriv{\Trace_2}{\Deriv{\Trace_1}{\R}}
  \and
  \Deriv{\lit}{\varnothing} = \Deriv{\lit}{\varepsilon} = \varnothing
  \and
  \Deriv[r]{\lit}{\R^*} = \Deriv{\lit}{\R} \cdot \R^*
  \and
  \Deriv{\lit'}{\lit} =
  \begin{cases}
    \varepsilon & \lit' \sqsubseteq \lit \\
    \varnothing & \lit' \sqsubseteq \lneg\lit
  \end{cases}
  \and
  \Deriv[r]{\lit}{\R_1 \cdot \R_2} =
  \begin{cases}
    \inparen{\Deriv{\lit}{\R_1} \cdot \R_2} \vee \Deriv{\lit}{\R_2} & \nu(\R_1) \\
    \Deriv{\lit}{\R_1} \cdot \R_2 & \text{otherwise}
  \end{cases}
  \and\newline
  \Deriv[r]{\lit}{\neg\R} = \neg \Deriv{\lit}{\R}
  \and
  \Deriv[r]{\lit}{\R_1 \land \R_2} = \Deriv{\lit}{\R_1} \wedge \Deriv{\lit}{\R_2}
  \and
  \Deriv[r]{\lit}{\R_1 \vee \R_2} = \Deriv{\lit}{\R_1} \vee \Deriv{\lit}{\R_2}
\end{mathpar}
\end{lemmarep}
\begin{proof}
  By induction on $\Deriv{\Trace}{\R} = \R'$.
  \begin{enumerate}[label=Case , wide=0pt, font=\bfseries]
  \item $\Deriv{\varepsilon}{\R}$.
    We have $\denot{\varepsilon} = \{\epsilon\}$
    and $\deriv{\varepsilon}{\sigma(\R)} = \sigma(\R)$ for all $\sigma$.
    Then by \cref{defn:symbolic-derivative},
    $\Deriv{\varepsilon}{\R} \equiv \R$.
  \item $\Deriv{\Trace_1 \cdot \Trace_2}{\R}$.
    Follows directly from \crefpart{cor:symbolic-derivative}{concatenation}.
  \item $\Deriv{\lit}{\varnothing}$.
    We have $\deriv{\alpha}{\varnothing} = \varnothing$
    for all $\alpha \in \denot{\sigma(\lit)}$ for all $\sigma$.
    By \cref{defn:symbolic-derivative},
    $\Deriv{\lit}{\varnothing} = \varnothing$.
  \item $\Deriv{\lit}{\varepsilon}$.
    We have $\deriv{\alpha}{\varepsilon} = \varnothing$
    for all $\alpha \in \denot{\sigma(\lit)}$ for all $\sigma$.
    By \cref{defn:symbolic-derivative},
    $\Deriv{\lit}{\varepsilon} = \varepsilon$.
  \item $\Deriv{\lit}{\R^*}$.
    By \cref{defn:symbolic-derivative},
    $\deriv{\alpha}{\sigma(\R)} = \sigma(\Deriv{\lit}{\R})$
    for all $\alpha \in \denot{\sigma(\lit)}$ for all $\sigma$.
    Then for all $\sigma$ for all $\alpha \in \denot{\sigma(\lit)}$,
    $\sigma(\Deriv{\lit}{\R} \cdot \R^*) = \deriv{\alpha}{\sigma(\R)} \cdot \sigma(\R)^* = \deriv{\alpha}{\sigma(\R)^*}$.
    Again by \cref{defn:symbolic-derivative},
    $\Deriv{\lit}{\R} = \Deriv{\lit}{\R} \cdot \R^*$.
  \item $\Deriv{\lit'}{\lit}$ when $\lit' \sqsubseteq \lit$.
    By \cref{defn:inclusion}, we have $\alpha \in \denot{\sigma(\lit)}$
    and $\deriv{\alpha}{\sigma(\lit)} = \varepsilon$
    for all $\alpha \in \denot{\sigma(\lit')}$ for all $\sigma$.
    By \cref{defn:symbolic-derivative},
    $\Deriv{\lit'}{\lit} = \varepsilon$.
  \item $\Deriv{\lit'}{\lit}$ when $\lit' \sqsubseteq \neg\lit$.
    By \cref{defn:inclusion}, We have $\alpha \notin \denot{\sigma(\lit)}$
    and $\deriv{\alpha}{\sigma(\lit)} = \varnothing$
    for all $\alpha \in \denot{\sigma(\lit')}$ for all $\sigma$.
    By \cref{defn:symbolic-derivative},
    $\Deriv{\lit'}{\lit} = \varnothing$.
  \item $\Deriv[r]{\lit}{\R_1 \cdot \R_2}$.
    By \cref{defn:symbolic-derivative},
    $\deriv{\alpha}{\sigma(\R_1)} = \sigma(\Deriv{\lit}{\R_1})$ and
    $\deriv{\alpha}{\sigma(\R_2)} = \sigma(\Deriv{\lit}{\R_2})$
    for all $\alpha \in \denot{\sigma(\lit)}$ for all $\sigma$.
    Then when $\nu(\R_1)$,
    for all $\sigma$ we have $\epsilon \in \denot{\sigma(\R_1)}$
    and for all $\alpha \in \denot{\sigma(\lit)}$,
    $\sigma((\Deriv{\lit}{\R_1} \cdot \R_2) \vee \Deriv{\lit}{\R_2})
    = (\deriv{\alpha}{\sigma(\R_1)} \cdot \sigma(\R_2)) \vee \deriv{\alpha}{\sigma(\R_2)}
    = \deriv{\alpha}{\sigma(\R_1 \cdot \R_2)}$,
    which is $\Deriv[r]{\lit}{\R_1\cdot\R_2}=(\Deriv{\lit}{\R_1} \cdot \R_2) \vee \Deriv{\lit}{\R_2}$ by \cref{defn:symbolic-derivative}.
    Otherwise for all $\sigma$ we have $\epsilon \notin \denot{\sigma(\R_1)}$
    and for all $\alpha \in \denot{\sigma(\lit)}$,
    $\sigma(\Deriv{\lit}{\R_1} \cdot \R_2)
    = \deriv{\alpha}{\sigma(\R_1)} \cdot \sigma(\R_2)
    = \deriv{\alpha}{\sigma(\R_1 \cdot \R_2)}$,
    which is $\Deriv[r]{\lit}{\R_1\cdot\R_2}=\Deriv{\lit}{\R_1} \cdot \R_2$ by \cref{defn:symbolic-derivative}.
  \item $\Deriv[r]{\lit}{\neg\R}$.
    Follows directly from \crefpart{cor:symbolic-derivative}{negation}.
  \item $\Deriv[r]{\lit}{\R_1\land\R_2}$.
    By \cref{defn:symbolic-derivative},
    $\deriv{\alpha}{\sigma(\R_1)} = \sigma(\Deriv{\lit}{\R_1})$ and
    $\deriv{\alpha}{\sigma(\R_2)} = \sigma(\Deriv{\lit}{\R_2})$
    for all $\alpha \in \denot{\sigma(\lit)}$ for all $\sigma$.
    Then for all $\sigma$ for all $\alpha \in \denot{\sigma(\lit)}$,
    $\sigma(\Deriv{\lit}{\R_1} \land \Deriv{\lit}{\R_2})
    = \deriv{\alpha}{\sigma(\R_1)} \land \deriv{\alpha}{\sigma(\R_2)}
    = \deriv{\alpha}{\sigma(\R_1 \land \R_2)}$.
    Again by \cref{defn:symbolic-derivative},
    $\Deriv[r]{\lit}{\R_1\land\R_2}=\Deriv{\lit}{\R_1} \land \Deriv{\lit}{\R_2}$.
  \item $\Deriv[r]{\lit}{\R_1\lor\R_2}$.
    By \cref{defn:symbolic-derivative},
    $\deriv{\alpha}{\sigma(\R_1)} = \sigma(\Deriv{\lit}{\R_1})$ and
    $\deriv{\alpha}{\sigma(\R_2)} = \sigma(\Deriv{\lit}{\R_2})$
    for all $\alpha \in \denot{\sigma(\lit)}$ for all $\sigma$.
    Then for all $\sigma$ for all $\alpha \in \denot{\sigma(\lit)}$,
    $\sigma(\Deriv{\lit}{\R_1} \lor \Deriv{\lit}{\R_2})
    = \deriv{\alpha}{\sigma(\R_1)} \lor \deriv{\alpha}{\sigma(\R_2)}
    = \deriv{\alpha}{\sigma(\R_1 \lor \R_2)}$.
    Again by \cref{defn:symbolic-derivative},
    $\Deriv[r]{\lit}{\R_1\lor\R_2}=\Deriv{\lit}{\R_1} \lor \Deriv{\lit}{\R_2}$.
    \qedhere
  \end{enumerate}
\end{proof}
\noindent
The main difference is, when computing the symbolic derivative of a
symbolic event $\lit$ over another $\lit'$, we need to perform an
inclusion check between them.  If all events denoted by $\lit'$ are
included by $\lit$, then the symbolic derivative is $\varepsilon$.  If
all events denoted by $\lit'$ are not included by $\lit$, then the
symbolic derivative is $\varnothing$.  Since $\lit'$ is guaranteed to
be a singleton prefix of $\lit$, it is impossible that some events
denoted by $\lit'$ are included by $\lit$ while some are excluded.
Thus, checking whether $\lit'\sqsubseteq\lit$ is sufficient.  The
inclusion check essentially involves checking the validity of
$\textsf{constr}(\lneg\lit'\sqcup\lit)$, which is well-suited for SMT
solving.

Using the \FAnext operation and the computation of symbolic
derivatives over symbolic events, we may enumerate prefixes of
arbitrary length from an SRE \R along with their corresponding
symbolic derivatives by following the rules in \cref{fig:prefix-enum}.
Rule \textsc{Pfx-$\varepsilon$} states that $\varepsilon$ is a prefix
of $\R$ and the corresponding symbolic derivative is $\R$ itself.
Rule \textsc{Pfx-$\lit$} states that all next events of \R is a prefix
of $\R$.  Rule \textsc{Pfx-$\cdot$} states that given any prefix
$\Trace_1$ of $\R$ along with the corresponding symbolic derivative
$\R_1$, and any prefix $\Trace_2$ of $\R_1$ along with the
corresponding symbolic derivative $\R_2$, the concatenation of
$\Trace_1$ and $\Trace_2$ is still a prefix of $\R$ with $\R_2$
being the corresponding symbolic derivative.
\begin{figure}[h]
\begin{mathpar}\small
  \Infer{Pfx-$\varepsilon$}{}{
    (\varepsilon,\R)\triangleright\R
  }

  \Infer{Pfx-$\lit$}{\textstyle
    \lit\in\FAnext(\R)\cup\left\{ \FAnext(\R)^\complement \right\}
  }{
    (\lit,\Deriv{\lit}{\R})\triangleright\R
  }

  \Infer{Pfx-$\cdot$}{
    (\Trace_1,\R_1)\triangleright\R \\
    (\Trace_2,\R_2)\triangleright\R_1 \\
  }{
    (\Trace_1\cdot\Trace_2,\R_2)\triangleright\R
  }
\end{mathpar}
\caption{Enumerate prefixes of a given \R and compute their symbolic derivatives.}
\label{fig:prefix-enum}
\end{figure}
\noindent
Intuitively, these rules allow us to construct a deterministic SFA
that accepts the same set of traces as \R and all paths in the SFA
are enumerated, including those that lead to dead states.  As
established by \cref{lem:soundness-prefix-enum}, each enumerated
prefix is indeed a prefix of $\R$.
\begin{lemmarep}[Soundness of Prefix Enumeration]\label{lem:soundness-prefix-enum}
  If $(\Trace,\R')\triangleright\R$ then $\R'=\Deriv{\Trace}{\R}$.
\end{lemmarep}
\begin{proof}
  By rule induction on $(\Trace,\R')\triangleright\R$.
  \begin{enumerate}[label=Case , wide=0pt, font=\bfseries]
  \item \textsc{Pfx-$\varepsilon$}.
    By assumption, $\Trace=\varepsilon$ and $\R'=\R$.
    By \cref{lem:symbolic-derivative}, \R'=\Deriv{\varepsilon}{\R}.
  \item \textsc{Pfx-$\lit$}.
    By assumption, $\Trace=\lit$ and $\R'=\Deriv{\lit}{\R}$.
  \item \textsc{Pfx-$\cdot$}.
    By assumption, $\Trace=\Trace_1\cdot\Trace_2$,
    $(\Trace_1,\R_1)\triangleright\R$,
    $(\Trace_2,\R_2)\triangleright\R_1$, and $\R'=\R_2$.
    By IH, $\Deriv{\Trace_1}{\R_1}=\R$
    and $\Deriv{\Trace_2}{\R_1}=\R_2$.
    By \cref{lem:symbolic-derivative},
    $\R_2=\Deriv{\Trace_2}{\Deriv{\Trace_1}{\R}}
    =\Deriv{\Trace_1\cdot\Trace_2}{\R_1}$.
  \end{enumerate}
\end{proof}
\noindent
A completeness result then states that all paths in the SFA can be
enumerated.  As an SRE \R may have different SFA representations, a
prefix \Trace of \R may not correspond to a path in the SFA
constructed by our prefix enumeration.  However, it is guaranteed
that, as established by \cref{lem:completeness-prefix-enum}, a set of
prefixes, \ie, a set of paths in the SFA, can be found by enumeration
such that their disjunction includes all traces denoted by such a
prefix $\Trace$.
\begin{toappendix}
  \begin{corollary}\label{cor:prefix-enum-fixed-length}
    Given $\R$.  We have
    $\bigvee_{(\Trace,\R')\triangleright\R, |\Trace|=n}
    \Trace\equiv\bullet^n$ for all $n\in\mathbb{N}$.
  \end{corollary}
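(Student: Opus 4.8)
The plan is to establish the equivalence denotationally. By \cref{defn:equivalence} it suffices to fix an arbitrary interpretation $\sigma$ and prove $\denot{\sigma(\bigvee_{(\Trace,\R')\triangleright\R,\,|\Trace|=n}\Trace)} = \Sigma^{(n)}$, since $\bullet$ contains no symbolic variables and the SRE denotation gives $\denot{\sigma(\bullet^n)}=\denot{\bullet^n}=\Sigma^{(n)}$. I would prove the two set inclusions separately, using induction on $n$ for the harder direction. Note first that the disjunction is a genuine finite SRE: $\FAnext$ is defined by a finite recursion (\cref{defn:next}), so the next-event set $N(\R)\defeq\FAnext(\R)\cup\{\FAnext(\R)^\complement\}$ is finite, and the length-$n$ prefixes form a finite tree of branching depth $n$ rooted at $\R$.

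The load-bearing ingredient is a \emph{cover property} of $N(\R)$: for every $\sigma$, $\bigcup_{\lit\in N(\R)}\denot{\sigma(\lit)} = \Sigma$. This is immediate from the definition $\FAnext(\R)^\complement\defeq\lneg\bigsqcup_{\lit\in\FAnext(\R)}\lit$ in \cref{defn:next} together with the EBA semantics of $\lneg$, under which $\bigl(\bigsqcup_{\lit\in\FAnext(\R)}\lit\bigr)\ldisj\FAnext(\R)^\complement$ denotes all of $\Sigma$ for every $\sigma$. I would flag that this must be an \emph{equality} $=\Sigma$ and not a mere inclusion, which hinges precisely on $\FAnext(\R)^\complement$ being the exact Boolean complement of the disjunction of $\FAnext(\R)$.

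For the inclusion $\subseteq$: by the symbolic-trace grammar and the rules of \cref{fig:prefix-enum}, every enumerated prefix with $|\Trace|=n$ is a concatenation of exactly $n$ single-literal next events, and each literal denotes a subset of $\Sigma$, so $\denot{\sigma(\Trace)}\subseteq\Sigma^{(n)}$; taking the union over disjuncts preserves this. For the reverse inclusion $\supseteq$, I argue by induction on $n$, generalized over all $\R$. The base case $n=0$ is immediate, as the only length-$0$ prefix is $\varepsilon$ (rule \textsc{Pfx-$\varepsilon$}) and $\denot{\sigma(\varepsilon)}=\{\epsilon\}=\Sigma^{(0)}$. For the step, take any $\trace=\alpha\cdot\trace'\in\Sigma^{(n+1)}$; by the cover property there is $\lit\in N(\R)$ with $\alpha\in\denot{\sigma(\lit)}$, and rule \textsc{Pfx-$\lit$} gives $(\lit,\Deriv{\lit}{\R})\triangleright\R$. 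Applying the induction hypothesis to $\Deriv{\lit}{\R}$ and $\trace'$ yields an enumerated prefix $(\Trace',\R')\triangleright\Deriv{\lit}{\R}$ with $|\Trace'|=n$ and $\trace'\in\denot{\sigma(\Trace')}$. Rule \textsc{Pfx-$\cdot$} then produces $(\lit\cdot\Trace',\R')\triangleright\R$ with $|\lit\cdot\Trace'|=n+1$ and $\trace\in\denot{\sigma(\lit\cdot\Trace')}$, so $\trace$ lies in one of the disjuncts.

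I expect the main obstacle to be the interplay, in the $\supseteq$ direction, between the $\sigma$-dependent greedy choice of each next event $\lit$ and the fact that the enumeration $\{\Trace : (\Trace,\R')\triangleright\R,\ |\Trace|=n\}$ is itself $\sigma$-independent: coverage must hold uniformly for every $\sigma$ while each constructed chain $\lit\cdot\Trace'$ must be a legitimately enumerated prefix (guaranteed by \textsc{Pfx-$\lit$} and \textsc{Pfx-$\cdot$}). The one point needing genuine care is therefore the \emph{generalization of the induction hypothesis over all $\R$}, so that it can be reapplied to the derivative $\Deriv{\lit}{\R}$ at every level of the construction; stating the claim only for the fixed outer $\R$ would not close the induction.
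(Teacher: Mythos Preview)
Your proof is correct. Both you and the paper argue by induction on $n$ generalized over all $\R$, with the cover property of $N(\R)=\FAnext(\R)\cup\{\FAnext(\R)^\complement\}$ as the load-bearing fact (the paper invokes it via \cref{lem:partial-equivalence} in the base case $n=1$). The difference is methodological: the paper manipulates the big disjunction algebraically, factoring through rule \textsc{Pfx-$\cdot$} with an arbitrary $n=i+j$ split and then applying distributivity of $\cdot$ over $\vee$ and the IH to each factor, whereas you fix an interpretation $\sigma$ and chase elements, proving $\subseteq$ and $\supseteq$ separately with a fixed $1+n$ split in the inductive step. Your route is slightly more elementary and sidesteps a minor bookkeeping subtlety in the paper's $i+j$ decomposition (the degenerate splits $i=0$ or $j=0$ would call for the IH at length $n$ itself); the paper's route has the advantage of staying at the level of SRE equivalence throughout, without ever unpacking denotations.
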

  \begin{proof}
    By induction on $n$.
    \begin{enumerate}[label=Case ,wide=0pt,font=\bfseries]
    \item $n=0$.
      $\bigvee_{(\Trace,\R')\triangleright\R, |\Trace|=0}\Trace=\varepsilon$.
    \item $n=1$.
      $\bigvee_{(\Trace,\R')\triangleright\R, |\Trace|=1}\Trace
      =\bigvee_{\lit\in\FAnext(\R)\cup\left\{ \FAnext(\R)^\complement \right\}}\lit$
      , which is equivalent to $\bullet$ by \cref{lem:partial-equivalence}.
    \item $n>1$. Let $\R_2=\R'$
      \begin{align*}
        &\bigvee_{(\Trace,\R_2)\triangleright\R, |\Trace|=n}\Trace &&\\
        =&\bigvee_{
           n=i+j, |\Trace_1|=i, |\Trace_2|=j,
           (\Trace_1\cdot\Trace_2,\R_2)\triangleright\R
           } \Trace_1\cdot\Trace_2 &&\\
        =&\bigvee_{n=i+j, |\Trace_1|=i, (\Trace_1,\R_1)\triangleright\R}
           \bigvee_{|\Trace_2|=j, (\Trace_2,\R_2)\triangleright\R_1}
           \Trace_1\cdot\Trace_2 && \text{by Rule \textsc{Pfx-$\cdot$}} \\
        \equiv&\bigvee_{n=i+j, |\Trace_1|=i, (\Trace_1,\R_1)\triangleright\R}\Trace_1\cdot
                \bigvee_{|\Trace_2|=j, (\Trace_2,\R_2)\triangleright\R_1}\Trace_2
                                                                   && \text{by \cref{defn:equivalence}} \\
\\
        \equiv&\bigvee_{n=i+j, |\Trace_1|=i, (\Trace_1,\R_1)\triangleright\R}
           \Trace_1\cdot\bullet^j && \text{by IH} \\
        \equiv&\bigvee_{n=i+j}
           \left(\bigvee_{|\Trace_1|=i, (\Trace_1,\R_1)\triangleright\R}\Trace_1\right)
           \cdot\bullet^j && \text{by \cref{defn:equivalence}} \\
        \equiv&\bigvee_{n=i+j}\bullet^i\cdot\bullet^j && \text{by IH} \\
        \equiv&\bullet^n && \text{by \cref{defn:equivalence}}
      \end{align*}
    \end{enumerate}
  \end{proof}
\end{toappendix}
\begin{lemmarep}[Completeness of Prefix Enumeration]\label{lem:completeness-prefix-enum}
  If $\R'=\Deriv{\Trace}{\R}$
  then there exists $\overline{\Trace_i}^i$ such that
  $(\Trace_i,\R')\triangleright\R$ for all $i$ and
  $\Trace\sqsubseteq\bigvee_i\Trace_i$.
\end{lemmarep}
\begin{proof}
  By \cref{cor:prefix-enum-fixed-length}, we have
  $\bigvee_{(\Trace',\R')\triangleright\R,|\Trace'|=|\Trace|}\Trace'
  \equiv\bullet^{|\Trace|}\sqsupseteq\Trace$.
  In addition, for each $(\Trace',\R')\triangleright\R$,
  $\R'=\Deriv{\Trace'}{\R}$ by \cref{lem:soundness-prefix-enum}.
\end{proof}

Sampling symbolic traces \Trace from a given SRE \R is a special case
of enumerating prefixes whose symbolic derivative is nullable, as
shown in \cref{cor:symbolic-recognition}.  Intuitively, the sampled
symbolic traces correspond to the paths that lead to an accepting
state in the SFA.  For the purpose of sampling symbolic traces, we may
ignore paths that lead to a dead state without compromising the
completeness of sampling.  That is, when applying Rule
\textsc{PFX-\lit} for trace sampling, we ignore
$\FAnext(\R)^\complement$, whose corresponding symbolic derivative is
always $\varnothing$.

Lastly, we show how to relate symbolic traces of the same length by
computing their conjunction.  The following rules effectively perform
pairwise conjunction between symbolic events
(\cref{defn:literal-algebra}) from two symbolic traces $\Trace_1$ and
$\Trace_2$:
\begin{mathpar}
  \varepsilon\wedge\varepsilon\equiv\varepsilon
  \and
  \lit_1\wedge\lit_2\equiv\lit_1\sqcap\lit_2
  \and
  (\Trace_{11}\cdot\Trace_{12})\wedge(\Trace_{21}\cdot\Trace_{22})
  \equiv(\Trace_{11}\wedge\Trace_{21})\cdot(\Trace_{12}\wedge\Trace_{22})
\end{mathpar}
where $|\Trace_{11}|=|\Trace_{21}|$ and
$|\Trace_{12}|=|\Trace_{22}|$.  A symbolic trace equivalent to the
conjunction of $\Trace_1$ and $\Trace_2$ is returned following the
rules.

To conclude this section, the prefix enumeration algorithm gives us a
sound and relatively complete equivalent for the premises of Rules
\textsc{DAdmit} and \textsc{DAppend}.  By enumerating prefixes in
increasing length, minimal traces of events are produced and appended
along symbolic execution.

\section{Implementation and Evaluation}\label{sec:eval}
We have implemented a symbolic execution engine in \ocamllang based on
a derivative-based semantics, called \HATch that targets the
falsification of \ocamllang-like ADT implementations that interact
with their underlying representation types via API calls.  \HATch
takes as input the implementation of an ADT's method, its behavioral
specification, and the behavioral specifications of the underlying
representation types, and performs symbolic execution against an
execution harness as described in Example~\ref{ex:harness}.  Symbolic
execution is performed in increasing depth of explored execution
traces.  \HATch performs two additional optimizations that are not
discussed in \cref{sec:algo}.  First, it not only tracks the atomic
symbolic events that are included in a symbolic event \lit but also
tracks those that are excluded.  This frees us from enumerating all
other available APIs when computing the negation.  Second, since the
prefixes to be enumerated are combined with a given symbolic trace,
the enumeration of prefixes is interleaved with a compatibility check
against the trace. This interleaving helps avoid enumerating prefixes
that are known to be incompatible.

\newcommand{\propwidth}{4cm}

\begin{table}
  \centering\tiny
  \caption{Falsification of a variety of safety property violations in ADT implementations.}\label{tab:eval}
  \setlength{\tabcolsep}{2pt}.
  \begin{tabular}{m{.9cm}cp{\propwidth}|p{4.5cm}|c|rr}\toprule
    \multirow{2}{*}{\textbf{ADT}}                 & \multirow{2}{*}{\textbf{Repr. Type}} & \multirow{2}{*}{\textbf{Safety Property}} & \multirow{2}{*}{\textbf{Violation to the Safety Property}}                & \textbf{Time (s)}   & \multicolumn{2}{c}{\textbf{Speedup over}}   \\
                                                  &                                      &  &                                                                      & \textbf{To Falsify} & \textbf{Na\"{\i}ve} & \textbf{Verifier} \\\midrule\midrule
    \multirow{4}{*}{\textbf{Stack}}               & \multirow{2}{*}{LinkedList}          & \multirow{2}{\propwidth}{Elements are stored at unique locations.} & Overwrite an existing node when pushing.                             & 0.51  & \texttimes{}4.9  & \texttimes{}3.2       \\
                                                  &                                      &  & Make the linked list circular during concatenation.                  & 0.25  & O/M              & \texttimes{}13.2      \\\cmidrule(lr){2-3}
                                                  & \multirow{2}{*}{KVStore}             & \multirow{2}{\propwidth}{Elements are linked linearly.} & Push the new element in the middle of the stack.                     & 1.11  & T/O              & \texttimes{}4.6       \\
                                                  &                                      &  & Concatenate elements to the middle of a stack.                       & 0.94  & O/M              & \texttimes{}6.5       \\\midrule
    \multirow{2}{*}{\textbf{Queue}}               & LinkedList                           & Elements are stored at unique locations. & Overwrite an existing node when enqueueing.                          & 0.73  & \texttimes{}2.5  & \texttimes{}2.7       \\\cmidrule(lr){2-3}
                                                  & Graph                                & Degrees of vertices are at most one. & Overwrite an existing vertex when enqueueing.                       & 1.75  & T/O              & \texttimes{}7.4       \\\midrule
    \multirow{2}{*}{\textbf{Set}}                 & KVStore                              & Each key is associated with a distinct value. & Put a duplicated element.                                            & 0.87  & T/O              & \texttimes{}1.4       \\\cmidrule(lr){2-3}
                                                  & Tree                                 & The underlying tree is a binary search tree. & Insert a smaller element to the right subtree.                       & 1.10  & \texttimes{}40.7 & \texttimes{}11.1      \\\midrule
    \multirow{2}{*}{\textbf{Heap}}                & LinkedList                           & Elements are stored at unique locations, sorted. & Insert after a node with a larger value.                             & 0.11  & \texttimes{}12.9 & \texttimes{}13.2      \\\cmidrule(lr){2-3}
                                                  & Tree                                 & Parents are smaller than their children. & Insert a smaller element to the right subtree.                       & 1.00  & \texttimes{}2.4  & \texttimes{}2.5       \\\midrule
    \multirow{4}{*}{\textbf{Min Set}}             & \multirow{2}{*}{Set}                 & \multirow{2}{\propwidth}{The cached element has been inserted and is no larger than other elements.} & Record the minimum without inserting it.                             & 1.14  & \texttimes{}1.3  & \texttimes{}1.3       \\
                                                  &                                      &  & Insert a new minimum without recording it.                           & 1.32  & \texttimes{}9.0  & \texttimes{}9.9       \\\cmidrule(lr){2-3}
                                                  & \multirow{2}{*}{KVStore}             & \multirow{2}{\propwidth}{The cached element has been put and is no larger than other elements.} & Record the minimum without putting it.                               & 0.66  & T/O              & \texttimes{}4.3       \\
                                                  &                                      &  & Overwrite an existing element when putting.                          & 1.95  & \texttimes{}10.7 & \texttimes{}14.9      \\\midrule
    \multirow[b]{3}{*}{\textbf{Lazy Set}}            & Tree                                 & The underlying tree is a binary search tree. & Insert a smaller element to the right subtree.                       & 1.09  & \texttimes{}4.8  & \texttimes{}11.5      \\\cmidrule(lr){2-3}
                                                  & Set                           & The same element is never inserted twice. & Insert a duplicated element.                                         & 0.49  & \texttimes{}1.2  & \texttimes{}1.3       \\\cmidrule(lr){2-3}
                                                  & KVStore                              & Each key is associated with a distinct value. & Put a duplicated element.                                            & 0.88  & \texttimes{}49.8 & \texttimes{}1.5       \\\midrule
    \multirow{4}{*}{\textbf{DFA}}                 & \multirow{2}{*}{KVStore}             & \multirow{2}{\propwidth}{Each state is associated with a non-empty list of next states via unique labels.} & Put an overlapping transtion with the same label.                    & 0.66  & \texttimes{}29.9 & \texttimes{}29.9      \\
                                                  &                                      &  & A transition is reversed instead of deleted.        & 1.04  & \texttimes{}15.0 & \texttimes{}15.2      \\\cmidrule(lr){2-3}
                                                  & \multirow{2}{*}{Graph}               & \multirow{2}{\propwidth}{The outgoing edges of each state are labeled by different characters.} & Connect two connected nodes with the same label.             & 0.98  & \texttimes{}12.9 & \texttimes{}12.8      \\
                                                  &                                      &  & Connect two nodes instead of disconnecting them.                     & 0.97  & \texttimes{}16.5 & \texttimes{}16.5      \\\midrule
    \multirow[c]{6}{1.2cm}{\textbf{Connected Graph}} & \multirow{3}{*}{LinkedList}          & \multirow{3}{\propwidth}{Edges (pairs of vertices) are uniquely stored with connected vertices being valid.} & Insert a vertex pair twice during initialization.                   & 0.27  & T/O              & \texttimes{}16.4      \\
                                                  &                                      &  & Insert a vertex without ensuring its connectivity.       & 1.31  & O/M              & \texttimes{}9.9       \\
                                                  &                                      &  & Insert a duplicated vertex pair.                                    & 1.44  & O/M              & \texttimes{}11.2      \\\cmidrule(lr){2-3}
                                                  & \multirow{3}{*}{Graph}               & \multirow{3}{\propwidth}{All vertices are connected in the graph.} & Create a duplicated edge during initialization.                      & 1.13  & \texttimes{}1.7  & \texttimes{}1.8       \\
                                                  &                                      &  & Create a vertex without ensuring its connectivity.       & 2.05  & \texttimes{}6.0  & \texttimes{}6.1       \\
                                                  &                                      &  & Disconnect a vertex from the rest of the graph.                     & 2.38  & \texttimes{}20.0 & \texttimes{}16.2      \\\midrule
    \multirow[b]{2}{1.2cm}{\textbf{Colored Graph}}   & Graph                                & Vertices are colored before being connected to vertices with different colors. & Create an edge between two vertices colored the same.             & 3.68  & T/O              & T/O                   \\\cmidrule(lr){2-3}
                                                  & KVStore                              & Each vertex is associated with a list of vertices with different colors. & Put an edge between two vertices with the same color.                & 7.82  & T/O              & T/O                   \\\midrule
    \textbf{Linked List}                          & KVStore                              & Each node has at most one predecessor. & Put a new predecessor to a node before deleting its old predecessor. & 7.03  & O/M              & T/O                   \\\bottomrule
  \end{tabular}                                                                                                                                                                                           
\end{table}                                                                                                                                                                                               

In our evaluation, we consider the following research questions:
\begin{enumerate*}[label=\textbf{Q\arabic*.}, leftmargin=*]
\item Can \HATch's behavioral specifications effectively capture interesting
  safety properties?
\item Can \HATch's use of symbolic derivatives improve trace exploration for
  falsification? %
\item Can \HATch enhance assurance through falsification when
  verification is challenging?
\end{enumerate*}

We evaluate \HATch on stateful variants of functional ADTs (see
\cref{tab:eval}) drawn from different
sources~\cite{okasakiPurelyFunctionalData1999,
  miltnerDatadrivenInferenceRepresentation2020a,zhouHATTrickAutomatically2024}.
The ADTs we consider are implemented using different effectful representation types
(i.e., \textbf{Repr. Type} column), including key-value stores, linked lists,
sets, trees, and graphs.  We introduce artificial bugs in their
methods as summarized in the Violation column, and evaluate \HATch's
capability to falsify these buggy implementations.  The next column
reports the time \HATch takes to falsify the violation.

To demonstrate the effectiveness of symbolic derivatives, we implement
a variant of \HATch following the description given in
\cref{sec:naive}, and report \HATch's speedup over this variant.
The satisfiability of a path condition $\Phi$ and a SRE $\FA$
(\cref{defn:naive-falsification}) is checked by first replacing
logical formulae with the elements from a finite equivalence class.
An $\FA$ then becomes an ordinary regular expression amenable to SMT
solving, whose non-emptiness, along with the satisfiability of
the logical formulae, witness its satisfiability.  Our results
show that without using derivatives, symbolic execution is unable to falsify
\begin{enumerate*}
\item 7 violations (out of a possible 20) within 60 seconds, resulting in timeouts (T/O) due to
  excessive calls to the SMT solver, and
\item 5 violations under an 8 GB memory limit, leading to
  out-of-memory errors (O/M) due to the complexity involved in
  constructing equivalence classes.
\end{enumerate*}

To demonstrate the effectiveness of \HATch against a verification
procedure, we compare its performance with recent work on
representation invariant verification
\cite{zhouHATTrickAutomatically2024}, and report its speedup over that
verifier in terms of the time taken to identify a violation.  Overall,
\HATch demonstrates significant improvement in performance, measured
in orders of magnitude, compared to both the non-derivative aware
engine and the verifier.  It is noteworthy that it is able to
efficiently handle two challenging ADTs, colored graphs and linked
lists, falsifying their buggy implementation in a small (< 8) number of
seconds, whereas the other approaches are unable to provide any result
within the given resource bound (60 seconds, 8 GB).

\section{Related Work}\label{sec:related}

\paragraph{Symbolic Execution for Functional Languages}
While symbolic execution has been typically used in the context of imperative
languages for bug finding~\cite{baldoniSurveySymbolicExecution2018},
there have been recent efforts that
apply SE in a functional programming
setting.
\cite{xuStaticContractChecking2009a} and \cite{nguyenSoftContractVerification2014} use SE to
verify contracts in Haskell and pure Racket, respectively,
with \cite{nguyenSoftContractVerification2017} extending contract verification to handle Racket programs with mutable state.
SE has also been used for underapproximate reasoning to identify weak library specifications that lead to
type-checking failures of client programs in LiquidHaskell
\cite{hallahanLazyCounterfactualSymbolic2019}.
Our goals in this paper are substantially different, focused on falsifying safety properties of
functional ADTs that interact with opaque and effectful libraries.

\paragraph{Temporal Verification}
Model checking has been applied for software verification against
temporal specifications, e.g., LTL and CTL.  Early work shows how
transition systems can be extracted from programs to abstract their
behavior in a form amenable for automata-based inclusion checking to
validate temporal
specifications~\cite{clarkeModelCheckingAbstraction1994}.  More
recently, type and effect systems have been proposed to infer a
conservative overapproximation of effects produced during execution of
higher-order functional
programs~\cite{skalkaHistoryEffectsVerification2004}.  The granularity
of effects inferred has been improved by regarding past effects as a
handle for reasoning about hidden states
~\cite{nanjoFixpointLogicDependent2018,songAutomatedTemporalVerification2022,sekiyamaTemporalVerificationAnswerEffect2023}.
The use of SFAs as a basis for specification and verification has also
been explored in~\cite{zhouHATTrickAutomatically2024} that introduces
Hoare Automata Types (HATs) as a new refinement type abstraction for
verifying programs against effectful trace-based temporal
specifications.  In contrast to these efforts, \HATch considers this
style of specification in the context of underapproximate reasoning,
exploiting the structure of SFAs to enable efficient falsification.

\paragraph{Derivatives of Regular Expressions}
The classic notion of derivatives of regular expressions provides a
lazy and algebraic approach for constructing automaton-based
recognizers from given regular expressions, effectively relating
automaton states to their regular expression counterparts.
Brzozowski's derivative
\cite{brzozowskiDerivativesRegularExpressions1964} initially
introduced this concept for constructing deterministic finite
automata, followed by Antimirov's partial derivative
\cite{antimirovPartialDerivativesRegular1995} for nondeterministic
finite automata, later extended to handle complement and intersection
operations \cite{caronPartialDerivativesExtended2011}.  While it is
known that the classic derivative approach either overapproximates or
underapproximates with predicates in regular expressions, computing
``next literals'' has been proposed as a remedy
\cite{keilSymbolicSolvingExtended2014}.  Our formulation of symbolic
derivatives, while largely inspired by this work, accounts for
universally quantified variables in regular expressions, which are
ubiquitous in program analysis tasks.  However, the ``next literal''
approach can generate an exponential number of transitions in worst
cases.  Recent work on \emph{transition regexes}
\cite{stanfordSymbolicBooleanDerivatives2021} introduces a novel form
of symbolic derivatives that embeds potentially exponential choices
within nested conditionals, enabling lazy exploration of transitions
and algebraic simplification.  Incorporating these symbolic
derivatives into our symbolic execution engine thus may benefit the
reasoning of specifications with richer control structures, presenting
a promising avenue for future research.

\paragraph{Dynamic Trace-Based Reasoning}
Traces as a form of (in)correctness specification have been widely
adopted by dynamic analysis techniques.  Various runtime monitoring
systems rely on a language of traces
\cite{avgustinovMakingTraceMonitors2007, chenMopEfficientGeneric2007,
  goldsmithRelationalQueriesProgram2005,
  havelundMonitoringJavaPrograms2001,
  meredithEfficientMonitoringParametric2008}.  Regular properties over
traces are also used to guide path exploration in dynamic symbolic
execution~\cite{zhangRegularPropertyGuided2015}.  Arbitrary trace predicates
are now supported in Racket contracts~\cite{moyTraceContracts2023}.
We leave for future work the exploration of non-regular trace languages amenable for derivative
computation to enable the falsification of even richer safety
properties.

\section{Conclusions}
\label{sec:conc}

This paper presents a new symbolic execution procedure that integrates
trace-based temporal specifications to reason about ADTs that interact
with effectful libraries.  We demonstrate how to leverage these
specifications, specifically their latent SFA representations, to
manifest the hidden state maintained by an ADT's underlying
representation.  More significantly, we introduce the concept of a
symbolic derivative, a new encoding of symbolic states that relate
admissible specification traces with path exploration decisions, and
show how they enable significant efficiency gains by allowing paths
that are irrelevant to the falsification of a given safety property to
be quickly pruned by a symbolic execution engine.  Our ideas provide
new insight into how trace-guided specifications can enable effective
reachability-based program analyses.

\begin{acks}                            %
  We thank the anonymous POPL reviewers for their detailed comments
  and constructive feedback.  We also thank Guannan Wei for
  stimulating discussions and suggestions on the draft of the paper.
  This material is based upon work supported by the Defense Advanced Research Projects Agency (DARPA) and the Naval Information Warfare Center Pacific (NIWC Pacific) under N6600 1-22-C-4027, STR Research with funding from the US government,
  and the National Science Foundation under CCF-SHF 2321680.
\end{acks}

\section*{Data-Availability Statement}
Our implementation of \HATch, the benchmark suite used, and the
instructions for reproducing results are available at
\citet{yuanArtifactDerivativeGuidedSymbolic2024}.
\bibliographystyle{ACM-Reference-Format}
\bibliography{references}

\end{document}